\title{Local Decode and Update \\ for Big Data Compression}
\author{Shashank Vatedka,~\IEEEmembership{Member,~IEEE,}
	Aslan Tchamkerten,~\IEEEmembership{Senior~Member,~IEEE,}
	\thanks{This work was supported by Nokia Bell Labs France within the framework ``Computation over Encoded Data with Applications to Large Scale Storage.'' This work was presented in part at the 2019 IEEE International Symposium on Information Theory, Paris, France~\cite{vatedka2019local}.}
	\thanks{S.~Vatedka and A.~Tchamkerten are with the Department of Communications and Electronics, Telecom Paris, Paris, France. Email: \texttt{\{shashank.vatedka, aslan.tchamkerten\}@telecom-paristech.fr} }%
}
\begin{document}
	\maketitle

	\begin{abstract}
		This paper investigates data compression that simultaneously allows local decoding and local update. The main result is a universal compression scheme for memoryless sources with the following features. The rate can be made arbitrarily close to the entropy of the underlying source, contiguous fragments of the source can be recovered or updated by probing or modifying a number of codeword bits that is on average linear in the size of the fragment, and the overall encoding and decoding complexity is quasilinear in the blocklength of the source. In particular, the local decoding or update of a single message symbol can be performed by probing or modifying a constant number of codeword bits. This latter part improves over previous best known results for which local decodability or update efficiency grows logarithmically with blocklength. 
	\end{abstract}

	\section{Introduction}
	Recent articles~\cite{pavlichin2018quest,chen2014data,hashem2015rise}  point to the mismatch between the amount of generated data, notably genomic data~\cite{ball2012public,uk10k2015uk10k,gaziano2016million}, and hardware and software solutions for cloud storage. 
    There is a growing need for space-optimal cloud storage solutions that allow efficient remote interaction, as frequent remote access and manipulation of a large dataset can generate a large volume of internet traffic~\cite{schadt2010computational,marx2013biobigdata,stephens2015big}.
	
	Consider for instance compressing and storing DNA sequences in the cloud. If compression is handled via traditional methods, such as Lempel-Ziv~\cite{ziv1977universal,ziv1978compression}, then to retrieve say a particular gene, typically a few tens of thousands of base pairs, we need to decompress the entire DNA sequence, about three billion base pairs. Similarly, the update of a small fraction of the DNA sequence requires to update the compressed data entirely.
	Solutions have been proposed, typically using modifications of Lempel-Ziv and variants, to address some of these issues (see \emph{e.g.}, \cite{brandon2009data,deorowicz2011robust,cox2012large,deorowicz2013genome,tatwawadi2016gtrac} and the references therein).
	
	In this paper we investigate lossless data compression with both local decoding and local update properties. Accordingly, consider a rate $R$ compression of an i.i.d.$\sim p_X$ sequence $ X^n $. Let $ \ravg(s) $ denote the {\emph{average}} (over the randomness in the source $ X^n $) number of bits of the codeword sequence that need to be probed, possibly adaptively, to decode an arbitrary length $ s $ contiguous substring of $X^n$. Similarly, let $ \tavg(s) $ denote the {\emph{average}} number of codeword bits that need to be read and written, possibly adaptively, in order to update an arbitrary length $s$ contiguous substring of $X^n$. The basic question addressed here is whether it is possible to design a compression scheme such that the operations of local decoding and local update involve a number of bits that is proportional to the number of bits to be retrieved or updated. Specifically, is it possible to design a coding scheme such that, for any $R$ larger than the entropy $H(p_X)$,
		$$(\ravg(s),\tavg(s))=(O(s),O(s))\quad \text{for any $1\leq s\leq n$ ?}$$
	As we show in this paper, the answer is positive. Given $\varepsilon>0$, we exhibit a compressor, a local decoder and a local updater with the following properties: \begin{itemize}
		\item The compressor achieves rate $R=H(p_X)+\varepsilon$ universally.
		\item The local decoder achieves constant decodability $$\ravg(1)= \alpha_1\left(\frac{1}{\varepsilon^2}\log\frac{1}{\varepsilon}\right) $$ for some constant $ \alpha_1<\infty $ that only depends on $ p_X $.
		\item  the local updater achieves constant update $$ \tavg(1) = \alpha_2\left(\frac{1}{\varepsilon^2}\log\frac{1}{\varepsilon}\right) $$ for some constant $ \alpha_2 $ that only depends on $ p_X $.
		\item  For all $ s\geq 3 $ $$ \ravg(s)<s \cdot \ravg(1) $$ and $$ \tavg(s)<s \cdot \tavg(1). $$
		Moreover, if the source is non-dyadic then there exists $ \alpha_3>0 $ independent of $ n,\varepsilon $ such that for all $ s>\alpha_3/\varepsilon^2 $,  we have 
		$$ \ravg(s)< s\cdot\ravg^*(1)$$ 
		where $ \ravg^*(1) $ denotes the minimum average local decodability that can possibly be achieved by any compression scheme having rate $ R \leq H(p_X)+\varepsilon $.\footnote{We guarantee that local decompression of \emph{contiguous} substrings of the message can be performed more efficiently than repeated local decompression of the individual bits. If we want to recover $s$ arbitrary non-contiguous message symbols, it is not clear if we can simultaneously achieve rate close to entropy and $\ravg(s)<s\ravg^*(1)$.} 
		\item 
		The compression scheme has an overall encoding and decoding computational complexity that is quasilinear in $n$. 
	\end{itemize}
	We also show, through a second scheme, that it is possible to achieve $(O(\log\log n), O(\log\log n)) $ {\emph{worst-case}} local decodability and average update efficiency for any $R$ larger than the entropy $H(p_X)$ of the underlying source.
	
	
	\subsection*{Related works: word-RAM and bitprobe models}
	
	There has been a lot of work related to local decoding of compressed data structures;  see, {\it{e.g.}},~\cite{patrascu2008succincter,dodis2010changing,munro2015compressed,raman2003succinct,chandar2009locally,chandar_thesis} and the references therein. Most of these results hold under the word-RAM model which assumes that operations are on blocks of $ \Theta(\log n) $ bits, where $ n $ denotes the length of the source sequence. It is assumed that operations (memory access, arithmetic operations) on words of $ \Theta(\log n) $ bits take constant time, and the efficiency of a scheme is measured in terms of the time complexity required to perform local decoding. Therefore, algorithms in all these papers must probe $ \Omega(\log n) $ bits of the codeword even if only to recover a single bit of the source sequence.

	In the word-RAM model it is possible to compress any sequence to its empirical entropy and still be able to locally decode any message symbol in constant time~\cite{patrascu2008succincter,dodis2010changing}. In particular,  \cite{patrascu2008succincter} gives a multilevel encoding procedure that is conceptually related to our first scheme---the difference will be discussed later in Section~\ref{sec:patrascu-connections}.
	Another compression scheme is due to
	Dutta \emph{et al.}~\cite{dutta2013simple} which achieves compression lengths within a $ (1+\varepsilon) $ multiplicative factor of that of LZ78 while allowing local decoding of individual symbols in $ O(\log n+1/\varepsilon^2) $ time on average. Bille \emph{et al.}~\cite{bille2011random} gave a scheme that allows one to modify any grammar-based compressor (such as Lempel-Ziv) to provide efficient local decodability under the word-RAM model. Viola \emph{et al.}~\cite{viola2019howtostore} recently gave a scheme that achieves near-optimal compression lengths for storing correlated data while being able to locally decode any data symbol in constant time. There is a long line of work, {\it{e.g.}},~\cite{sadakane2006squeezing,gonzalez2006statistical,ferragina2007simple,kreft2010lz77}, on compression schemes that allow efficient local recovery of length $ m>1 $ substrings of the message.

	Concerning local update, Makinen and Navarro~\cite{makinen2006dynamic} designed an entropy-achieving compression scheme that supports insertion and deletion of symbols in $ O(\log n) $ time. Successive works~\cite{jansson2012cram,grossi2013dynamic,navarro2014optimal} gave improved compressors that support local decoding, updates, insertion and deletion of individual symbols in $ O(\log n/\log\log n) $ time.

	While the word-RAM model is natural for on-chip type of applications where data transfer occurs through a memory bus (generally $ \Omega(\log n) $ bits), it is perhaps less relevant for (off-chip) communication applications such as between a server, hosting the compressed data set, and the client. In this context it is desirable to minimize the number of bits exchanged, and a more relevant model is the so-called bitprobe model~\cite{nicholson2013survey} where the complexity of updating or decoding is measured by the number of bits of the compressed sequence that need to be read or modified to recover or update a single bit of the raw data. 
	
	Under the bitprobe  model, it is known that a single bit of an $ n $-length source sequence can be recovered by accessing a constant (in $ n $) number of bits of the codeword sequence~\cite{buhrman2002bitvectors,lewenstein2014improved,garg2015set,garg2017set_nonadaptive}.
	However, these works typically assume that the source sequence is deterministic and  chosen from a set of allowed sequences, and the complexity of local decoding or update is measured for the worst-case allowed sequence.
	
	The problem of locally decodable source coding of random sequences has received attention very recently. 
	Makhdoumi \emph{et al.}~\cite{makhdoumi2013locally-arxiv,makhdoumi_onlocallydecsource} showed
	that any compressor with $\rwc(1)=2$ cannot achieve a rate below the trivial rate $\log|\cX|$. Moreover, any \emph{linear} source code that achieves $ \ravg(1)=\Theta(1) $ necessarily operates at a trivial compression rate ($R=1$ for binary sources).
	Mazumdar \emph{et al.}~\cite{mazumdar2015local} gave a fixed-blocklength entropy-achieving compression scheme that permits local decoding of a single bit efficiently.  For a target rate of $ H(p_X)+\varepsilon $ the decoding of a single bit requires to probe $\ravg(1)= \Theta(\frac{1}{\varepsilon}\log\frac{1}{\varepsilon}) $ bits on the compressed codeword. They also provided a converse result for non-dyadic sources: $ \ravg(1)=\Omega(\log(1/\varepsilon)) $ for any compression scheme that achieves rate $ H(p_X)+\varepsilon $. Tatwawadi \emph{et al.}~\cite{tatwawadi18isit_universalRA} extended the achievability result to Markov sources and provided a universal scheme that achieves $\ravg(1)= \Theta(\frac{1}{\varepsilon^2}\log\frac{1}{\varepsilon}) $. It should perhaps be stressed that the papers~\cite{mazumdar2015local,tatwawadi18isit_universalRA} only investigate local decoding of a single bit and, in particular, they leave open the question whether  we can achieve  $ \ravg(s)< s\ravg^*(1) $ for $ s>1 $. It should also be noted that the construction in these papers make use of the bitvector compressor of Buhrman \emph{et al.}~\cite{buhrman2002bitvectors} which in turn is a nonexplicit construction based on expander graphs. It is also unclear whether their encoding and decoding procedures  can be peformed with low (polynomial-time) computational complexity.
	
	All the above papers on the bit-probe model consider fixed-length block coding. Variable-length source coding was investigated by Pananjady and Courtade~\cite{pananjady2018effect} who gave upper and lower bounds on the achievable rate for the compression of sparse sequences under local decodability constraints. 

	Update efficiency was studied in~\cite{montanari2008smooth}, which used sparse-graph codes to design an entropy-achieving compression scheme for which an update to any single message bit can be performed by modifying at most $ \tavg(1)=\Theta(1) $ codeword bits. The authors remarked that their scheme cannot simultaneously achieve $ \ravg(1)=\Theta(1) $ and $ \tavg(1) =\Theta(1)$. Related to update efficiency is the notion of malleability~\cite{varshney2016palimpsests,varshney2016malleable}, defined as the average fraction of codeword bits that need to be modified when the message is updated by passing through a discrete memoryless channel. 

    Also related is the notion of local encodability, defined to be the maximum number of message symbols that influence any codeword symbol. Note that this is different from update efficiency, which is the number of codeword symbols that are influenced by any message symbol. Mazumdar and Pal~\cite{mazumdar2017semisupervised} observed the equivalence of locally encodable source coding with a problem of semisupervised clustering, and derived upper and lower bounds on the local encodability.
    Locality has been well studied in the context of channel coding---see, \emph{e.g.},~\cite{yekhanin2012locally,gopalan2012locality,tamo2014family,cadambe2015bounds,mazumdar2014update,tamo2016bounds} and the references therein. 
	
	 An outline of this paper is as follows. In Section~\ref{sec:problem_setup}, we describe the model. In Section~\ref{sec:main_results}, we present our results which are based on two schemes. The first achieves $(\ravg(1),\tavg(1))=(\Theta(1),\Theta(1))$ and the the second scheme achieves $(\rwc(1), \tavg(1))=(O(\log\log n), O(\log\log n)) $. The detailed description of these schemes as well as the proof of the main results appear in Sections~\ref{main} and~\ref{sec:o_loglogn_scheme}. In Section~\ref{conclusion}, we provide a few concluding remarks. 
	We end this section with notational conventions.

	\subsection*{Notation}\label{sec:notation}
	We use standard Bachmann-Landau notation for asymptotics. All logarithms are to the base $ 2 $. 	
	Curly braces denote sets, {\it{e.g.}}, $ \{a,b,c\}$, whereas parentheses are used to denote ordered lists, {\it{e.g.}}, $ (a,b,c) $.
	The set $ \{1,2,\ldots,n\} $ is denoted by $ [n] $, whereas for any positive integers $ i,m $, we define $ i:i+m$ to be $ \{i,i+1,\ldots,i+m  \} $.  The set of all finite-length binary sequences is denoted by $ \{0,1\}^* $.
	
	Random variables are denoted by uppercase letters, {\it{e.g.}}, $ X,Y $. Vectors of length $ n $ are indicated by a superscript $ n $, {\it{e.g.}}, $ x^n,y^n $. The $ i $th element of a vector $ x^n $ is $ x_i $. Uppercase letters with a superscript $ n $ indicate $ n $-length random vectors, {\it{e.g.}}, $ X^n,Y^n $. A substring of a vector $ x^n $ is represented as $ x_i^{i+m}\defeq (x_i,x_{i+1},\ldots,x_{i+m}) $.
	
	Let $ \cX $ be a finite set. For any $ x^n\in\cX^n $, let $ \hat{p}_{x^n} $ be the type/histogram  of $ x^n $, {\it{i.e.}},  $ \hat{p}_{x^n}(a)=\frac{\sum_{i=1}^n1_{\{x_i=a\}}}{n} $. We say that $ x^n $ is $ \varepsilon $-typical with respect to a distribution $ p_X $ if for all $ a\in\cX $, we have $ |\hat{p}_{x^n}(a)-p_X(a)|\leq \varepsilon p_X(a) $. Let $ \cT_{\varepsilon}^n $ denote the set of all $ n $-length sequences that are $ \varepsilon $-typical with respect to $ p_X $. 
	We impose an ordering (which may be arbitrary) on $ \cT_{\varepsilon}^n$.  If $ x^n\in \cT_\varepsilon^n $ is the $ i $th sequence in $ \cT_{\varepsilon}^n  $ according to the order, then we say that the index of $ x^n $ in $ \cT^{n}_\varepsilon $ (denoted by $ \mathtt{index}(x^n; \cT_\varepsilon^n) $) is $ i $.

	\section{Querying and updating compressed data}\label{sec:problem_setup}
	The source is specified by a distribution $p_X$ over a finite alphabet $ \cX $. Unless otherwise mentioned, a source sequence or a message refers to $n$ i.i.d. realizations $X^n$ of the source. 
	
	\begin{definition}[Compression scheme]
	A rate $R$ length $n$ compression scheme, denoted as $(n,R)$ compression scheme or $(n,R)$ fixed-length compression scheme, is a pair of maps $ (\enc,\dec) $ consisting of 
	\begin{itemize}
		\item An encoder $ \enc: \cX^n\to \{0,1\}^{nR} $, and
		\item A decoder $ \dec: \{0,1\}^{nR}\to \cX^n $.
	\end{itemize}
	 The probability of error is the probability of the event that codeword $\enc(X^n)$ is wrongly decoded, that is
	\[
	P_e \defeq \Pr_{X^n}[\dec(\enc(X^n))\neq X^n].
	\]
    \end{definition}
    

	
	\subsection{Queries}
	Given a compression scheme, a local decoder is an algorithm which takes $ (i,s)\in[n]^2 $ as input, adaptively queries (a small number of) bits of the compressed sequence $ C^{nR} $ and outputs $ X_{i}^{i+s-1} $.
	
	Given $ s\in [n] $ and codeword $ c^{nR} $ corresponding to source sequence $ x^n $, let $ d^{(s)}(i,x^n) $ denote the number of symbols of $ c^{nR} $ that need to be queried by the local decoder in order to decode $ x_i^{i+s-1} $ without error. The average local decodability of the code is defined as 
	\[
	\ravg(s)\defeq \max_{i\in [n-s+1]}\bE[d^{(s)}(i,X^n)],
	\]
	where the average is taken over $ X^n $ and possibly any randomness in the query algorithm. Hence, if say $\ravg(3)=20$ then the local decoder that can recover any length $3$ contiguous substring of the source by probing on average $20$ symbols from the codeword sequence. 
	
	The worst-case local decodability is defined as
	\[
	\rwc(s)\defeq \max_{i,x^n}d^{(s)}(i,x^n).
	\]

	\subsection{Updates}
	Given $ s\in [n] $, suppose a subsequence $x_i^{i+s-1}$ of the original sequence $ x^n $ is updated to $ \tilde{x}_i^{i+s-1} $ so that $x^n$ becomes $x^{i-1}\tilde{x}_i^{i+s-1}x_{i+s}^n$.
	A local updater is an algorithm which takes $ (i,\tilde{x}_i^{i+s-1}) $ as input, probes (a small number of) bits of the compressed sequence $ c^{nR} $, and modifies a small number of bits of $ c^{nR} $ such that the new codeword $ \tilde{c}^{nR} $ corresponds to the message $x^{i-1}\tilde{x}_i^{i+s-1}x_{i+s}^n$.  We assume here that the update algorithm probes and modifies  $ c^{nR} $ given $ (i,\tilde{x}_i^{i+s-1}) $ only, without prior knowledge of $ (x^n,c^{nR}) $. 
	
Accordingly,
	let $ \tread^{(s)}(i,x^n,\tilde{x}_i^{i+s-1}) $ and $\twrite^{(s)}(i,x^n,\tilde{x}_i^{i+s-1})$ denote the number of symbols of $ c^{nR} $ that need to be read and modified, respectively, and let $$ u_{\mathrm{tot}}^{(s)}(i,x^n,\tilde{x}_i^{i+s-1})\defeq \tread^{(s)}(i,x^n,\tilde{x}_i^{i+s-1})+\twrite^{(s)}(i,x^n,\tilde{x}_i^{i+s-1}).$$
	
	The average update efficiency of the code is defined as
	\[
	\tavg(s)\defeq \max_{i\in [n-s+1]}\bE\left[u_{\mathrm{tot}}^{(s)}(i,X^n,\tilde{X}_i^{i+s-1})\right]
	\]
	where the update $\tilde{X}_i^{i+s-1}$ is supposed to be independent of the original sequence $X^n$ but is drawn from the same i.i.d.$\sim p_X$ distribution. Hence, updates do not modify the distribution of the original message. The worst-case update efficiency is defined as
	\[
	\twc(s) \defeq \max_{i,x^n,\tilde{x}_{i}^{i+s-1}}u_{\mathrm{tot}}^{(s)}(i,x^n,\tilde{x}_i^{i+s-1}).
	\]
	
This paper is concerned about the design of $(n,H(p_X)+\varepsilon)$ compression schemes with vanishingly small probability of error that allows the recovery and update of short fragments (contiguous symbols) of the message efficiently. 	

	\section{Main results}\label{sec:main_results}
	A naive approach to achieve compression with locality is to partition the message symbols into nonoverlapping blocks of equal size $b$ and compress each block separately with a 
 $ (b,H(p_X)+\varepsilon) $ fixed-length compression scheme. The probability of error for each block can be made to go to zero as $ 2^{-\Theta(b)} $ (see, {\it{e.g.}},~\cite{cover2012elements}). From the union bound, the overall probability of error is at most $ (n/b)2^{-\Theta(b)} $. Hence, as long as $ b=\Omega(\log n) $ we have $ P_e=o(1) $.  
Since the blocks are encoded and decoded independently, $$ \rwc(1)=\twc(1)=O(b)= O(\log n) $$
where the constant in the order term does not depend on $\varepsilon$.
 The overall computational complexity is at most $(n/b)2^{\Theta(b)}$, which is polynomial in $n$.
Noticing that every subsequence of length $s> 1$ is contained in at most $\lceil s/b \rceil+1$ blocks, we have:\footnote{In case $b$ does not divide $n$, we can compress the last block of size $b+n-\lfloor n/b\rfloor b$ separately using a ($b+n-\lfloor n/b\rfloor b, H(p_X)+\varepsilon$)-fixed length compression scheme. The local decodability and update efficiency would increase by a factor of less than $2$, and therefore remain $O(\log n)$. A similar argument can be made for all the multilevel schemes in the rest of this paper and overall will only introduce an additional constant multiplicative factor. For ease of exposition, we will conveniently assume in all our proofs that the size of each block divides $n$.}
	\begin{theorem}[Fixed-length neighborhood and compression]\label{naive}
		For every $ \varepsilon>0 $, the naive scheme achieves a rate-locality triple of
		\[
		(R,\rwc(1),\twc(1)) = \left( H(p_X)+\varepsilon, O(\log n),O(\log n) \right) .
		\]
		\label{lemma:Ologn}
		Moreover,
		\[
		\rwc(s) = \begin{cases}
		\Theta(\log n), & \text{if }s\leq b\\
		\Theta(s), &\text{if } s>b
		\end{cases}
		\]
		\[
		\twc(s) = \begin{cases}
		\Theta(\log n), & \text{if }s\leq b\\
		\Theta(s), &\text{if } s>b
		\end{cases}
		\]
		where all the order terms are independent of $\varepsilon$. The overall computational complexity required for compression/decompression is polynomial in $n$.
	\end{theorem}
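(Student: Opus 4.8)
The plan is to instantiate the block size and then verify every claim by elementary counting, following the outline already sketched before the statement. First I would set $b \defeq \Theta(\log n)$, chosen as a large enough multiple of $\log n$ that the per-block error probability $2^{-\Theta(b)}$ is $o(1/n)$; here I invoke the standard fact (for instance, indexing the $\varepsilon$-typical set $\cT_\varepsilon^b$ and padding the index to length $b(H(p_X)+\varepsilon)$, cf.\ \cite{cover2012elements}) that a $(b,H(p_X)+\varepsilon)$ fixed-length code whose error exponent is bounded away from $0$ exists. The encoder $\enc$ cuts $x^n$ into the $n/b$ consecutive length-$b$ blocks, applies this block code to each, and concatenates the results; $\dec$ decodes each length-$b(H(p_X)+\varepsilon)$ sub-codeword independently. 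Then the rate is exactly $\tfrac{(n/b)\cdot b(H(p_X)+\varepsilon)}{n}=H(p_X)+\varepsilon$, and a union bound over the $n/b$ blocks gives $P_e\le (n/b)2^{-\Theta(b)}=o(1)$.

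Next I would bound the locality. Write $\ell\defeq b(H(p_X)+\varepsilon)$ for the length of each sub-codeword; since one may assume $H(p_X)+\varepsilon\le\log|\cX|$ (otherwise compression is vacuous), $\ell=O(\log n)$ with implied constant free of $\varepsilon$. The local decoder for $x_i^{i+s-1}$ reads in full the sub-codewords of all blocks that the window $[i,i+s-1]$ meets, decodes them, and returns the requested symbols; the local updater does the same and then, for each touched block, substitutes the new symbols, re-encodes, and overwrites that sub-codeword — it is oblivious to $x^n$ precisely because decoding the touched blocks recovers the source bits it needs. As noted before the statement, $[i,i+s-1]$ meets at most $\lceil s/b\rceil+1$ blocks, so $d^{(s)}(i,x^n)\le(\lceil s/b\rceil+1)\ell$ and $u_{\mathrm{tot}}^{(s)}(i,x^n,\tilde{x}_i^{i+s-1})\le 2(\lceil s/b\rceil+1)\ell$ for every $i$ and $x^n$; taking $s=1$ gives $\rwc(1),\twc(1)=O(\log n)$.

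For the matching $\Theta$'s I would supply the lower bounds. Since the naive decoder reads whole sub-codewords, $d^{(s)}$ equals $\ell$ times the number of touched blocks; for $1\le s\le b$ the window meets at least one and, for a window straddling a block boundary, exactly two blocks, so $\ell\le\rwc(s)\le 2\ell$, i.e.\ $\rwc(s)=\Theta(\log n)$, and likewise $\twc(s)=\Theta(\log n)$. For $s>b$ the window meets at least $\lceil s/b\rceil$ blocks, whence $\rwc(s),\twc(s)\ge\lceil s/b\rceil\ell\ge s(H(p_X)+\varepsilon)=\Omega(s)$; together with the upper bound $(\lceil s/b\rceil+1)\ell\le s(H(p_X)+\varepsilon)+2\ell=O(s)$ (using $b=\Theta(\log n)=O(s)$) this yields $\Theta(s)$. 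Finally, each block is encoded and decoded in $2^{O(b)}=\mathrm{poly}(n)$ operations (e.g.\ by tabulating the bijection between $\cT_\varepsilon^b$ and its index set), and there are $n/b$ blocks, so compression and decompression run in $\mathrm{poly}(n)$ time.

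I do not foresee a genuine obstacle; the only point needing a little care is the lower bound $\rwc(s),\twc(s)\ge\ell$. One either commits to the typical-set block compressor and observes that no source symbol of a block is a function of fewer than $\ell$ of that block's codeword bits, or — more simply — one uses that $d^{(s)}$ is measured for the naive local decoder, which by construction always probes whole sub-codewords. Everything else is bookkeeping with the window-to-blocks count $\lceil s/b\rceil+1$.
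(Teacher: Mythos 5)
Your proposal is correct and follows essentially the same approach as the paper: partition into $\Theta(\log n)$-size blocks, compress each with a typical-set fixed-length code, union-bound the error, and count the number of blocks a length-$s$ window touches. You simply spell out the matching lower bounds for the $\Theta$ claims, which the paper leaves implicit in its brief discussion preceding the theorem.
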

	It is easy to see that the above analysis is essentially tight as the naive scheme achieves vanishingly small error probabilities for overall compression and decompression only if $ b=\Omega(\log n) $.
	
	In the naive scheme, the recovery or update of a particular symbol $X_i$ involves an $O(\log n)$-size neighborhood of that symbol which is compressed by means of a fixed-length compression scheme. To improve upon the $O(\log n)$ locality, we consider two other schemes. In the first, neighborhoods are of variable lengths and are compressed using a fixed length block code. 
	The length of the neighborhood of a particular symbol $X_i$ is defined as the length of the smallest typical set that contains $X_i$. To find this smallest neighborhood, the algorithm proceeds iteratively by considering larger and larger neighborhoods of $X_i$ until it finds a neighborhood that is typical. Local decoding and local recovery of $X_i$ are performed by decompressing and recompressing this neighborhood. 
This scheme is formally described in Section~\ref{main} where we prove the following result:
	\begin{theorem}[Variable-length neighborhood and fixed length compression]\label{varlen}
		Fix $ \varepsilon>0 $.
		There exists a scheme which universally over i.i.d. sources with common known finite alphabet achieves rate
		$R = H(p_X)+\varepsilon,$
		and  probability of error 
		$
		\Pr[\dec(\enc(X^n))\neq X^n] = 2^{-2^{\Omega(\sqrt{\log n})}}.
		$
		The average local decodability and update efficiency is 
		\[
		\ravg(s)\leq\begin{cases}
		\alpha_1 \frac{1}{\varepsilon^2}\log \frac{1}{\varepsilon} &\text{if } s \leq \alpha_1''\left(\frac{1}{\varepsilon^2}\log \frac{1}{\varepsilon}\right)\\
		\alpha_1's &\text{if } s >\alpha_1''\left(\frac{1}{\varepsilon^2}\log \frac{1}{\varepsilon}\right)
		\end{cases},
		\]
		\[
		\tavg(s)\leq\begin{cases}
		\alpha_2 \frac{1}{\varepsilon^2}\log \frac{1}{\varepsilon} &\text{if } s \leq \alpha_2''\left(\frac{1}{\varepsilon^2}\log \frac{1}{\varepsilon}\right)\\
		\alpha_2's &\text{if } s > \alpha_2''\left(\frac{1}{\varepsilon^2}\log \frac{1}{\varepsilon}\right)
		\end{cases},
		\]
		where the constants $ \alpha_i,\alpha_i',\alpha_i'' $, $i=1,2$, are  independent of $ n,\varepsilon $ but dependent on $ p_X $.
		Moreover, the overall computational complexity of encoding and decoding $ X^n $ is $ O(n\log n) $. For $1\leq s\leq n$, the expected computational complexity for local decoding or updating a fragment of size $s$ is $\Theta(s)$, where the proportionality constant depends only on $\varepsilon$ and $p_X$.\footnote{In comparison, the naive scheme requires computational complexity $\Omega(\log n)$ to locally decode or update even a single symbol.}
		\label{theorem:main_iid}
	\end{theorem}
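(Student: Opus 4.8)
The plan is to construct and analyze the multilevel ``variable-length-neighborhood'' scheme sketched above. I would fix a base block length $n_0=\Theta\!\left(\frac{1}{\varepsilon^2}\log\frac1\varepsilon\right)$ (with a large enough constant), a geometrically decaying sequence of per-level redundancies $\varepsilon_i=\Theta(\varepsilon/2^i)$ so that $\sum_{i\ge1}\varepsilon_i=\Theta(\varepsilon)$, a fast-growing sequence of block lengths $n_0<n_1<\dots<n_{d_{\max}}$ with $n_{i-1}\mid n_i$ and $n_i=n_0\,2^{\Theta(i^2)}$ (so that lying in the level-$i$ typical set is exponentially more likely than at level $i-1$), and a slowly growing depth $d_{\max}=\Theta((\log n)^{1/4})$, so that $n_{d_{\max}}=2^{\Theta(\sqrt{\log n})}$ remains sub-polynomial in $n$. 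I would then define, by induction on $i$, the ``residual typical sets'' $\cT_i^{(n_i)}\subseteq\cX^{n_i}$: $\cT_0^{(n_0)}$ is a universal, type-based surrogate for the $\varepsilon_0$-typical set, and for $i\ge1$, $\cT_i^{(n_i)}$ is the set of length-$n_i$ strings for which the information still missing after levels $0,\dots,i-1$ (the positions and descriptions of the sub-blocks not yet resolved) fits in $n_i\varepsilon_i$ bits. The level-$i$ codeword of a level-$i$ block is a fixed-length word---of $n_i\varepsilon_i$ bits for $i\ge1$ and $\approx n_0(H(p_X)+\varepsilon_0)$ bits for $i=0$---holding either that residual, padded, or a ``not resolved here'' flag; a level-$d_{\max}$ block outside $\cT_{d_{\max}}^{(n_{d_{\max}})}$ is the only event producing a decoding error, and the decoder and updater climb levels exactly as in the sketch.

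Granting the construction, the rate is a telescoping sum: level $0$ costs $\frac1{n_0}\log|\cT_0^{(n_0)}|+O\!\left(\frac{|\cX|\log n_0}{n_0}\right)\le H(p_X)+\varepsilon_0+o(\varepsilon)$ per symbol, level $i\ge1$ costs $\varepsilon_i$ per symbol, so the total is $H(p_X)+\Theta(\varepsilon)$, equal to $H(p_X)+\varepsilon$ after rescaling the constant in $\varepsilon_0$. The probability of error is, by a union bound over the $\Theta(n/n_{d_{\max}})$ top-level blocks, at most $\frac{n}{n_{d_{\max}}}\Pr[X^{n_{d_{\max}}}\notin\cT_{d_{\max}}^{(n_{d_{\max}})}]$; the pivotal inductive estimate $\Pr[X^{n_i}\notin\cT_i^{(n_i)}]\le 2^{-\Theta(n_{i-1})}$---obtained by a Chernoff bound once one shows that the missing-information length has mean well below the $n_i\varepsilon_i$ budget---then yields error $2^{-\Theta(n_{d_{\max}})}=2^{-2^{\Omega(\sqrt{\log n})}}$.

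For locality, the local decoder of $X_i^{i+s-1}$ reads the level-$0$ codewords of the $\Theta(1+s/n_0)$ overlapping level-$0$ blocks---cost $\Theta(n_0+s)$ bits and, via enumerative/arithmetic decoding, $\Theta(n_0+s)$ time---and, for each block flagged unresolved, climbs to levels $1,2,\dots$ reading the relevant sub-codewords until recovery. Since a fixed block reaches level $j$ only with probability $\le 2^{-\Theta(n_{j-1})}$ while a level-$j$ codeword is only $n_j\varepsilon_j$ bits, the expected total climbing cost is a rapidly convergent series dominated by its first term and is $o(n_0)+o(s)$; hence $\ravg(1)=\Theta(n_0)=\Theta\!\left(\frac1{\varepsilon^2}\log\frac1\varepsilon\right)$, and more generally $\ravg(s)=O(n_0)$ for $s=O(n_0)$ and $\ravg(s)=O(s)$ otherwise---the claimed two-regime bound---with the same statements, up to constants, for the running time. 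The local updater is oblivious to $X^n$: for each overlapping level-$0$ block it reads and decodes the level-$0$ codeword, substitutes the new symbols, re-encodes, and writes it back; if a block's resolved/unresolved status flips (probability $2^{-\Theta(n_0)}$) it pushes the change one level up, the cascade continuing with geometrically vanishing probability. As the new symbols are i.i.d.\ $\sim p_X$, the post-update blocks obey the same membership estimates, so the expected read-plus-write cost (and time) obey the same two-regime bound, giving $\tavg(s)$. Finally, summing the near-linear per-block encoding cost over the $n/n_i$ blocks at each of the $d_{\max}$ levels gives $O(n)\sum_i\log n_i=O(n)\cdot\Theta(d_{\max}^3)=O(n\log n)$ for encoding, and likewise for decoding.

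The main obstacle I anticipate is pinning down the recursive family $\{\cT_i^{(n_i)}\}$ so that simultaneously: (i) membership in $\cT_i^{(n_i)}$ is a \emph{hard} certificate that every symbol of the block is reconstructible from levels $0,\dots,i$; (ii) the residual of such a block provably fits the \emph{hard} budget $n_i\varepsilon_i$; (iii) $\Pr[X^{n_i}\notin\cT_i^{(n_i)}]$ is small enough in $n_{i-1}$ to make both the top-level error and the higher-level probing negligible; and (iv) all of this holds uniformly over the unknown $p_X$, which forces the level-$0$ typicality test and the per-block type descriptors to be arranged so as not to inflate the $O(1)$-in-$n$ decodability. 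The update analysis is the second delicate point: one must exclude long update cascades except with rapidly vanishing probability, which again reduces to the inductive concentration estimate, now applied to a block after an i.i.d.\ symbol substitution.
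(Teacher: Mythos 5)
Your construction is architecturally the same as the paper's: geometric $\varepsilon_i=\Theta(\varepsilon/2^i)$, block sizes $n_i = n_0 2^{\Theta(i^2)}$, fixed-length per-block encoding of residuals (the paper's $\diamond$-block mechanism), and local decoding/updating by climbing to the smallest resolved level. Where your plan breaks is in the pivotal concentration estimate, and this in turn invalidates your depth choice and the claimed error bound. You assert $\Pr[X^{n_i}\notin\cT_i^{(n_i)}]\le 2^{-\Theta(n_{i-1})}$ by Chernoff plus induction. But a level-$i$ block is unresolved iff more than $\Theta(\varepsilon_i b_i)$ of its $b_i$ sub-blocks were left unresolved at level $i-1$ (the $n_i\varepsilon_i$-bit budget only accommodates $\Theta(\varepsilon_i b_i)$ verbatim sub-blocks). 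The relevant binomial has only $b_i$ trials, so the tail is at best $2^{-\Theta(\varepsilon_i b_i\log(\varepsilon_i/\delta_{i-1}))}$ where $\delta_{i-1}$ is the per-sub-block miss probability. With $b_i=\Theta(4^i b_0)$ and $\varepsilon_i=\Theta(\varepsilon_0/2^i)$, one has $\varepsilon_i b_i = \Theta(\varepsilon_0 b_0 2^i)$ while $n_{i-1}=b_0^i 2^{i(i-1)}$. Even granting the inductive hypothesis $\delta_{i-1}\le 2^{-\Theta(n_{i-2})}$, the exponent equals $\Theta(\varepsilon_i b_i\, n_{i-2})=\Theta(n_{i-1}\cdot 2^{-i})$, short of $n_{i-1}$ by a factor $2^{\Theta(i)}$: the induction does not close. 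What actually holds (and what the paper proves in Lemma~\ref{lemma:prob_leftover}) is $\delta_i\le \varepsilon_i^{\Theta(2^i)}=2^{-\Theta(i2^i)}$, which is much weaker than $2^{-\Theta(n_{i-1})}$ since $n_{i-1}=2^{\Theta(i^2)}\gg i2^i$.

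This gap propagates. Your shallow depth $d_{\max}=\Theta((\log n)^{1/4})$ was chosen so that $2^{-\Theta(n_{d_{\max}-1})}$ would be $2^{-2^{\Omega(\sqrt{\log n})}}$, but with the correct bound the top-level miss probability is only $2^{-\Theta(d_{\max}2^{d_{\max}})}=2^{-\Theta((\log n)^{1/4}2^{(\log n)^{1/4}})}$, far larger than claimed. The remedy, which is exactly what the paper does, is to take the tree as deep as possible, $\ell_{\max}=\Theta(\sqrt{\log n})$ (the largest $\ell$ with $n_\ell\le n$), so that $\ell_{\max}2^{\ell_{\max}}$ is $2^{\Omega(\sqrt{\log n})}$ and the stated error rate follows (Corollary~\ref{corollary:prob_error}). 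Once the concentration estimate and depth are corrected, the remainder of your argument --- the rate telescope, the locality sums dominated by the first level, and the update-cascade analysis --- lines up with the paper's Lemmas~\ref{lemma:localdecoder_expectedbits} and~\ref{lemma:updating_expectedbits}; note those only need the correct (and weaker) conditional bound $\delta^{(1)}_{i\to i+1}\le\varepsilon_i^{\Theta(2^{i-1})}$ of Lemma~\ref{lemma:prob_leftover_condn}, not the false $2^{-\Theta(n_{i-1})}$.
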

	
	Mazumdar {\it{et al.}}~\cite{mazumdar2015local} proved that $\ravg^*(1)=\Omega(\log (1/\varepsilon))$ for non-dyadic sources.\footnote{Recall that $ \ravg^*(1) $ denotes the minimum average local decodability that can be achieved by any compression scheme having rate $ R \leq H(p_X)+\varepsilon $.} Hence, from Theorem~\ref{theorem:main_iid} we get:
	\begin{corollary}\label{coro}
There exists a universal constant $\alpha_{f}>0$ such that for all non-dyadic sources, the scheme of Theorem~\ref{varlen} achieves $\ravg(s)<s\ravg^*(1)$ whenever $s\geq \alpha_f/\varepsilon^2$.
	\end{corollary}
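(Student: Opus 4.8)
The plan is to read off $\ravg(s)$ from Theorem~\ref{varlen} and compare it against the converse lower bound on $\ravg^*(1)$ due to Mazumdar \emph{et al.}~\cite{mazumdar2015local}. That converse supplies, for each non-dyadic $p_X$, constants $c=c(p_X)>0$ and $\varepsilon_0=\varepsilon_0(p_X)\in(0,1)$ with $\ravg^*(1)\geq c\log(1/\varepsilon)$ for every rate-$(H(p_X)+\varepsilon)$ scheme and every $0<\varepsilon<\varepsilon_0$. Since rate close to entropy ($\varepsilon\to 0$) is the regime of interest, I would simply assume $0<\varepsilon<\varepsilon_0$ throughout, allowing $\varepsilon_0$ to be shrunk finitely many times to other $p_X$-dependent thresholds as needed; the claimed constant is $\alpha_f:=\alpha_1/c$, which depends only on $p_X$ (consistent with the corollary's use of ``universal'').

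I would split the range $s\geq \alpha_f/\varepsilon^2$ at the crossover point $s^\star:=\alpha_1''\left(\frac{1}{\varepsilon^2}\log\frac{1}{\varepsilon}\right)$ of Theorem~\ref{varlen}. For $\alpha_f/\varepsilon^2\leq s\leq s^\star$ the first branch gives $\ravg(s)\leq \alpha_1\frac{1}{\varepsilon^2}\log\frac{1}{\varepsilon}$, whereas $s\,\ravg^*(1)\geq s\,c\log\frac{1}{\varepsilon}>\frac{\alpha_1}{\varepsilon^2}\log\frac{1}{\varepsilon}$ precisely because $s>\alpha_1/(c\varepsilon^2)=\alpha_f/\varepsilon^2$; hence $\ravg(s)<s\,\ravg^*(1)$ there. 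For $s>s^\star$ the second branch gives $\ravg(s)\leq\alpha_1's$, so it suffices that $\alpha_1'<\ravg^*(1)$, and since $\ravg^*(1)\geq c\log(1/\varepsilon)$ this holds once $\varepsilon<2^{-\alpha_1'/c}$, which is one of the thresholds we fold into $\varepsilon_0$. The two ranges abut as long as $\alpha_f/\varepsilon^2\leq s^\star$, i.e.\ $\log(1/\varepsilon)\geq \alpha_1/(c\alpha_1'')$ --- again absorbed into $\varepsilon_0$ --- so together they cover all $s\geq\alpha_f/\varepsilon^2$, proving the corollary.

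The substance is entirely in the two invoked results; the only thing to watch is the bookkeeping of the $p_X$-dependent constants, and in particular the single genuine requirement that $\varepsilon$ be small enough for the converse's $\log(1/\varepsilon)$ to beat the fixed achievability constant $\alpha_1'$ in the large-$s$ regime. I expect this last point --- ensuring the large-$s$ branch is handled, which is what forces an upper bound on $\varepsilon$ rather than a pure ``$s$ large'' statement --- to be the only mildly delicate step, and it is harmless since small $\varepsilon$ is exactly the case one cares about.
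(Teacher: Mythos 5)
Your proposal is correct, and it fills in what the paper leaves as a one-word ``Hence'': the paper derives the corollary purely by juxtaposing Theorem~\ref{varlen} with the Mazumdar \emph{et al.}\ converse $\ravg^*(1)=\Omega(\log(1/\varepsilon))$, with no written-out case analysis. Your two-regime split at $s^\star=\alpha_1''\left(\frac{1}{\varepsilon^2}\log\frac{1}{\varepsilon}\right)$ is exactly the right way to make the ``Hence'' precise, and your choice $\alpha_f=\alpha_1/c$ is the constant the paper intends.

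Two small remarks, neither of which is a gap. First, you correctly surface a point the paper glosses over: in the $s>s^\star$ regime, beating $\alpha_1'$ requires $\ravg^*(1)\geq c\log(1/\varepsilon)>\alpha_1'$, i.e.\ $\varepsilon$ small enough, and this is not written anywhere in the corollary's statement. The paper implicitly works in the regime $\varepsilon\to 0$ (that is the only regime in which $\ravg^*(1)=\Omega(\log(1/\varepsilon))$ is informative), so folding this into an $\varepsilon_0(p_X)$ threshold as you do is the right reading. Second, the paper calls $\alpha_f$ a ``universal'' constant, but as you note it inherits $p_X$-dependence from both $\alpha_1$ and $c$; your parenthetical that this is ``consistent with the corollary's use of `universal'\,'' is generous --- the wording in the corollary is simply loose, and $\alpha_f$ should be read as independent of $n$ and $\varepsilon$ but not of $p_X$, matching the $\alpha_3$ in the introduction. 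Finally, the condition that the two ranges abut is not actually needed: if $\alpha_f/\varepsilon^2>s^\star$ then every $s\geq\alpha_f/\varepsilon^2$ already falls in the second branch, so coverage holds regardless; you could drop that sentence. None of this changes the verdict: your argument is sound and is the argument the paper is implicitly making.
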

Given Theorem~\ref{naive}, the interesting regime of Corollary~\ref{coro} is when attempting to locally decode a substring of size $s$ that satisfies $$\Omega(1/\varepsilon^2)\leq s\leq o(\log n). $$

Theorem~\ref{varlen} involves average local decoding and average local update. A natural question is whether we can achieve the same performance but under worst-case locality, {\it{i.e.}}, can we achieve for any $1\leq s\leq n$ $$(\rwc(s),\twc(s))=(O(s),O(s))?$$
While this question remains open we show that it is possible to achieve $(\rwc(s),\tavg(s))=(O(s)),O(s))$ whenever $s=\Omega(\log \log (n))$.
This result is obtained by means of a second scheme where neighborhoods are of fixed length, as in the naive scheme, but compressed with a variable length code. 
 Using such as a code raises the problem of efficiently encoding the start and end locations of each subcodeword. Indeed, were we to store an index of the locations of each subcodeword, and since there are $ n/b $ subcodewords, the index would take approximately $ (n\log n)/b $ additional bits of space. Hence, only to ensure that the rate remains bounded would require $ b=\Omega(\log n) $, which would further imply that $ \rwc(1) $ and $ \twc(1) $ are still $ O(\log n) $. It turns out that the location of individual subcodewords can be done much more efficiently by means of a particular data structure for subcodeword location as we show in Section~\ref{sec:o_loglogn_scheme}:
	\begin{theorem}[Fixed-length neighborhood and variable-length compression]\label{fixlen}
		Fix $ \varepsilon>0 $. There exists a scheme which univerally over i.i.d.\ sources with common known finite alphabet achieves a rate-locality triple of
		\[
		(R,\rwc(1),\tavg(1)) = (H(p_X)+\varepsilon,O(\log\log n),O(\log\log n))
		\] 
		where order terms are independent of $\varepsilon$.
		
		Moreover, for any $ s > 1 $, 
		\[
		\rwc(s) \leq  \begin{cases}
		2\rwc(1) &\text{if }s\leq b_1\\
		s(H(p_X)+\varepsilon)+2\rwc(1) &{otherwise},
		\end{cases}
		\]
		and
		\[
		\tavg(s) \leq  \begin{cases}
		2\tavg(1) &\text{if }s\leq b_1\\
		2s(H(p_X)+\varepsilon)+2\tavg(1) &\text{otherwise},
		\end{cases}
		\]
		where $ b_1 =O(\log\log n).$
		The overall computational complexity of encoding and decoding is polynomial in $n$.
		\label{thm:Ologlogn}
	\end{theorem}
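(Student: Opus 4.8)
The plan is to keep fixed‑length neighborhoods as in the naive scheme but to compress each neighborhood with a \emph{universal variable‑length} code, and to lay the subcodewords out so that locating any one of them is a constant‑time arithmetic computation rather than a lookup in a $\Theta(\log n)$‑bit‑per‑entry index. Partition $X^n$ into $n/b$ neighborhoods of $b$ symbols each and encode a neighborhood $x^b$ by its type $\hat p_{x^b}$ ($\lceil|\cX|\log(b+1)\rceil$ bits) followed by the index of $x^b$ within its type class ($\le bH(\hat p_{x^b})$ bits): this is universal, has worst‑case length $\ell_{\max}\defeq b\log|\cX|+O(\log b)$, and by concavity of the entropy has expected length at most $bH(p_X)+O(\log b)$. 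Group the neighborhoods into super‑blocks of $g$ consecutive neighborhoods and give each neighborhood inside a super‑block a fixed slot of $\ell_0\defeq\lceil b(\widehat H+\varepsilon/4)\rceil+1$ bits, where $\widehat H=H(\hat p_{X^n})$ is written in a short header; a neighborhood whose subcodeword fits in $\ell_0-1$ bits lives in its slot (one slot bit being a fits/overflows flag), and an overflowing neighborhood stores its first $\ell_0-1$ bits plus a pointer into a per‑super‑block overflow buffer that holds the remaining $\le\ell_{\max}-\ell_0$ bits and is managed with a free list. Crucially, every super‑block is padded to a common length $\Lambda$, so super‑block $\sigma$ occupies codeword positions $[\sigma\Lambda,\,(\sigma+1)\Lambda)$ (after the header) and the $j'$‑th slot inside it starts at $\sigma\Lambda+(j'-1)\ell_0$: no pointer is ever consulted to find a super‑block or a slot.

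Next I would calibrate the parameters. Take $b=\Theta(\log\log n)$ and $g=\Theta(\varepsilon^{-2}\log n)$. A neighborhood overflows only if $H(\hat p_{x^b})\ge H(p_X)+\Omega(\varepsilon)$, which by the standard type‑counting bound has probability $2^{-\Omega(b)}=(\log n)^{-\Omega(1)}$; the overflow lengths of the $g$ neighborhoods of a super‑block are independent and bounded by $\ell_{\max}-\ell_0=O(b)$, so a Bernstein bound shows their sum exceeds $\beta\defeq\Theta(\varepsilon bg)$ with probability at most $n^{-3}$ — and it is exactly this requirement, weighed against $\beta\le\varepsilon bg$, that forces $g=\Omega(\varepsilon^{-2}\log n)$. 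Set $\Lambda=g\ell_0+\beta$. Then the total codeword length is $(n/(bg))\Lambda=(n/b)\ell_0+O(\varepsilon n)=n\widehat H+O(\varepsilon n)$, so after rescaling $\varepsilon$ the rate is $H(p_X)+\varepsilon$ for all large $n$ (using $\widehat H\to H(p_X)$); the encoder declares a failure on the $o(1)$‑probability event that some super‑block's overflow exceeds its buffer, which keeps the error probability $o(1)$.

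Local decoding of $X_i$: compute the super‑block and slot indices from $i$ by division, read the $\ell_0$‑bit slot, and if the overflow flag is set follow the $O(\log\Lambda)=O(\log\log n)$‑bit local pointer and read the $\le\ell_{\max}-\ell_0=O(\log\log n)$ overflow bits, then decompress. The algorithm always probes $O(\log\log n)$ bits, so $\rwc(1)=O(\log\log n)$, and it is correct outside the $o(1)$ event above. For $s>1$, a length‑$s$ window meets at most $\lceil s/b\rceil+1$ consecutive neighborhoods lying in $O(1+s/(bg))$ consecutive super‑blocks; reading the relevant (contiguous) slots together with the associated overflow chunks and a direct accounting of their total length yields the stated $\rwc(s)$ bounds with $b_1=b=\Theta(\log\log n)$ (for $s\le b$ the window meets $\le 2$ neighborhoods, each decodable in $\le\rwc(1)$ probes). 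Local update of $X_i$: decode the affected neighborhood, modify the symbol, recompress, overwrite its slot, and update at most one chunk of the overflow buffer through the free list — $O(\log\log n)$ bits in all; only the probability‑$\le n^{-3}$ event that the fresh update makes the buffer overflow triggers a $\mathrm{poly}(\log n)$‑cost super‑block repack, contributing $o(1)$ to the average, so $\tavg(1)=O(\log\log n)$ and the $s>1$ update bounds follow the same way. Each neighborhood is encoded/decoded in $\mathrm{poly}(b)$ time and each of the $n/(bg)$ super‑blocks maintained in $\mathrm{poly}(\log n)$ time, so the overall complexity is $n\,\mathrm{poly}(\log n)$, hence polynomial.

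The main obstacle I expect is precisely the joint calibration of $b,g,\ell_0,\beta$: the slots must be wide enough that a neighborhood fits with probability $1-(\log n)^{-\Omega(1)}$, yet the overflow buffers must consume only $O(\varepsilon n)$ bits in total, and the per‑super‑block overflow probability must be $o\!\big((bg)/n\big)$ so that a union bound over the super‑blocks still leaves error probability $o(1)$. Reconciling these three constraints is what pins $b$ at $\Theta(\log\log n)$ and $g$ at $\Theta(\varepsilon^{-2}\log n)$, and it rests on a large‑deviation estimate for the empirical entropy of a size‑$b$ block together with a concentration bound for the sum of $g$ bounded overflow contributions. A secondary difficulty, needed for $\tavg(1)=O(\log\log n)$ rather than merely an amortized bound, is arranging the overflow‑buffer allocator so that a costly reorganization is triggered only with probability $\le n^{-3}$ under a fresh $p_X$‑distributed update.
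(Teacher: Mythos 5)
Your proposal is correct in spirit and reaches the same asymptotics, but it uses a genuinely different mechanism for locating the variable-length subcodewords than the paper does. The paper packs the subcodewords of the $b_1$-sized subblocks contiguously inside a $b_0=O(\log n)$-sized block and uses a black-box \emph{rank/select compressed data structure} (Lemma~\ref{lemma:rankselect}, due to Grossi \emph{et al.}) on the ``which subblocks are atypical'' indicator vector to compute the start address of any subcodeword with $O(\log(b_0/b_1))=O(\log\log n)$ probes. You instead use \emph{fixed-position slots} (one per neighborhood, addressable by pure arithmetic) plus a per-super-block overflow buffer with explicit pointers managed by a free list; no rank queries are needed. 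What each approach buys: the paper's argument is shorter and relies on a citable structure, but its updater must occasionally rewrite an entire $\Theta(\log n)$-bit block (whenever the updated subblock changes typicality class), which happens with probability $\Theta(1/\log^2 n)$ and must be averaged away. Your design makes the catastrophic rewrite much rarer ($\le n^{-3}$) at the cost of more bookkeeping---free-list maintenance, padding each super-block to a common length $\Lambda$, and a Bernstein-type concentration bound on the total overflow length rather than the paper's simpler Chernoff bound on the \emph{count} of atypical subblocks; this is also why your super-block is larger, $\Theta(\varepsilon^{-2}\log n\cdot\log\log n)$ symbols vs.\ the paper's $O(\log n)$.

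One small inconsistency worth fixing: you write that an overflowing neighborhood ``stores its first $\ell_0-1$ bits plus a pointer'' in its slot, but the slot has only $\ell_0$ bits total, so both cannot fit. The clean fix (which your later accounting effectively assumes) is to store \emph{only} the $O(\log\Lambda)=O(\log\log n)$-bit pointer in the slot when the overflow flag is set, and keep the full $\le\ell_{\max}=O(b)$ bits of the subcodeword in the overflow chunk; the slot bits wasted on overflowing neighborhoods are $O(\ell_0 g (\log n)^{-\Omega(1)})$, negligible against $g\ell_0$. Also note that both your argument and the paper's establish $\rwc(s)$/$\tavg(s)$ for $s>b_1$ essentially by reading $\lceil s/b_1\rceil+1$ subblocks naively; neither derivation really justifies the clean $s(H(p_X)+\varepsilon)$ leading term in the worst case (an atypical subblock costs $b_1\log|\cX|$ bits, not $b_1(H(p_X)+\varepsilon)$), so you are no worse off than the paper here, but a more careful accounting that caps the number of atypical subblocks per super-block would be needed to make that bound literal.
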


Analogously to the derivation of Corollary~\ref{coro} we get:
	\begin{corollary}\label{coro2}
For non-dyadic sources, there exists a constant $\alpha_v>0$ such that the scheme of Theorem~\ref{fixlen} achieves  $\rwc(s)< s\rwc^*(1) $ whenever $s\geq \alpha_v(\log\log n)$.
	\end{corollary}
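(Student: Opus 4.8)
The plan is to follow exactly the recipe by which Corollary~\ref{coro} was obtained from Theorem~\ref{varlen}: combine the upper bound on $\rwc(s)$ furnished by Theorem~\ref{fixlen} with a matching lower bound on $\rwc^*(1)$ valid for non-dyadic sources. For the lower bound I would first observe that for \emph{every} rate-$(\le H(p_X)+\varepsilon)$ compression scheme the worst case dominates the average, so $\rwc(1)\geq\ravg(1)\geq\ravg^*(1)$; taking the infimum over all such schemes gives $\rwc^*(1)\geq\ravg^*(1)$, and hence $\rwc^*(1)=\Omega(\log(1/\varepsilon))$ by the converse of Mazumdar \emph{et al.}~\cite{mazumdar2015local} for non-dyadic sources. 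Because $H(p_X)+\varepsilon\leq\log|\cX|+\varepsilon$ stays bounded while this lower bound grows without bound as $\varepsilon\to 0$, there is a source-dependent threshold on $\varepsilon$ below which $\rwc^*(1)\geq H(p_X)+\varepsilon+\delta$ for some $\delta=\delta(p_X,\varepsilon)>0$.

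Next I would plug in Theorem~\ref{fixlen}, which gives for every $s>b_1$, with $b_1=O(\log\log n)$,
\[
\rwc(s)\ \leq\ s\bigl(H(p_X)+\varepsilon\bigr)+2\rwc(1)\ =\ s\bigl(H(p_X)+\varepsilon\bigr)+O(\log\log n),
\]
since $\rwc(1)=O(\log\log n)$. Subtracting the two bounds,
\[
s\,\rwc^*(1)-\rwc(s)\ \geq\ s\bigl(\rwc^*(1)-H(p_X)-\varepsilon\bigr)-O(\log\log n)\ \geq\ s\,\delta-O(\log\log n),
\]
which is strictly positive as soon as $s$ exceeds a fixed multiple of $\log\log n$. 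Taking $\alpha_v$ large enough that $\alpha_v\log\log n$ exceeds both that multiple and $b_1$ yields $\rwc(s)<s\,\rwc^*(1)$ for all $s\geq\alpha_v\log\log n$; by construction $\alpha_v=\Theta(1/\delta)$ depends on $p_X$ and $\varepsilon$ but not on $n$, as required.

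The only step needing genuine care is the strict separation $\rwc^*(1)>H(p_X)+\varepsilon$ --- intuitively, ruling out schemes whose worst-case per-symbol bit-probe cost is no larger than the rate itself. For small $\varepsilon$ this is immediate from $\rwc^*(1)\geq\ravg^*(1)=\Omega(\log(1/\varepsilon))$. For general $\varepsilon$ one can instead appeal to the $\varepsilon$-free estimate $\rwc^*(1)\geq\lceil\log|\cX|\rceil$, which follows from a routine adaptive-decision-tree argument ($k$ bit-probes let a local decoder emit at most $2^k$ distinct symbols, while over all correctly decodable codewords it must emit every element of $\cX$) and which strictly exceeds $H(p_X)$ for non-dyadic $p_X$; then $\delta=\lceil\log|\cX|\rceil-H(p_X)-\varepsilon$ and the claim holds whenever $\varepsilon<\lceil\log|\cX|\rceil-H(p_X)$, the regime of interest. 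Apart from this calibration of $\delta$ and $\alpha_v$, the derivation is the exact worst-case mirror of the derivation of Corollary~\ref{coro}.
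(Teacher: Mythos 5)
Your derivation is correct and follows exactly the route the paper intends (the paper dispatches Corollary~\ref{coro2} with a one-line ``analogously to Corollary~\ref{coro}''): pass the Mazumdar \emph{et al.}\ lower bound $\ravg^*(1)=\Omega(\log(1/\varepsilon))$ through $\rwc^*(1)\geq\ravg^*(1)$ and compare against the $\rwc(s)\leq s(H(p_X)+\varepsilon)+2\rwc(1)$ bound of Theorem~\ref{fixlen} with $\rwc(1),b_1=O(\log\log n)$. Your additional $\varepsilon$-free fallback $\rwc^*(1)\geq\lceil\log|\cX|\rceil>H(p_X)$ via the adaptive decision-tree counting argument is a sound and welcome strengthening (it needs only that the local decoder must, over reachable codewords, emit every symbol in the support of $p_X$, which holds for any scheme with error probability bounded away from~$1$), and it makes explicit the implicit small-$\varepsilon$ calibration that the paper leaves unstated in both corollaries.
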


All our results easily extend to variable-length codes with zero error---See Appendix~\ref{sec:conversion_fixlength_varlength}.

	\subsection*{Discussion}
	
	Mazumdar {\it{et al.}}~\cite{mazumdar2015local} gave a compression scheme that achieves $ R=H(p_X)+\varepsilon $ and $ \rwc(1)=\Theta(\frac{1}{\varepsilon}\log \frac{1}{\varepsilon}) $. 
	The probability of error decays as $ 2^{-\Theta(n)} $. This suggests that we can achieve $ \twc(1)=O(\log n) $ using the following scheme. Split the message into blocks of $ O(\log n) $ symbols each, and use the scheme of Mazumdar {\it{et al.\ }} in each block. We can choose the size of each block so that the overall probability of error decays polynomially in $ n $. Since each block of size $ O(\log n) $ is processed independently of the others, the overall computational complexity (which may be exponential in the size of each block) is only polynomial in $ n $. This gives us the following result:
	\begin{lemma}[Corollary to~\cite{mazumdar2015local}]
		For every $ \varepsilon>0 $, a rate-locality triple of 
		$$ (R,\rwc(1),\twc(1))= \left(H(p_X)+\varepsilon, \Theta\left(\frac{1}{\varepsilon}\log \frac{1}{\varepsilon}\right), O(\log n) \right) $$
		is achievable with $ \mathrm{poly}(n) $ overall  encoding and decoding complexity.
		\label{lemma:mazumdar_result} 
	\end{lemma}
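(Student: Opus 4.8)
The plan is to use the fixed-blocklength scheme of Mazumdar \emph{et al.}~\cite{mazumdar2015local} as a black box and apply it blockwise, exactly as outlined in the discussion preceding the statement. Recall that for any blocklength $ b $ and any $ \varepsilon>0 $, \cite{mazumdar2015local} furnishes a $ (b,H(p_X)+\varepsilon) $ fixed-length compression scheme whose worst-case single-symbol local decodability is $ \Theta(\frac{1}{\varepsilon}\log\frac{1}{\varepsilon}) $ --- crucially a quantity that depends on $ \varepsilon $ but not on $ b $ --- and whose probability of error decays as $ 2^{-\Theta(b)} $, with encoding and decoding computable in time $ 2^{O(b)} $. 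First I would partition $ X^n $ into $ n/b $ contiguous blocks of length $ b $, compress each block independently with this scheme, and concatenate the subcodewords. Because every subcodeword has the same length $ b(H(p_X)+\varepsilon) $, the $ j $-th subcodeword occupies the deterministic bit range from $ (j-1)b(H(p_X)+\varepsilon)+1 $ through $ jb(H(p_X)+\varepsilon) $; no index, pointer, or auxiliary data structure for subcodeword location is needed, so the overall rate is exactly $ R=H(p_X)+\varepsilon $.

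Next I would fix $ b=\gamma\log n $ for a sufficiently large constant $ \gamma $ and control the error by a union bound over the $ n/b $ blocks: $ P_e\le (n/b)\,2^{-\Theta(b)} = (n/(\gamma\log n))\,n^{-\Theta(\gamma)} $, which is polynomially small in $ n $ (in particular $ o(1) $) once $ \gamma $ exceeds the reciprocal of the hidden constant in the error exponent. This is the only place the choice of $ b $ enters, and it is precisely the step that pins $ b $ down to $ \Theta(\log n) $.

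For the locality claims: to locally decode a symbol $ X_i $, one locates the block $ j=\lceil i/b\rceil $ containing it, reads off the deterministically located subcodeword of block $ j $, and runs the local decoder of \cite{mazumdar2015local} on it; this probes $ \Theta(\frac{1}{\varepsilon}\log\frac{1}{\varepsilon}) $ codeword bits, so $ \rwc(1)=\Theta(\frac{1}{\varepsilon}\log\frac{1}{\varepsilon}) $. To update $ X_i $ to $ \tilde X_i $, the updater --- which is given only $ (i,\tilde X_i) $ --- reads the entire subcodeword of block $ j $ ($ b(H(p_X)+\varepsilon)=O(\log n) $ bits), decodes it to recover the old block, overwrites its $ i $-th symbol by $ \tilde X_i $, re-encodes the modified block, and rewrites those $ O(\log n) $ bits; this is oblivious to $ X^n $ as required by the model and yields $ \twc(1)=O(\log n) $. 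Finally, since each of the $ n/b $ blocks is encoded and decoded in time $ 2^{O(b)}=\mathrm{poly}(n) $, the overall encoding/decoding complexity is $ \mathrm{poly}(n) $.

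The argument is essentially bookkeeping, and I do not anticipate a genuine obstacle. The only points that require care are the two structural properties of the scheme of \cite{mazumdar2015local} invoked above: that its single-symbol worst-case local decodability is independent of the blocklength (so that shrinking the blocklength to $ \Theta(\log n) $ does not degrade $ \rwc(1) $), and that its error exponent is linear in the blocklength (so that a $ \Theta(\log n) $ blocklength already drives the aggregate error polynomially small). Both are guaranteed by \cite{mazumdar2015local}, so the mildly delicate part of the writeup is simply to extract these two facts in the form needed here.
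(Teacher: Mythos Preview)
Your proposal is correct and follows essentially the same approach as the paper: partition into $\Theta(\log n)$-sized blocks, apply the Mazumdar \emph{et al.}\ scheme independently on each block, control the overall error via a union bound, inherit the $\Theta(\frac{1}{\varepsilon}\log\frac{1}{\varepsilon})$ worst-case local decodability from the per-block scheme, and observe that re-encoding a single $O(\log n)$ block yields $\twc(1)=O(\log n)$ while per-block $2^{O(b)}$ complexity becomes $\mathrm{poly}(n)$. Your writeup is somewhat more detailed than the paper's brief paragraph, but the argument is the same.
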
 
	Although the above scheme has $ \mathrm{poly}(n) $ computational complexity, this could potentially be a high-degree polynomial.
	Moreover, we do not know if the above scheme can achieve $ \rwc(s)< s\rwc(1) $ for $ 1<s = o(\log n) $.
	
	Montanari and Mossel~\cite{montanari2008smooth} gave a compressor that achieves update efficiency $ \twc(1) =\Theta(1)$. The construction is based on syndrome decoding using low-density parity-check codes. Arguing as above we deduce the following lemma:
	\begin{lemma}[Corollary to~\cite{montanari2008smooth}]
		For every $ \varepsilon>0 $, a rate-locality triple of 
		$$ (R,\rwc(1),\twc(1))= \left(H(p_X)+\varepsilon,  O(\log n), \Theta(1) \right) $$
		is achievable with $ \mathrm{poly}(n) $ overall  encoding and decoding complexity.
		\label{lemma:montanari_result} 
	\end{lemma}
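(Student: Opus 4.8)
The plan is to turn the fixed-blocklength compressor of Montanari and Mossel~\cite{montanari2008smooth} into a length-$n$ scheme by exactly the block-decomposition device already used for the naive scheme of Theorem~\ref{naive}. The two facts I would import from~\cite{montanari2008smooth} are: for every blocklength $b$ and every $\varepsilon>0$ there is a $(b,H(p_X)+\varepsilon)$ fixed-length compression scheme (i) whose probability of error is exponentially small, $2^{-\Theta(b)}$, and (ii) whose codeword is the syndrome of $x^b$ under a sparse parity-check matrix, so that an update of a single source symbol is carried out by probing and rewriting only $O(1)$ codeword bits, the constant depending on $\varepsilon$ and $p_X$ but not on $b$. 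Note that the same scheme does \emph{not} give $\ravg(1)=\Theta(1)$: recovering one symbol requires belief propagation over the whole block and hence $\Theta(b)$ probes, which is precisely the trade-off pointed out in~\cite{montanari2008smooth}.

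First I would partition $X^n$ into $n/b$ consecutive blocks of $b$ symbols and compress each block independently with the above scheme, taking $b=c\log n$ with $c$ a constant large enough that $(n/b)\,2^{-\Theta(b)}=n^{-\Omega(1)}$; a union bound over the blocks then makes $P_e$ polynomially small, so $P_e=o(1)$. The concatenated code has rate $H(p_X)+\varepsilon$ up to the $O(1)$ bits per block lost to integer rounding of $b(H(p_X)+\varepsilon)$, i.e.\ an additive $O(1/b)=o(1)$ overhead that is absorbed into $\varepsilon$. To recover $X_i$ one decodes only the block containing coordinate $i$, whose codeword is $b(H(p_X)+\varepsilon)=O(\log n)$ bits long, so $\rwc(1)=O(\log n)$ (and, exactly as in Theorem~\ref{naive}, any length-$s$ fragment meets at most $\lceil s/b\rceil+1$ blocks). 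To update $X_i$ one locates its block and applies the syndrome update of~\cite{montanari2008smooth} to that block's codeword alone, touching $O(1)$ bits, so $\twc(1)=\Theta(1)$. Finally, each block has size $O(\log n)$, so even a brute-force encoder and decoder for a single block run in $2^{O(b)}=\mathrm{poly}(n)$ time, and the $n/b$ blocks together take $\mathrm{poly}(n)$ time.

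The only non-routine step is justifying the two imported properties, and this is where I would expect to do real work. One must check that the sparse-graph construction of~\cite{montanari2008smooth} has error probability exponentially small in the blocklength — without this, $b$ would have to grow super-logarithmically and $\rwc(1)$ would exceed $O(\log n)$ — and that the update operation is \emph{oblivious} in the sense of our model, i.e.\ it can be performed from the pair $(i,\tilde X_i)$ by probing and modifying only a constant number of codeword bits, rather than first decoding $X_i$; here one should be careful that consecutive updates keep the source i.i.d.\ $\sim p_X$, so that the decoder's guarantee is not eroded. Both properties are inherited from the sparsity of the parity-check matrix used in~\cite{montanari2008smooth}; granting them, everything else is the same bookkeeping as in the proof of Theorem~\ref{naive}.
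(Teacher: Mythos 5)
Your proposal is correct and matches the paper's intended argument exactly: the paper's ``proof'' of this lemma is the one sentence ``Arguing as above we deduce the following lemma,'' where ``as above'' is precisely the block-decomposition used for the Mazumdar et al.\ corollary (Lemma~\ref{lemma:mazumdar_result}) --- split $X^n$ into $O(\log n)$-sized blocks, run the Montanari--Mossel compressor independently on each, union-bound the error, and note that per-block brute force is $2^{O(b)}=\mathrm{poly}(n)$. Your additional caveats --- that one must verify the exponential error decay of the sparse-graph scheme and the obliviousness of the $O(1)$-probe syndrome update --- are reasonable diligence about what is being imported from~\cite{montanari2008smooth}, but they do not constitute a different route from the paper's.
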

	The local decodability of the compressor in~\cite{montanari2008smooth} cannot be improved as it uses a linear encoder for the compression of each block, and Makhdoumi {\it{et al.}}~\cite{makhdoumi_onlocallydecsource} showed that for such a compression scheme local decodability ($\rwc(1)$) necessarily scales logarithmically with block size, hence in our case $ \rwc(1)=\Omega(\log n) $. Hence, linearity in the encoding impacts local decodability.
 Interestingly, they also noted that if we impose the decoder to be linear then it is impossible to achieve nontrivial rates of compression irrespective of $\rwc(1)$. 

	\section{Proof of Theorem~\ref{theorem:main_iid}}\label{main}
	We now present our compression scheme which achieves constant $ (\ravg(1),\tavg(1)) $. We assume first that the source distribution $p_X$ is known, as it is conceptually simpler. The universal scenario is handled separately in Section~\ref{sec:scheme_lz}.
	
    \begin{figure}
        \centering
        \includegraphics[width=11cm]{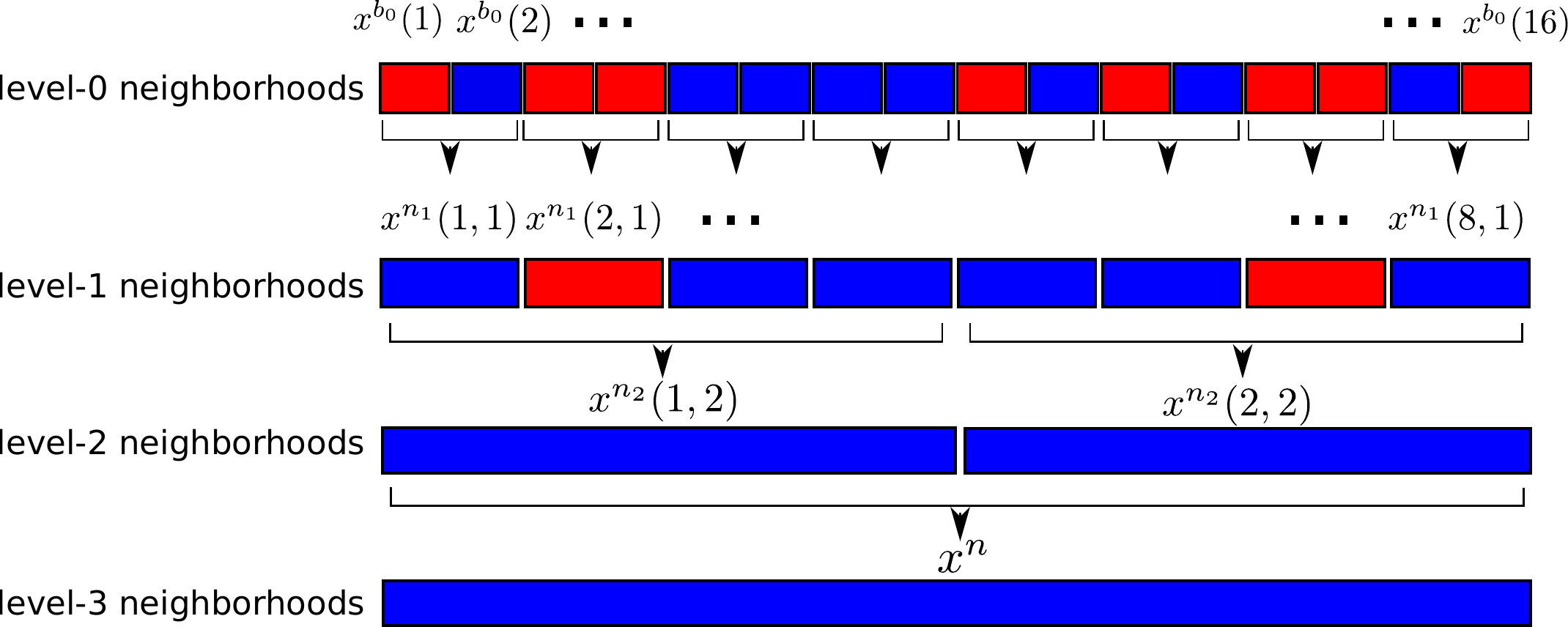}
        \caption{Intution for the multilevel compression scheme in Section~\ref{sec:scheme_constant}. We group together symbols to form larger neighborhoods. If we have an efficient means to compress these neighborhoods, then we can locally decode a block by decompressing the smallest typical neighborhood of that block. Blocks colored blue are typical, while the red blocks are atypical.}
        \label{fig:multilevel_compression_intuition}
    \end{figure}
    Before giving a formal description of our scheme, let us give some intuition.
    \subsection{Intuition}
     The main idea is to analyze the message sequence at multiple levels: At the coarsest level, we view the message as a single block of size $n$. At the finest level, we view it as a concatenation of blocks of size $b_0=\Theta(1)$. As depicted in Figure~\ref{fig:multilevel_compression_intuition}, we can refine this by saying that at level $\ell$, the message is viewed as a concatenation of $n_\ell$-sized neighborhoods, where $b_0<n_1<n_2<\ldots<n$. If $b_0=\Theta(1)$, then a positive fraction of the level-$0$ neighborhoods are atypical with high probability, while neighborhoods at higher levels are more likely to be typical. 
    
    Corresponding to each $b_0$-sized block, we identify the smallest typical neighborhood containing the block. In the example of Figure~\ref{fig:multilevel_compression_intuition}, the smallest typical neighborhood of $x^{b_0}(1)$ is $x^{n_1}(1,1)$ at level $1$, while that of $x^{b_0}(2)$ is $x^{b_0}(2)$ itself. 
    The main idea in our scheme is to efficiently encode typical neighborhoods at each level, and local decoding/update of a symbol is performed by decompressing/recompressing only the smallest typical neighborhood containing it.

    Our actual scheme is more nuanced. We will view the message sequence at different levels, but use a different definition of typicality at  each level. Compression of the neighborhoods is performed in an iterative fashion, starting from level $0$, and then moving to higher levels. At level $\ell,$ we only compress the residual information of each neighborhood, \emph{i.e.,} that which is not recoverable from the first $\ell-1$ levels. 
    
    Local decoding of a symbol is performed by successively answering the question ``Is the level-$\ell$ neighborhood typical?'' for $\ell=0,1,\ldots$, till we get a positive answer. The desired symbol can be recovered from the typical neighborhood.

    We proceed with the formal description of our scheme.
    
	\subsection{Compression scheme}\label{sec:scheme_constant}
	Fix $ \varepsilon_0>0 $. Let $ b_0 \defeq n_0\defeq \Theta(\frac{1}{\varepsilon_0^2}\log\frac{1}{\varepsilon_0}) $ where the implied constant is chosen so that $$ \Pr[X^{b_0}\notin \cT^{b_0}_{\varepsilon_0}]\leq \varepsilon_0^4, $$ 
	and let $$k_0\defeq \lceil (H(p_X)+\varepsilon_0)b_0\rceil.$$
	
	For $ \ell\geq 1 $,  let $$ \varepsilon_\ell =\varepsilon_{\ell-1}/2,$$
	$$b_\ell=4b_{\ell-1},$$
	$$n_\ell= b_\ell n_{\ell-1}, $$  and let $ \ell_{\max} $ be the largest $\ell$ such that 
	$
	n_\ell \leq n.
	$
	
	Notice that $ \ell_{\max}=\Theta(\sqrt{\log n}) $.  
	
	The overall encoding/decoding involves a multilevel procedure over $ \ell_{\max} $ levels. At each level, we generate a part of the codeword and modify the input string in an entropy decreasing manner until the string becomes a constant.
	The scheme uses a special marker symbol, referred to as 
	$ \diamond $, that is not in $\cX $. 
	This symbol will be used to denote that we have been able to successfully compress a part of the message at an earlier stage.
	\begin{definition}[$ \diamond $ blocks and non-$\diamond $ blocks]
		A vector $ v^m $  is said to be a $ \diamond $-block if $ v_i=\diamond  $ for all $ i $. It is called a non-$ \diamond $ block if there exists an $ i $ such that $ v_i\neq \diamond $.
	\end{definition}

	\begin{figure}
	    \centering
	    \includegraphics[width=4cm]{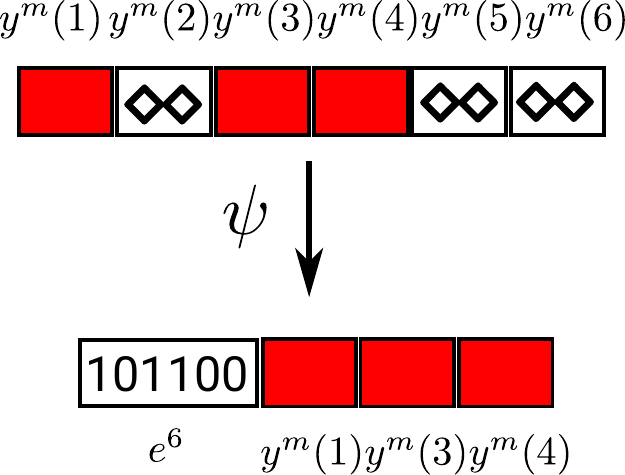}
	    \caption{Illustrating the compression scheme for levels $\ell\geq 1$ as described in Definition~\ref{remark:coding_at_level_i}. In this example, we have used $b=6$ and $\varepsilon=1/2$.}
	    \label{fig:higherlevel_compression}
	\end{figure}
	
	\begin{figure}
		\begin{center}
			\includegraphics[width=13cm]{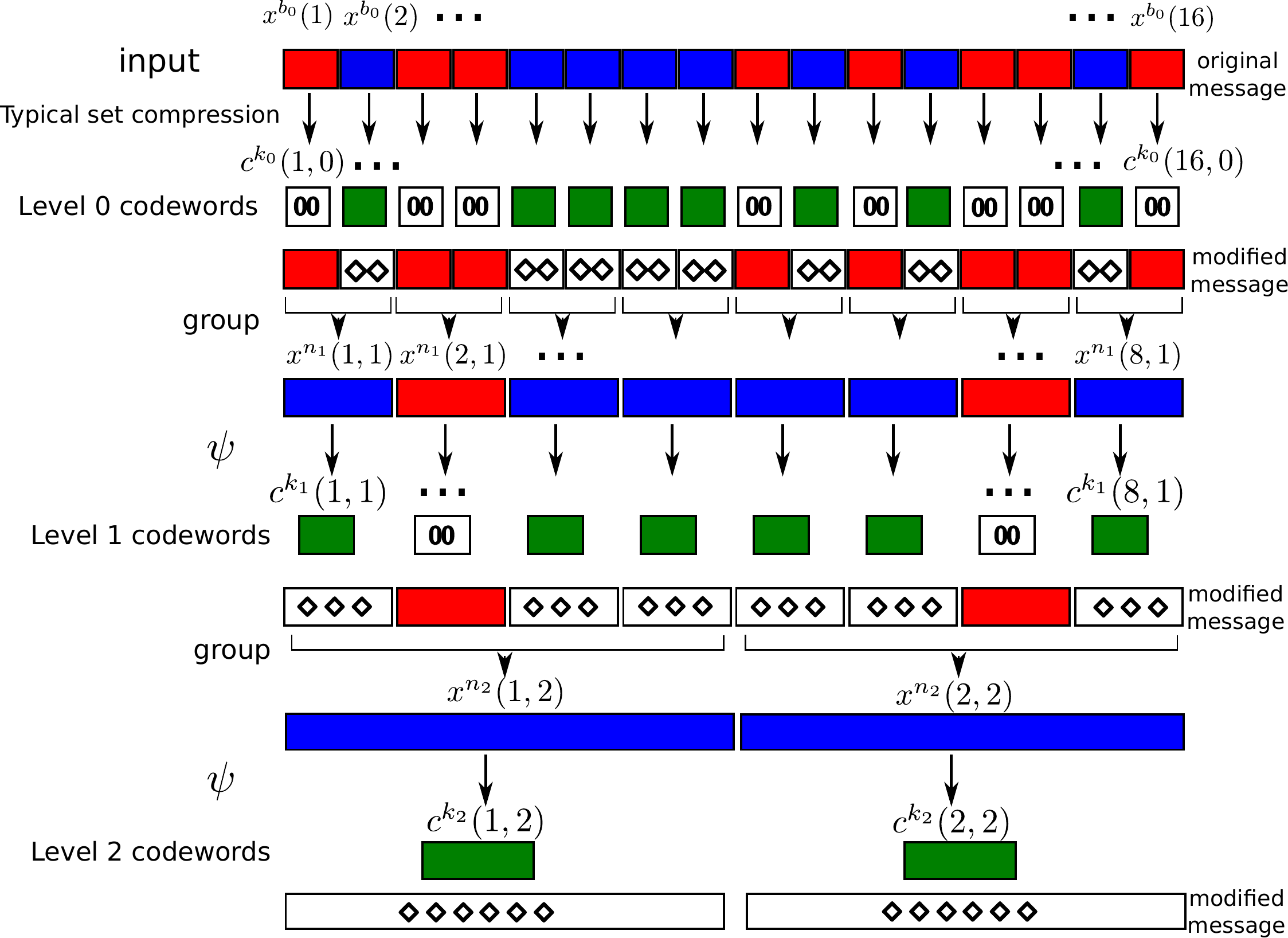}
			\caption{Illustrating the multilevel compression scheme. Red and blue blocks denote atypical and typical blocks respectively, while green blocks denote nonzero codewords. For ease of illustration, we have used $b_{\ell}=2b_{\ell-1}$.}
			\label{fig:multilevel_compression}
		\end{center}
	\end{figure}

	\subsubsection{{{Level $\ell=0$}}}
	partition $x^n$ into $ n/b_0 $ blocks of length $ b_0=n_0 $ each. Let $ x^{n_0}(j)\defeq x_{(j-1)n_0+1}^{jn_0} $ denote the $ j $th block of the message symbols. Blocks at level $\ell=0$ are processed independently of each other. For each $ x^{n_0}(j) $, we generate a codeword block $ c^{k_0}(j,0) $ and possibly modify $ x^{n_0}(j) $:
	\begin{itemize}
		\item If $ x^{n_0}(j) $ is typical, then $ c^{k_0}(j,0) $ is assigned the index of $ x^{n_0}(j) $ in $\cT_{\varepsilon_0}^{n_0}$, else  $c^{k_0}(j,0) = 0^{k_0} $. 
		\item If $ x^{n_0}(j) $ is typical, then  $x^{n_0}(j,0)$ is modified to a diamond block $\diamond^{n_0} $ and if  $ x^{n_0}(j) $ is not typical then $x^{n_0}(j,0)$ is kept unchanged. The message sequence after possible modifications of each block $x^{n_0}(j,\ell=0)$, $j=1,2,\ldots$ is denoted by $x^n(\ell=0)$.
		
	\end{itemize} 
	
	For compression at higher levels, we make use of the following code
	\begin{definition}[Code for levels $\ell
	\geq 1$]
		Fix any positive integers $ b,m $.
		Let $ \cX $ be a finite alphabet, and $ \diamond  $ be a symbol such that $ \diamond \notin \cX $. Let $\cS\subset (\cX\cup\{\diamond \})^{mb}$ be the set of all sequences of the form $ y^{mb} = (y^m(1),y^m(2),\ldots,y^m(b)) $ such that $ y^m(j)\in  (\cX\cup\{\diamond \})^{m}$  and  at least $ (1-\varepsilon) b $ fraction of the $ y^m(j) $'s are $ \diamond  $ blocks. 
		
		For any sequence $ y^{mb}\in\cS $, let $ j_1,j_2,\ldots,j_{k} $ denote the locations of the non-$\diamond $ blocks. Let $e^b =  \phi(y^{mb};b,m) $ be the $ b $-length indicator vector for the non-$\diamond $ blocks, {\it{i.e.}}, the $ j $th element of $ \phi(y^{mb};b,m) $ is $ 1 $ iff $ y^{mb}(j) $ is a non-$\diamond $ block.
		Let $$ \psi(y^{mb};b,m,\varepsilon) \defeq (e^b, y^{mb}(j_1),,\ldots,y^{mb}(j_k),\diamond ^{m(\varepsilon b-k)} ).  $$
		In other words, $\psi$ consists of a header $e^b$ to locate the non-$\diamond$ blocks, followed by a concatenation of all the non-$\diamond$ blocks.
		The binary representation of $ \psi $ requires $ b+\varepsilon mb\log(|\cX|+1) $ bits. The mapping $ \psi $ is one-to-one on $ \cS $. Both $ \psi $ and $ \psi^{-1}  $ (for any element in the range of $ \psi $) can be computed using $ \Theta(mb) $ operations. An example is illustrated in Figure~\ref{fig:higherlevel_compression}.

		\label{remark:coding_at_level_i}
	\end{definition}

	\subsubsection{{{Levels $\ell\geq 1$}}}
	having generated codewords up to level $\ell-1$ and having modified the message if necessary, we form groups of $ b_\ell $ consecutive blocks from $x^{n}(\ell-1)$ to obtain blocks of size $n_\ell=b_\ell n_{\ell-1}$. The $jth$ block at level $\ell$, denoted  $x^{n_\ell}(j,\ell)$, is therefore
	$$(x^{n_{\ell-1}}((j-1)b_\ell+1,\ell-1),\ldots,x^{n_{\ell-1}}(jb_\ell+1,\ell-1)).$$
	Similarly to level $\ell=0$, for each of these blocks of size $n_\ell$, we generate a codeword and modify it if necessary:  
	\begin{itemize}
		\item If $x^{n_\ell}(j,\ell)$ is ``typical,'' {\emph{i.e.}}, has at least  $ (1-\varepsilon_\ell)b_\ell $ $ \diamond $-blocks (of size $ n_{\ell-1} $), then we set the subcodeword $ c^{k_\ell}(j,\ell) $ of length $k_\ell= b_\ell+\varepsilon_\ell n_\ell\log(|\cX|+1) $ using the scheme described in Definition~\ref{remark:coding_at_level_i}.\footnote{One could use a more sophisticated scheme to get better performance. However, we can get order-optimal $ (\ravg,\tavg) $ even with this very simple scheme.} If this block is ``atypical,'' {\it{i.e.}}, has fewer than $(1-\varepsilon_\ell)b_\ell $ many $\diamond$ blocks, then $ c^{k_\ell}(j,\ell)=0^{k_\ell} $.
		\item If $x^{n_\ell}(j,\ell)$ has at most $ \varepsilon_\ell b_\ell $ many non-$ \diamond $-blocks, then we modify $x^{n_\ell}(j,\ell)$ to a diamond block $\diamond^{n_\ell} $. Otherwise, the group is left untouched.
	\end{itemize}
	Hence, at each level the input sequence gets updated with more and more $ \diamond $'s as larger and larger subsequences become typical. As we show in Section~\ref{analysis}, the entropy of the message keeps decreasing till it becomes zero, once it becomes the all-$ \diamond $ sequence. 
	Finally, the stored codeword is the concatenation of codewords of all levels:
	\[
	c^{nR} = (c^{k_0}(1:n/n_0,0),\ldots,c^{k_{\ell_{\max}}}(1:n/n_{\ell_{\max}},\ell_{\max})).
	\]


	
	\begin{example}[Figure~\ref{fig:multilevel_compression}]
	 An example of the encoding process is illustrated in Figure~\ref{fig:multilevel_compression} where
	 the blue blocks refer to typical blocks whereas the red blocks refer to atypical blocks. 
	 
	 At level $0$, the subcodewords $c^{k_0}(i,0)$ are obtained using typical set compression. The subcodeword $c^{k_0}(i,0)$ is zero if the block is atypical, and nonzero (depicted in green in the figure) if it is typical. We then modify the message, replacing each typical level-$0$ block with $\diamond^{b_0}$. 
	 
	 For ease of illustration, we select $b_1=2$ and $\varepsilon_1=1/2$. Hence the blocks are grouped in pairs to obtain $x^{n_1}(i,1)$, $1\leq i\leq 8$. A block $x^{n_1}(i,1)$ is typical if it contains at most one non-$\diamond$ block of length $n_0$. Therefore, only $x^{n_1}(2,1)$ and $x^{n_1}(7,1)$ are atypical. These blocks are compressed to get the level-$1$ codewords $c^{k_1}(i,1)$ for $1\leq i\leq 8$. As earlier, typical blocks are encoded to nonzero codewords, while atypical blocks are compressed to the zero codeword. Post compression, we again modify the message by replacing typical blocks with $\diamond^{n_1}$.
	 
	 The encoding process proceeds in an identical fashion for level $2$, where we have selected $b_2=4$ and $\varepsilon_2=1/4$.
	 \end{example}
	
	
	\subsection{Local decoding}
	Suppose that we are interested in recovering the $ m $th message symbol $ x_m $, where $m\in (j-1)n_0:jn_0 $. 
	\begin{itemize}
		\item We probe $ c^{k_0}(j,0) $. If the block $ x^{n_0}(j) $ is typical, then we can directly recover $x^{n_0}(j)$ from $ c^{k_0}(j,0) $.
		\item If $ x^{n_0}(j) $ is not typical, we probe higher levels successively till we reach the smallest level $ \ell $ for which the block that includes  $x^{n_0}(j)$, which we denote as  $ x^{n_\ell}(q_{\ell}(j),\ell) $ is a diamond $\diamond^{n_\ell}$- block. This can be determined by reading the first $ b_i $ bits of $ c^{k_i}(q_{i}(j),i) $, $i=1,2,\ldots, \ell$ since this corresponds to the indicator vector of the non-$ \diamond $ blocks at each level $i\leq \ell $. If we can recover $ x^{b_0}(j) $ by probing up to the first $ \ell $ levels, then we  say that the $ j $th block is encoded at the $ \ell $th level.
		\item Using this approach, we automatically recover the entire block $ x^{b_0}(j) $---not only an individual message symbol. If we want to recover multiple message blocks, we repeatedly employ the same algorithm on each block.\footnote{We can actually do much better than naively repeating the algorithm for multiple blocks. However, for ease of exposition and proofs, we use the naive algorithm.} 
	\end{itemize}
	
		We revisit our earlier example to illustrate the local decoder.
\begin{example}[Figure~\ref{fig:multilevel_compression}]
	Suppose that we are interested in recovering $x^{b_0}(2)$. The local decoder first probes $c^{k_0}(2,0)$. Since this is a nonzero codeword, $x^{b_0}(2)$ can be obtained by decompressing $c^{k_0}(2,0)$. In this process, the local decoder probes $k_0$ bits.
	
	Suppose that we are instead interested in recovering $x^{b_0}(3)$. On probing $c^{k_0}(3,0)$, the local decoder obtains a zero codeword. Next, it probes $c^{k_1}(2,1)$. This is also zero. Finally, the local decoder probes $c^{k_2}(1,2)$ which is nonzero, and $x^{b_0}(3)$ can be obtained by decompressing this codeword. In this case, the local decoder probes $k_0+k_1+k_2$ bits.
	\end{example}
	\subsection{Local updating}
	The local updating rule is a little more involved. Assume that the $ j $th block $ x^{n_0}(j) $ is to be updated with $ \tilde{x}^{n_0}(j) $.
	\begin{itemize}
		\item If both $ x^{n_0}(j) $ and $ \tilde{x}^{n_0}(j) $ are typical, only $ c^{n_0}(j,0) $ needs to be updated. Whether $ x^{n_0}(j) $ is typical or not can be determined by reading  $ c^{k_0}(j,0) $. 
		\item If both $ x^{n_0}(j) $ and $ \tilde{x}^{n_0}(j) $ are atypical, then we probe higher levels till we reach the level $ \ell $ where $ x^{n_0}(j) $ is encoded, and update $ c^{k_0}(q_{\ell}(j),
		\ell) $.
		\item If $ x^{b_0}(j) $ is typical and $ \tilde{x}^{b_0}(j) $ is atypical, then we need to update $ c^{b_0}(j,0) $ and the blocks at higher levels. Due to the atypicality, the number of non-$ \diamond $ blocks for level $ 1 $ increases by $ 1 $, and hence $ c^{\ell_1}(q_1(j),1) $ must be updated. If the number of non-$ \diamond $ blocks now exceeds $ \varepsilon_1b_1 $, then we would also need to update the codeword at level $ 2 $, and so forth.
		\item If $ x^{b_0}(j) $ is atypical and $ \tilde{x}^{b_0}(j) $ is typical, then the number of non-$ \diamond $ blocks at each level might decrease by $ 1 $ (or $ 0 $). If $ x^{b_0}(j) $ were encoded at level $ i $, then we might need to update the codeword blocks up to level $ i $.  
	\end{itemize}
	Let us illustrate the local updater in the context of our earlier example.
	\begin{example}[Figure~\ref{fig:multilevel_compression}]

	Suppose that we want to replace $x^{b_0}(5)$ with $\widetilde{x}^{b_0}(5)$. The local updater first probes $c^{k_0}(5,0)$ to conclude that $x^{b_0}(5)$ is encoded at level $0$. 
	
	If $\widetilde{x}^{b_0}(5)$ is also typical, then only $c^{k_0}(5,0)$ needs to be updated, and the rest of the codeword remains untouched. The updater probes $k_0$ bits and modifies $k_0$ bits. 
	
	In case $\widetilde{x}^{b_0}(5)$ is atypical, then the local updater first sets $c^{k_0}(5,0)$ to $0^{k_0}$. It then probes $c^{k_1}(3,1)$ and decompresses this to recover $x^{n_1}(3,1)$. This block is updated with $\widetilde{x}^{b_0}(5)$, and the new level $1$ block $\widetilde{x}^{n_1}(3,1)$ is typical. Therefore, $c^{k_1}(3,1)$ is updated with the codeword corresponding to  $\widetilde{x}^{n_1}(3,1)$, and the update process is terminated. In this scenario, the updater probes $k_0+k_1$ bits and modifies $k_0+k_1$ bits.
	\end{example}
	
	\subsection{Connections with P\u{a}tra\c{s}cu's compressed data structure~\cite{patrascu2008succincter}}\label{sec:patrascu-connections}
    
    In~\cite{patrascu2008succincter}, P\u{a}tra\c{s}cu gave an entropy-achieving compression scheme that achieves constant-time local decoding in the word-RAM model. The compressor has a multilevel structure whose concept inspired our work. 
    
    The basic idea in~\cite{patrascu2008succincter} is the following. At level $0$, split the message into blocks of $b_0$ symbols each, compress each block using an entropy-achieving variable-length compression scheme, and store a fixed number of the compressed bits of each block. The remainder is called the ``spill,'' and is encoded in higher levels. At level $i\geq 1$, the spills from each block of level $i-1$ are grouped together to form larger blocks, and compressed in a fashion similar to level $0$.  Reconstruction of any block necessarily requires both the codeword at level-$0$ and the spill. As a result, the local decoder of~\cite{patrascu2008succincter} must always probe subcodewords of all levels, and the number of bitprobes required to recover even one symbol is $\Omega(\log n)$.
    
    In our scheme on the other hand encoding is such that the number of levels that the local decoder needs to probe to retrieve one block depends on the realization of the source message. 
    In particular, the local decoder need not always probe all levels---and indeed, probes a small number of levels. 
    
    Hence, in P\u{a}tra\c{s}cu's scheme the information about a particular block is spread across multiple levels whereas in our scheme this information is stored at a particular level that depends on the realization of the message.

	In the next section we establish Theorem~\ref{theorem:main_iid} assuming the underlying source $p_X$ is known. Universality is handled separately in Section~\ref{sec:scheme_lz}. 
	
\subsection{Bounds on $ \ravg(1) $ and $ \tavg(1) $}\label{analysis}
	
	We now derive bounds on the average local decodability and update efficiency. In the following, we will make use of some preliminary results that are derived in Appendix~\ref{sec:preliminary_lemmas}.
	\begin{lemma}
		If $\varepsilon_0<1/2$, then
		\[
		\ravg(1) \leq 2b_0.
		\]
		\label{lemma:localdecoder_expectedbits}
	\end{lemma}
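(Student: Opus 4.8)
The plan is to bound the expected number of codeword bits probed by the local decoder when recovering a single symbol $x_m$ lying in the $j$th level-$0$ block. By the description of the local decoder, if the $j$th block is encoded at level $\ell$, then the decoder first reads the $k_0$ bits of $c^{k_0}(j,0)$; if that is zero it reads the $b_1$-bit header of the level-$1$ codeword, then the $b_2$-bit header of the level-$2$ codeword, and so on, until it reaches level $\ell$, at which point it reads all $k_\ell$ bits of the relevant level-$\ell$ subcodeword to decompress the diamond-containing neighborhood. So the number of probed bits is at most $k_0 + \sum_{i=1}^{\ell-1} b_i + k_\ell$, which I would crudely upper bound by something like $k_0 + \sum_{i=1}^{\ell} k_i$ (since $b_i \le k_i$). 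Thus $\bE[d^{(1)}(j,X^n)] \le k_0 + \sum_{\ell\ge 1} \Pr[\text{block }j\text{ encoded at level }\ge \ell]\cdot k_\ell$, up to constants.

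The key quantitative input I need is a bound on $p_\ell \defeq \Pr[\text{block }j\text{ is not yet a }\diamond\text{-block after level }\ell-1]$, i.e. the probability that recovery of block $j$ requires going to level $\ell$ or higher. This is exactly the type of preliminary estimate I would expect to be established in Appendix~\ref{sec:preliminary_lemmas}: because $b_0$ was chosen so that a level-$0$ block is atypical with probability at most $\varepsilon_0^4$, and because the higher-level typicality condition (``at most $\varepsilon_\ell b_\ell$ non-$\diamond$ sub-blocks'') is a statement about a sum of roughly independent Bernoulli-like indicators with small mean, a Chernoff/union-bound argument should give a doubly-fast decay, something like $p_\ell \le \varepsilon_\ell^{c}$ or $p_\ell \le 2^{-\Omega(b_\ell)}$ for a suitable constant. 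Since $\varepsilon_\ell = \varepsilon_0 2^{-\ell}$, $b_\ell = b_0 4^\ell$, and $k_\ell = \Theta(\varepsilon_\ell n_\ell) = \Theta(\varepsilon_\ell b_\ell n_{\ell-1})$, the product $p_\ell k_\ell$ should be geometrically summable with a small ratio, so that $\sum_{\ell\ge 1} p_\ell k_\ell$ is a small multiple of $k_0 = \Theta(b_0)$ (using $H(p_X)+\varepsilon_0 \le \log|\cX| = O(1)$ to absorb the $k_0$-vs-$b_0$ discrepancy into the constant). Putting the pieces together and using $\varepsilon_0 < 1/2$ to make the geometric tail sum small, one gets $\ravg(1) = \max_j \bE[d^{(1)}(j,X^n)] \le 2b_0$.

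The main obstacle, and the step I would spend the most care on, is getting the decay of $p_\ell$ with a good enough constant in the exponent that, after multiplying by the growing subcodeword lengths $k_\ell$ (which grow like $\varepsilon_0 b_0^2 2^\ell 4^\ell \cdots$ — superexponentially, since $n_\ell$ itself grows superexponentially), the series $\sum_\ell p_\ell k_\ell$ still converges and is bounded by $(2 - o(1))b_0 - k_0$. Concretely I would need the recursively-defined ``typicality at level $\ell$'' event to fail with probability decaying at least as fast as, say, $2^{-2^{\Theta(\ell)}}$, which is plausible because each level's typicality is a deviation event for $b_\ell = \Theta(4^\ell)$ sub-blocks each of which is non-$\diamond$ with probability at most $\varepsilon_{\ell-1}^{c}$ (inductively controlled from the previous level), and a Chernoff bound on a binomial with $b_\ell$ trials and tiny mean gives a bound like $2^{-\Omega(\varepsilon_\ell b_\ell \log(1/\varepsilon_\ell))}$; one checks this beats $1/k_\ell$. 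I expect the Appendix lemmas to supply exactly this induction, so here the work is to invoke them and verify the geometric-sum bookkeeping, choosing the implied constant in $b_0 = \Theta(\varepsilon_0^{-2}\log(1/\varepsilon_0))$ large enough to close the argument.
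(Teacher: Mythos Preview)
Your proposal is correct and follows essentially the same approach as the paper: decompose the expected number of probed bits according to the level at which the block is encoded, upper bound the cost at level $i$ by the cumulative subcodeword lengths (the paper uses the cruder $\sum_{i_1\le i}k_{i_1}\le (i+1)n_i$), and then invoke the Appendix lemmas (specifically Lemma~\ref{lemma:prob_leftover_condn} giving $\Pr[\text{encoded at level }i]\le \varepsilon_i^{\beta 2^{i-1}}$ and Lemma~\ref{lemma:locdec_intermediate_bound} showing $(i+1)n_i\,\varepsilon_i^{\beta 2^{i-1}}\le \varepsilon_0^i$) to make the tail sum geometric and bounded by $b_0$. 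The ``main obstacle'' you identify---getting the doubly-exponential decay of $p_\ell$ to beat the superexponential growth of $n_\ell$---is exactly what those two appendix lemmas are designed to supply, so your plan aligns with the paper's execution.
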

	\begin{proof}
		We can assume without loss of generality that we want to recover $ X_1 $.
		
		If $ X_1 $ is encoded at level  $ i $, then the local decoder probes $ \sum_{i_1=0}^i k_{i_1} $ bits.
		Therefore,
		\begin{align}
		\bE[\ravg^{(1)}(X^n,b_0)] &\leq b_0+\sum_{i=1}^{\ell_{\max}}\left(\Pr[ X_1\text{ is encoded at level }i]\sum_{i_1=0}^{i}k_{i_1}\right)&\notag\\
		&\leq b_0+\sum_{i=1}^{\ell_{\max}}\left(\Pr[ X_1\text{ is encoded at level }i]\sum_{i_1=0}^{i}n_{i_1}\right)&\notag\\
		&\leq b_0+\sum_{i=1}^{\ell_{\max}}\Big((i+1)n_{i} \Pr[ X_1\text{ is encoded at level }i]\Big). &\label{eq:localdecoder_expectedbits_proof_1}
		\end{align}
		Let $ \delta^{(1)}_{i\to i+1} $ denote the conditional probability that $ x^{n_i}(1,i) $ is not the all-$\diamond$  block given that $ x^{n_{i-1}}(1,i-1) $  is not a $ \diamond  $-block.
		Then,
		\begin{align*}
		\Pr[X_1\text{ is encoded at level }i] &\leq \delta_{0\to 1}\prod_{i_1=1}^{i-1}\delta_{i_1\to i_1+1}^{(1)}
		\end{align*}
		 From Lemma~\ref{lemma:prob_leftover_condn}, specifically (\ref{eq:prob_leftover_condn_1}), we know that $\delta_{i\to i+1}^{(1)} \leq \varepsilon_i^{\beta 2^{i-1}}$ for $ i\geq 1 $. The quantity $ \beta $ is defined in \eqref{eq:defn_beta}.
		Therefore, 
		\begin{align}
		\Pr[X_1 \text{ is encoded at level }i] &\leq \varepsilon_i^{\beta 2^{i-1}}. &\label{eq:localdecoder_expectedbits_proof_2}
		\end{align}
		Since $ n_{i_1} = b_0^{i_1+1}2^{i_1(i_1+1)} $, we have 
		 \[
		 (i+1)n_i  \leq (i+1) b_0^{i+1}2^{i(i+1)}.
		 \]
		 Using this and \eqref{eq:localdecoder_expectedbits_proof_2} in~\eqref{eq:localdecoder_expectedbits_proof_1}, we have
		 \begin{align}
		 \bE[\ravg^{(1)}(X^n,b_0)] &\leq b_0+ \sum_{i=1}^{\ell_{\max}} (i+1) b_0^{i+1}2^{i(i+1)} \varepsilon_i^{\beta 2^{i-1}}. 
		 \end{align}
		It is easy to show that $ (i+1)b_0^{i+1}2^{i(i+1)} \varepsilon_i^{\beta 2^{i-1}} \leq \varepsilon_0^{i}$ for all $ i\geq 1 $ (see Lemma~\ref{lemma:locdec_intermediate_bound} for a proof). Therefore,
		\[
		\ravg(1) = \bE[\ravg^{(1)}(X^n,b_0)] \leq b_0+\varepsilon_0b_0\sum_{i=1}^{d_{\max}}\varepsilon_0^{i} <2b_0
		\]
		if $\varepsilon_0<1/2$.
		This completes the proof.
	\end{proof}
	
	\begin{lemma}
		If $\varepsilon_0<1/2$, then
		\[
		\tavg(1) \leq 8b_0.
		\]
		\label{lemma:updating_expectedbits}
	\end{lemma}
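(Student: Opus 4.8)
The plan is to reduce the bound to Lemma~\ref{lemma:localdecoder_expectedbits} by means of one deterministic structural fact about the updater and one distributional fact about the update model. As in the proof of Lemma~\ref{lemma:localdecoder_expectedbits} we may assume we are updating $X_1$, which amounts to replacing the level-$0$ block $X^{n_0}(1)$ by $\tilde{X}^{n_0}(1) = (\tilde X_1, X_2,\dots,X_{n_0})$; note that $\tilde X_1\sim p_X$ independent of everything else, so $\tilde{X}^{n_0}(1)$ is again i.i.d.\ $\sim p_X$. Write $\ell^*$ for the level at which block $1$ is encoded with respect to the \emph{old} message $X^n$, and $\tilde{\ell}^*$ for the level at which block $1$ is encoded with respect to the \emph{new} message $(\tilde{X}^{n_0}(1),X_{n_0+1}^n)$.

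The key step is the deterministic claim that the updater touches subcodewords only at levels $0,1,\dots,L$ with $L\le\max(\ell^*,\tilde{\ell}^*)$, and that at each touched level it reads at most $k_i$ bits and writes at most $k_i$ bits, so that
\[
u_{\mathrm{tot}}^{(1)}(1,X^n,\tilde{X}^{n_0}(1))\;\le\;2\sum_{i=0}^{L}k_i .
\]
To establish the bound on $L$ I would go through the four cases of the updating rule and track the unique level-$0$ block whose $\diamond$/non-$\diamond$ status can change. If this status does not change (old and new level-$0$ blocks both typical, or both atypical), then no level-$\ell$ block with $\ell\ge1$ changes, and the updater merely reads the subcodewords needed to locate where block $1$ is stored and rewrites the affected ones, all at levels $\le\max(\ell^*,\tilde{\ell}^*)$. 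If the status flips from $\diamond$ to non-$\diamond$ (old typical, new atypical), the count of non-$\diamond$ constituents of the enclosing level-$\ell$ block increases by one for $\ell=1,2,\dots$; this may set off a cascade of ``typical $\to$ atypical'' flips, but by construction the cascade stops precisely at the level where the new content first lands inside a typical block, namely $\tilde{\ell}^*$, and nothing above $\tilde{\ell}^*$ is altered because that block is a $\diamond$-block both before and after. The mirror-image argument handles the ``non-$\diamond \to \diamond$'' flip (old atypical, new typical) and shows the induced changes stay at levels $\le\ell^*$. Hence $L\le\max(\ell^*,\tilde{\ell}^*)$ always.

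The rest is then immediate. Using $\sum_{i=0}^{L}k_i\le\sum_{i=0}^{\ell^*}k_i+\sum_{i=0}^{\tilde{\ell}^*}k_i$ and taking expectations,
\[
\tavg(1)\;\le\;2\,\bE\left[\sum_{i=0}^{\ell^*}k_i\right]+2\,\bE\left[\sum_{i=0}^{\tilde{\ell}^*}k_i\right].
\]
The quantity $\sum_{i=0}^{\ell^*}k_i$ upper-bounds the number of codeword bits probed by the local decoder of Lemma~\ref{lemma:localdecoder_expectedbits} when recovering block $1$, and the proof of that lemma shows $\bE[\sum_{i=0}^{\ell^*}k_i]\le 2b_0$. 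Moreover, since $\tilde{X}^{n_0}(1)$ is i.i.d.\ $\sim p_X$ and independent of $X_{n_0+1}^n$, the new message $(\tilde{X}^{n_0}(1),X_{n_0+1}^n)$ has the same distribution as $X^n$, so $\bE[\sum_{i=0}^{\tilde{\ell}^*}k_i]\le 2b_0$ as well. Combining, $\tavg(1)\le 2(2b_0)+2(2b_0)=8b_0$ whenever $\varepsilon_0<1/2$, the hypothesis being inherited from Lemma~\ref{lemma:localdecoder_expectedbits}.

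The main obstacle is the deterministic claim $L\le\max(\ell^*,\tilde{\ell}^*)$: one must check carefully, across all four update cases and both cascade directions, that no subcodeword above level $\max(\ell^*,\tilde{\ell}^*)$ is ever modified and that each modified level costs at most $k_i$ read and $k_i$ written bits. Once that bookkeeping is in place, the probabilistic half of the argument is a one-line consequence of Lemma~\ref{lemma:localdecoder_expectedbits} together with the invariance of the source distribution under the update.
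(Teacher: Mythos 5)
Your proposal is correct and follows essentially the same route as the paper: bound the levels touched by the updater by $\max(\ell^*,\tilde\ell^*)$, observe that the updated block has the same distribution as the original, and piggyback on the expectation computation of Lemma~\ref{lemma:localdecoder_expectedbits} to get the factor of $4$ and hence $8b_0$. Your write-up is actually a bit more careful than the paper's, which simply asserts $u_{rd}\le\max\{(i_{\mathrm{old}}+1)n_{i_{\mathrm{old}}},(i_{\mathrm{new}}+1)n_{i_{\mathrm{new}}}\}$ in one line, while you explicitly isolate the deterministic bookkeeping across the four update cases as the step that needs verifying.
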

	\begin{proof}
		The calculations are identical to those in Lemma~\ref{lemma:localdecoder_expectedbits}, so we will only highlight the main differences. Again, we can assume that the first symbol needs to be updated.
		
		Suppose $ U^{b_0}(1) $ is the new realization of the message block that needs to be updated.
		Let $ i_{\mathrm{old}} $ denote the level at which $ X^{b_0}(1) $ is encoded in the codeword for $ X^n $, and let $ i_{\mathrm{new}} $ be the level at which $ U^{b_0}(1) $ is encoded in the codeword for $U^{b_0}(1),X^{(b_0)}(2),\ldots,X^{b_0}(n/b_0)$. The number of bits that need to be read is upper bounded by
		$$ u_{rd} \leq \max\{(i_{\mathrm{old}}+1)n_{i_{\mathrm{old}}},(i_{\mathrm{new}}+1)n_{i_{\mathrm{new}}}\}\leq (i_{\mathrm{old}}+1)n_{i_{\mathrm{old}}}+(i_{\mathrm{new}}+1)n_{i_{\mathrm{new}}}.$$
		Likewise, the number of bits that need to be written is  $$ u_{wr} \leq  \max\{(i_{\mathrm{old}}+1)n_{i_{\mathrm{old}}},(i_{\mathrm{new}}+1)n_{i_{\mathrm{new}}}\}\leq (i_{\mathrm{old}}+1)n_{i_{\mathrm{old}}}+(i_{\mathrm{new}}+1)n_{i_{\mathrm{new}}}.$$
		Since the $U^{b_0}(i)$ is independent of everything else and does not change the message distribution, $\twc(1)$ is at most $4$ times the upper bound in \eqref{eq:localdecoder_expectedbits_proof_1}.
		Using the calculations in the proof of Lemma~\ref{lemma:localdecoder_expectedbits}, the expected number of bits to be read and written is at most $ 8b_0 $.
	\end{proof}
	
	\subsection{Proof of Theorem~\ref{theorem:main_iid} assuming that $ p_X $ is known} \label{sec:proofs}

	\subsubsection{Rate of the code} Recall that $ k_i $ is the length of a subcodeword  in the $ i $th level. The achievable rate is given by
	\[
	R = \frac{1}{n}\sum_{i=0}^{\ell_{\max}} k_i\frac{n}{n_i} = \sum_{i=0}^{\ell_{\max}}\frac{k_i}{n_i}.
	\]
	We have $ k_0\leq (H(p_X)+\varepsilon_0)b_0 $. From Definition~\ref{remark:coding_at_level_i}, we have 
	\begin{align*}
	k_i &=  b_i +\varepsilon_i n_i \log(|\cX|+1) \\
	&= n_i\left(\frac{1}{n_{i-1}}+\varepsilon_i\log(|\cX|+1)\right)\\
	&\leq n_i\left(\frac{\varepsilon_0}{2^{i(i-1)}}+\frac{\varepsilon_0}{2^i}\log(|\cX|+1)\right)\\
	&\leq n_i (1+\log(|\cX|+1))\frac{\varepsilon_0}{2^i}.
	\end{align*} 
	Therefore,
	\begin{align*}
	R&\leq H(p_X)+\varepsilon_0+(1+\log(|\cX|+1))\sum_{i=1}^{d_{\max}}\frac{\varepsilon_0}{2^i}\\
	& \leq H(p_X)+\varepsilon_0 (2+\log(|\cX|+1)). 
	\end{align*}
	Hence, the rate is $ H(p_X)+\Theta(\varepsilon_0) $.
	
	We show in Corollary~\ref{corollary:prob_error} (See Appendix~\ref{sec:preliminary_lemmas}) that the probability of error is upper bounded by $2^{-2^{O(\sqrt{\log n})}}$.
	
	\subsubsection{Average local decodability and update efficiency}
	In Lemmas~\ref{lemma:localdecoder_expectedbits} and~\ref{lemma:updating_expectedbits}, we have established that $\ravg(1)$ and $\tavg(1)$ are both $\Theta(\frac{1}{\varepsilon_0^2}\log\frac{1}{\varepsilon_0})$.
	
	Any sequence of $ s $ consecutive message symbols is spread over at most $ \lceil m/b_0\rceil +1 $ level-0 blocks.
	For any $ s \leq b_0 $, it is clear that $ \ravg(s)\leq 2\ravg(1) $.
	For $ s>b_0 $,
	\[
	\ravg(s) \leq \left(\lceil s/b_0\rceil +1\right) \alpha_{ld}b_0 = \alpha_1 s,
	\]
	for some absolute constant $ \alpha_1 $ independent of $ \varepsilon_0 $ and $ n $.
	Likewise,
	\[
	\tavg(s) = \alpha_2 s.
	\]
	for some $ \alpha_2 $ independent of $ n,\varepsilon_0 $.
	\subsubsection{Computational complexity}
	Since $ b_0 $ is a constant independent of $ n $, the total complexity for encoding/decoding all the codewords at level zero is $ \Theta(n) $. From Definition~\ref{remark:coding_at_level_i}, the computational complexity of decoding a block at level $ i $ is linear in $ n_i $, and there are $ n/n_i $ blocks at level $ i $. Since the total number of levels $ \ell_{\max} $ is $ O(\log n) $, the overall computational complexity is $ O(n\log n) $. A similar argument can be made to show that the expected computational complexity for local decoding/updating of a fragment of length $s$ is $\Theta(s)$.
	
	This completes the proof. \qed

	\subsection{Variable-length source code with zero error}
	Note that Theorem~\ref{theorem:main_iid} guarantees the existence of a fixed-length source code with a vanishing probability of error. However, in most applications, we want zero error source codes. 
	The scheme of Appendix~\ref{sec:conversion_fixlength_varlength} allows us to modify our code to give a locally decodable and update efficient variable-length compressor. 

	After the modification in Appendix~\ref{sec:conversion_fixlength_varlength}, $ \ravg(1) $ can increase by no more than $ 1 $. If the probability of error $ P_e $ is $ o(1/n) $, then the expected update efficiency also remains $ \Theta\left(\frac{1}{\varepsilon^2}\log\frac{1}{\varepsilon}\right) $. If the original fixed-length code has rate $ H(p_X)+\varepsilon $ and probability of error $ P_e $, then the new code has rate $ (1-P_e)(H(p_X)+\varepsilon)+P_e $, which  asymptotically approaches $ H(p_X)+\varepsilon $ if $ P_e =o(1)$.
	
	\subsection{Universal compression using Lempel-Ziv as a subcode}\label{sec:scheme_lz}
	
	We show that the performance by the coding scheme described above can be achieved even if the source $p_X$ is unknown to the encoder and local decoder/updater.

	Let $ \cC_i $ denote the $ (n_i,k_i/n_i) $ fixed-length compression scheme at level $ i $ in Section~\ref{sec:scheme_constant}.
	In Section~\ref{sec:scheme_constant}, we chose $\cC_0$  to be the typical set compressor. In this section, we will replace this with a fixed-length compressor based on LZ78~\cite{ziv1978compression}.

	We first redefine what it means for a sequence to be typical.
	\begin{definition}
	For any $ \delta>0 $ and $ b\in\bZ_+ $, we say that $x^b\in\cX^b $ is $ \delta $-LZ typical with respect to $ p_X $ if the length of the LZ78 codeword corresponding to $ x^b $, denoted $ \ell_{LZ}(x^b) $, is less than $ b(H(p_X)+\delta) $.     
	\end{definition}
	
	The above notion of typicality leads to a natural computationally-efficent fixed-length compression scheme.
	
	\begin{definition}[Fixed-length compression scheme derived from LZ78]
	Let $ \cT_{\delta,LZ}^b $ denote the set of all sequences that are $ \delta $-LZ typical with respect to $ p_X $. Associated with this is a natural $ (b, H(p_X)+\delta) $ fixed-length compression scheme which we denote $ \cC_{LZ}(b,H(p_X),\delta) $: 
	For any $ x^b\in\cX^b $, the corresponding codeword in $ \cC_{LZ}(b,H(p_X),\delta) $ is given by
	\[
	y^{b(H(p_X)+\delta)} = \begin{cases}
	[1, \enc_{\mathrm{LZ}}(x^b),\; 0^{b(H(p_X)+\delta)-\ell_{LZ}(x^b)}] &\text{if }\ell_{LZ}(x^b) < b(H(p_X)+\delta)\\
	0^{b(H(p_X)+\delta)} &\text{otherwise,}
	\end{cases}
	\]
	where $ \enc_{\mathrm{LZ}} $ denotes the LZ78 encoder.   
	   \label{defn:fixedlength_lz78}
	\end{definition}
	
	We can now describe the modifications required in the scheme of Section~\ref{sec:scheme_constant} in order to achieve universal compression.
	\subsubsection*{The universal compressor with locality}
	The global encoder uses the empirical estimate of $p_X$ to choose $b_0$ and $k_0$, which are encoded in the first $\Theta(1)$ bits (the preamble) of the compressed sequence\footnote{One way to store $b_0$ (resp.\ $k_0$) is by $1^{b_0}0^{k_b-b_0}$ (resp.\ $1^{k_0}0^{k_b-k_0}$) for a large enough predetermined value of $k_b=o(n)$.}. The parameter $\varepsilon_0$ can be fixed beforehand, or otherwise stored in the preamble.
	The rest of the codeword is generated as in Section~\ref{sec:scheme_constant} but with $\cC_0$ being $\cC_{LZ}$.
	
	
	

	The following theorem summarizes the main result of this section, and completes the proof of Theorem~\ref{theorem:main_iid}. The proof uses some technical lemmas that are formally proved in Appendix~\ref{sec:preliminary_lemmas_lz}.
	\begin{theorem}
		Fix a small $ \varepsilon>0 $.
		The coding scheme in Section~\ref{sec:scheme_constant} with $ \cC_0 $ chosen to be $ \cC_{LZ} $ achieves rate
		\[
		R = H(p_X)+\varepsilon,
		\]
		probability of error 
		\[
		\Pr[\dec(\enc(X^n))\neq X^n] = 2^{-2^{\Omega(\sqrt{\log n})}},
		\]
		and average local decodability and update efficiency  
		\[
		\ravg(s)\leq\begin{cases}
		\alpha_1 \frac{1}{\varepsilon^2}\log \frac{1}{\varepsilon} &\text{if } s = O\left(\frac{1}{\varepsilon^2}\log \frac{1}{\varepsilon}\right)\\
		\alpha_1's &\text{if } s = \Omega\left(\frac{1}{\varepsilon^2}\log \frac{1}{\varepsilon}\right)
		\end{cases},
		\]
		\[
		\tavg(s)\leq\begin{cases}
		\alpha_2 \frac{1}{\varepsilon^2}\log \frac{1}{\varepsilon} &\text{if } s = O\left(\frac{1}{\varepsilon^2}\log \frac{1}{\varepsilon}\right)\\
		\alpha_2's &\text{if } s = \Omega\left(\frac{1}{\varepsilon^2}\log \frac{1}{\varepsilon}\right)
		\end{cases},
		\]
		where $ \alpha_1,\alpha_1',\alpha_2,\alpha_2' $ are constants independent of $ n,\varepsilon $ but dependent on $ p_X $.
		
		The overall computational complexity of encoding and decoding $ X^n $ is $ O(n\log n) $.
		\label{theorem:main_lz}
	\end{theorem}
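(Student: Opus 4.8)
The plan is to reuse the analysis behind Theorem~\ref{theorem:main_iid} almost verbatim, isolating the few places where the choice $\cC_0=\cC_{LZ}$ and the unknown source actually enter. The compression at levels $\ell\ge 1$ is the purely combinatorial map $\psi$ of Definition~\ref{remark:coding_at_level_i}, which only counts $\diamond$-blocks and is independent of $p_X$ and of what happens at level $0$; likewise the local decoder and updater only ever read the first $b_\ell$ bits of each higher-level subcodeword. Hence, once we know that a level-$0$ block $X^{b_0}$ fails to be LZ-typical with a small enough probability, every statement of Section~\ref{analysis} and Appendix~\ref{sec:preliminary_lemmas} --- in particular Lemma~\ref{lemma:prob_leftover_condn}, Corollary~\ref{corollary:prob_error}, Lemma~\ref{lemma:localdecoder_expectedbits} and Lemma~\ref{lemma:updating_expectedbits} --- carries over unchanged, since the only features of level $0$ those proofs use are that the blocks are processed independently (still true, as $X^n$ is i.i.d.\ and the level-$0$ blocks are disjoint) and that the per-block failure probability is controlled. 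So the proof reduces to (i) handling universality through the preamble and (ii) supplying the level-$0$ LZ estimate, with $\cT_{\varepsilon_0,LZ}^{b_0}$ in the role played by $\cT_{\varepsilon_0}^{b_0}$ in the known-source proof.

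For (i): the global encoder forms the empirical distribution $\hat p_{X^n}$ in time $O(n)$, uses it to choose $b_0$ and $k_0=\lceil(\widehat H+\varepsilon_0/2)b_0\rceil$ with $\widehat H$ the empirical entropy, and writes $b_0,k_0$ into the length-$o(n)$ preamble in the unary-type encoding of the footnote. A Chernoff/Sanov bound gives $\Pr[\|\hat p_{X^n}-p_X\|_\infty>\tau]\le 2^{-\Omega(n)}$ for any fixed $\tau>0$; on this good event, uniform continuity of entropy yields $\widehat H\le H(p_X)+\varepsilon_0/2$, so $k_0\le(H(p_X)+\varepsilon_0)b_0+O(1)$, and $b_0=b_0(\hat p_{X^n})$ lies in the range that makes the level-$0$ LZ estimate applicable. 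On the complementary $2^{-\Omega(n)}$ event we declare an error, which is dominated by the claimed bound since $2^{-\Omega(n)}\le 2^{-2^{\Omega(\sqrt{\log n})}}$. Because a single-symbol update moves $\hat p_{X^n}$ by only $O(1/n)$, the slack above ensures the preamble values are never altered by an update, so the updater merely reads the $O(b_0)$ relevant preamble bits and otherwise proceeds exactly as in Section~\ref{sec:scheme_constant}.

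Granting the level-$0$ estimate, everything else is bookkeeping. The rate is $k_0/n_0$ (level $0$) plus $o(1)$ (preamble) plus $\Theta(\varepsilon_0)$ (levels $\ell\ge 1$, computed identically to Section~\ref{sec:proofs}), which is at most $H(p_X)+\varepsilon_0+o(1)+\Theta(\varepsilon_0)$; rescaling $\varepsilon_0$ gives $R=H(p_X)+\varepsilon$. The probability that the encoding does not terminate within $\ell_{\max}$ levels is $2^{-2^{\Omega(\sqrt{\log n})}}$ by the same recursive binomial-concentration argument (Lemma~\ref{lemma:prob_leftover_condn}, Corollary~\ref{corollary:prob_error}), and adding the $2^{-\Omega(n)}$ estimation-failure term leaves the bound unchanged. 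Lemmas~\ref{lemma:localdecoder_expectedbits} and~\ref{lemma:updating_expectedbits} give $\ravg(1),\tavg(1)=\Theta(b_0)=\Theta(\frac{1}{\varepsilon^2}\log\frac{1}{\varepsilon})$, the additional $O(b_0)$ preamble probes only affecting the constant, and the $s>1$ bounds follow because $s$ consecutive symbols span at most $\lceil s/b_0\rceil+1$ level-$0$ blocks. Finally, LZ78 encoding/decoding of a block of the (constant-in-$n$) length $b_0$ takes $O(b_0)$ time, so level $0$ --- together with forming $\hat p_{X^n}$ --- costs $\Theta(n)$, levels $\ell\ge 1$ cost $O(n\log n)$ as before, and the total is $O(n\log n)$.

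The hard part, and the one place where genuinely new work is needed, is the level-$0$ LZ estimate of Appendix~\ref{sec:preliminary_lemmas_lz}: that one may take $b_0=\Theta(\frac{1}{\varepsilon_0^2}\log\frac{1}{\varepsilon_0})$, with the implied constant depending only on $p_X$, so that $\Pr[\ell_{LZ}(X^{b_0})\ge b_0(H(p_X)+\varepsilon_0)]$ is as small as the recursion demands --- the exact analog of $\Pr[X^{b_0}\notin\cT_{\varepsilon_0}^{b_0}]\le\varepsilon_0^4$ in the known-source case. Whereas standard typicality is controlled directly by a Chernoff bound on the type, here one must analyze the LZ78 incremental-parsing rule on a short i.i.d.\ string: control the number of phrases it produces and the cost of encoding each, and show that the resulting code length concentrates below $b_0(H(p_X)+\varepsilon_0)$. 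I expect this non-asymptotic control of LZ78 to be the main obstacle; the rest of the argument is a transcription of the $p_X$-known proof.
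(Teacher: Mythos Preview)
Your overall reduction is exactly the paper's: the choice $\cC_0=\cC_{LZ}$ enters only through the level-$0$ failure probability, and once that is controlled the recursion of Appendix~\ref{sec:preliminary_lemmas} (Lemmas~\ref{lemma:prob_leftover}--\ref{lemma:prob_leftover_condn}, Corollary~\ref{corollary:prob_error}) and the locality bounds (Lemmas~\ref{lemma:localdecoder_expectedbits}, \ref{lemma:updating_expectedbits}) carry over verbatim. Your treatment of the preamble and rate bookkeeping is also fine and matches the paper.

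Where you miss the key idea is in the level-$0$ estimate itself. You propose to ``analyze the LZ78 incremental-parsing rule on a short i.i.d.\ string: control the number of phrases it produces and the cost of encoding each, and show that the resulting code length concentrates,'' and flag this as the main obstacle. The paper does no such probabilistic analysis of LZ78. Instead it uses a \emph{deterministic, pointwise} bound: Lemma~\ref{lemma:empiricalLZ} gives $\ell_{LZ}(x^{b_0})\le b_0 H_0(x^{b_0})+cb_0\frac{\log\log b_0}{\log b_0}$ for every string $x^{b_0}$, and Lemma~\ref{lemma:entropy_lipschitz} shows that whenever $x^{b_0}\in\cT_{\varepsilon_0}^{b_0}$ the empirical entropy $H_0(x^{b_0})$ differs from $H(p_X)$ by at most $(2+\max_a\log\frac{1}{p_X(a)})\varepsilon_0$. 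Taking $k_0=b_0(H(p_X)+\xi(\varepsilon_0,b_0))$ with $\xi$ the sum of these two slacks then yields the deterministic inclusion $\cT_{\varepsilon_0}^{b_0}\subseteq\cT_{\xi,LZ}^{b_0}$, hence
\[
\Pr[X^{b_0}\notin\cT_{\xi,LZ}^{b_0}]\le\Pr[X^{b_0}\notin\cT_{\varepsilon_0}^{b_0}]\le\varepsilon_0^4
\]
by the very same Chernoff bound used in the known-source proof. So LZ-atypicality is dominated by ordinary atypicality, and the ``hard part'' collapses to a two-line observation; no concentration result for the LZ78 codelength is needed. With this one substitution your proposal becomes the paper's proof.
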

	\begin{proof}
		We set $ k_0 = b_0(H(p_X)+ \xi(\varepsilon_0,b_0)) $, where
		\[
		\xi(\varepsilon_0,b_0) \coloneq \left( 2+\max_{a\in\cX}\log\frac{1}{p_X(a)} \right)\varepsilon_0 +\frac{c\log\log b_0}{\log b_0}. 
		\]
		In the above, $ c $ denotes the constant that appears in Lemma~\ref{lemma:empiricalLZ}.
		Clearly, $ k_0 = b_0(H(p_X)-\Theta(\varepsilon_0)) $. At level $ 0 $, we 
		use $ \cC_0 = \cC_{LZ}(b_0,H(p_X), \xi(\varepsilon_0,b_0)) $.
		The rest of the compression scheme is exactly as in Section~\ref{main}.
		From our choice of parameters and Lemma~\ref{lemma:empiricalLZ}, it is easy to see that $ \ell_{LZ}(x^{b_0}(j))\leq k_0-1 $ as long as $ x^{b_0}(j)\in \cT_{\varepsilon_0}^{b_0} $. Therefore, the calculations in the proof of Theorem~\ref{theorem:main_iid} can be invoked to complete the proof.
		
		The rate is $ H(p_X)+\Theta(\varepsilon_0) $, while $ \ravg(s) $ and $ \tavg(s) $ are (up to constants depending only on $ p_X $) the same as in Theorem~\ref{theorem:main_iid}. 
	\end{proof}

	\section{Proof of Theorem~\ref{thm:Ologlogn}}\label{sec:o_loglogn_scheme}

We now describe our algorithm which achieves  \emph{worst-case} local decodability and \emph{average} update efficiency of $ O(\log\log n) $. The basic idea is the following: We partition the message symbols into blocks of $ O(\log\log n) $ symbols each, and compress each block using a simple variable-length compression scheme. To locate the codeword corresponding to each block, we separately store a data structure that takes $ o(n) $ space. This data structure allows us to efficiently query certain functions of the message.

For ease of exposition, we assume that $p_X$ is known. Universality can be achieved by replacing the typical set compressor in our scheme with a universal compressor such as LZ78 (as we did in Section~\ref{sec:scheme_lz}). 

\begin{definition}[Rank]
	Let $ z^m $ denote an $ m $-length binary sequence. For any $ i\in[m] $, the \emph{rank}, $ \rank_i(z^m) $ denotes the number of $ 1 $'s in (or the Hamming weight of) $ z_1^{i} $.
	
	\label{defn:rankselect}
\end{definition}

Our construction for efficient local decoding and updates is based on the existence of compressed data structures that allow query-efficient computation of rank. Let $h(\cdot)$ denote the binary entropy function.
\begin{lemma}[\cite{grossi2013dynamic}]
	Let $ m $ be a sufficiently large integer, and fix $ 0<\alpha<1/2 $.
	Then, there exists a mapping $ 
	\fsucc^{(\alpha)}:\{0,1\}^m\to \{0,1\}^{m(h(\alpha)+o(1))} $ such that for every $ x^n\in\{0,1\}^m  $ with Hamming weight at most $ \alpha m $, 
	\begin{itemize}
		\item $ x^m $ can be recovered uniquely from $ \fsucc^{(\alpha)}(x^m) $
		\item For every $ 1\leq i\leq m $, the rank $ \rank_i(x^m) $ 
		can be computed by probing at most $ O(\log m) $ bits of $ \fsucc^{(\alpha)}(x^m) $ in the worst case.
	\end{itemize}
	\label{lemma:rankselect}
\end{lemma}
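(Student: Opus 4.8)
\emph{Proof plan.} I would prove this by exhibiting an explicit static succinct structure of the Raman--Raman--Rao / P\u{a}tra\c{s}cu type, blocked at two scales, and then bounding the number of bitprobes used by one rank query. Fix a \emph{miniblock} length $t\defeq\lceil\frac{1}{2}\log m\rceil$ and a \emph{superblock} length $S\defeq t\lceil\log m/\log\log m\rceil$, so each superblock holds $S/t=\Theta(\log m/\log\log m)$ miniblocks (assume the divisibilities for simplicity; rounding only adds lower-order terms). Cut $x^m$ into consecutive miniblocks; for the $j$-th one, with pattern $w\in\{0,1\}^t$ of Hamming weight $c_j$, record the pair $(c_j,o_j)$, where $c_j$ takes $\lceil\log(t+1)\rceil$ bits and the \emph{offset} $o_j$, the rank of $w$ in a fixed enumeration of the length-$t$ weight-$c_j$ strings, takes $\lceil\log\binom{t}{c_j}\rceil$ bits. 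Concatenate the offsets, superblock by superblock, into one stream $O$. Additionally store, per superblock: (i) the number of $1$'s of $x^m$ strictly before it ($\lceil\log(m+1)\rceil$ bits); (ii) a pointer into $O$ to where that superblock's offsets begin ($O(\log m)$ bits); (iii) for each of its $S/t$ miniblocks, the class $c_j$ and the number of $1$'s strictly before that miniblock \emph{within} the superblock ($O(\log\log m)$ bits each). Each superblock's record has a fixed length, so it, and every field in it, sits at a position computable from the superblock index. If $x^m$ has weight $>\alpha m$, output the all-zero string of the common length $L$ fixed below; this is permitted since correctness is only required on sparse inputs. The output $\fsucc^{(\alpha)}(x^m)$ is the concatenation of all the above, zero-padded to length $L$.

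\emph{Size and recovery.} The fields in (iii) number $\Theta(m/t)$, each of $O(\log\log m)$ bits, totalling $O(m\log\log m/\log m)=o(m)$; the per-superblock records number $\Theta(m/S)$ of $O(\log m)$ bits, totalling $O(m\log m/S)=o(m)$. The stream $O$ has $\sum_j\lceil\log\binom{t}{c_j}\rceil\le m/t+\log\prod_j\binom{t}{c_j}$ bits; applying $\binom{a}{c}\binom{b}{d}\le\binom{a+b}{c+d}$ repeatedly gives $\prod_j\binom{t}{c_j}\le\binom{m}{\sum_j c_j}\le\binom{m}{\lfloor\alpha m\rfloor}\le 2^{mh(\alpha)}$ (using $\sum_j c_j\le\alpha m<m/2$ and monotonicity of $\binom{m}{\cdot}$ below $m/2$). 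Hence $L=m(h(\alpha)+o(1))$ works. Recovery of $x^m$ is immediate: read every class and offset and unrank each miniblock; the map is injective on sparse inputs, so the reconstruction is unique.

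\emph{Rank query.} To compute $\rank_i(x^m)$, first determine (arithmetic on $i$ only) the superblock $p$ and the miniblock $q$ within $p$ containing position $i$. Probe: the preceding-$1$'s counter of superblock $p$ ($O(\log m)$ bits); the pointer $P_1[p]$ ($O(\log m)$ bits); the within-superblock counter of miniblock $q$ ($O(\log\log m)$ bits); and the classes of miniblocks $0,\dots,q$ of $p$ ($\le S/t+1=O(\log m/\log\log m)$ fields of $O(\log\log m)$ bits, i.e.\ $O(\log m)$ bits in total). From $c_0,\dots,c_{q-1}$ compute $\sum_{r<q}\lceil\log\binom{t}{c_r}\rceil$, which locates miniblock $q$'s offset relative to $P_1[p]$; probe those $\le t=O(\log m)$ offset bits, unrank to recover the miniblock pattern, and count $1$'s among its first $(i\bmod t)$ positions. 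Summing the superblock counter, the within-superblock counter and this partial count yields $\rank_i(x^m)$, with $O(\log m)$ bitprobes in the worst case.

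\emph{Main obstacle.} The substance is entirely in the parameter balancing: $t$ and $S$ must be chosen so that simultaneously the auxiliary arrays total $o(m)$, intra-superblock navigation (reading up to $S/t$ class fields to locate a miniblock inside $O$) costs only $O(\log m)$ probes, and the offset stream still compresses to $mh(\alpha)+o(m)$ via $\prod_j\binom{t}{c_j}\le\binom{m}{\sum_j c_j}$; the choice $t=\Theta(\log m)$, $S=\Theta(\log^2 m/\log\log m)$ threads all three, and everything else is bookkeeping. This reproduces the static core of the structure in~\cite{grossi2013dynamic}, whose dynamic version additionally supports updates that are not needed for this lemma.
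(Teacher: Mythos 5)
The paper does not prove this lemma: it is stated as a black-box result and attributed to \cite{grossi2013dynamic}, so there is no in-paper proof to compare against. Your reconstruction is a correct and complete proof along the lines of the classical static succinct rank dictionary (Raman--Raman--Rao / P\u{a}tra\c{s}cu), which is precisely the static core that \cite{grossi2013dynamic} dynamizes; your closing remark that the dynamic features of the cited work are unnecessary for this lemma is also accurate. I verified the arithmetic: with $t=\Theta(\log m)$ and $S=\Theta(\log^2 m/\log\log m)$, the per-superblock records contribute $O(m\log m/S)=o(m)$ bits, the per-miniblock class and within-superblock counters contribute $O(m\log\log m/\log m)=o(m)$ bits, and the offset stream is at most $m/t+\log\prod_j\binom{t}{c_j}\le o(m)+\log\binom{m}{\lfloor\alpha m\rfloor}\le mh(\alpha)+o(m)$ bits by the Vandermonde superadditivity $\binom{a}{c}\binom{b}{d}\le\binom{a+b}{c+d}$ and the bound $\binom{m}{\lfloor\alpha m\rfloor}\le 2^{mh(\alpha)}$ valid for $\alpha<1/2$; the total is $m(h(\alpha)+o(1))$ as required. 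For a rank query you read the superblock counter and pointer ($O(\log m)$ bits), the within-superblock counter ($O(\log\log m)$ bits), the $O(\log m/\log\log m)$ class fields of $O(\log\log m)$ bits each (so $O(\log m)$ bits), and at most $t=O(\log m)$ offset bits of the target miniblock, giving $O(\log m)$ worst-case bitprobes. One small stylistic point: the within-superblock counter of miniblock $q$ gives the rank contribution $\sum_{r<q}c_r$ directly, but you still need the individual classes $c_0,\dots,c_{q-1}$ to compute $\sum_{r<q}\lceil\log\binom{t}{c_r}\rceil$ and locate miniblock $q$'s variable-length offset, so keeping both is necessary, not redundant -- worth a sentence to forestall the impression of double bookkeeping.
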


\begin{figure}
	\begin{center}
		\includegraphics[width=12cm]{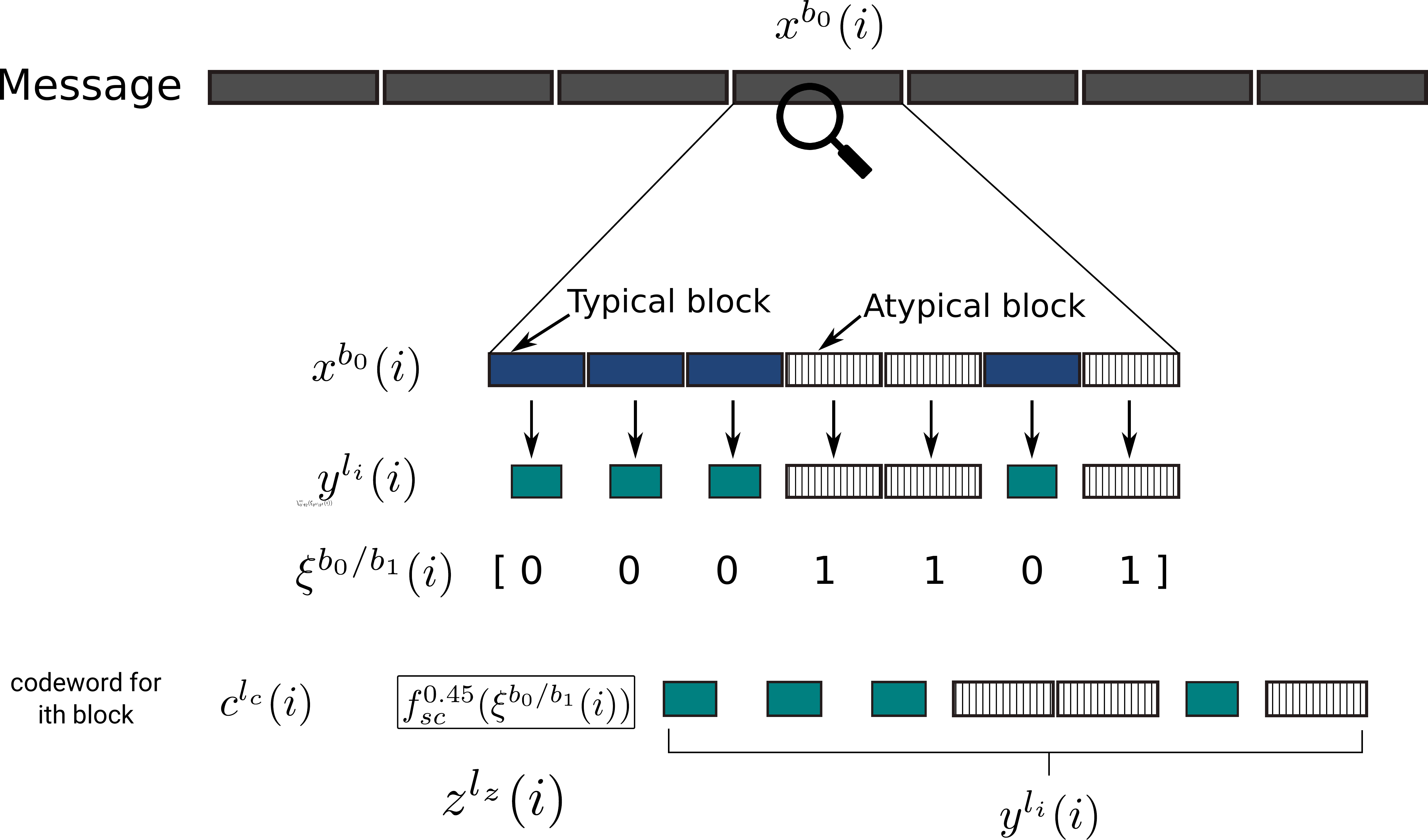}
		\caption{Compression of each block as described in Section~\ref{sec:o_loglogn_encoding}. Typical subblocks are compressed to $ \approx b_1H(p_X) $ bits, while atypical subblocks are stored without compression. The address of $ y^{l_i}(i) $ on disk can be easily computed using rank and select operations on $ \xi^{b_0/b_1}(i) $.}
		\label{fig:Ologlogn_scheme}
	\end{center}
\end{figure}

\subsubsection{Encoding}\label{sec:o_loglogn_encoding}
We partition the source sequence $ x^n $ into blocks of $ b_0=O(\log n) $ symbols each: $ x^n=(x^{b_0}(1),\ldots,x^{b_0}(n/b_0)) $. We further subdivide each block into subblocks of $ b_1 $ symbols each, {\it{i.e.}}, $ x^{b_0}(i) $ is partitioned into $ (x^{b_1}(i,1),\ldots,x^{b_1}(i,b_0/b_1)) $. The symbols $ x^{b_0}(i) $'s are encoded independently of each other using a fixed length code which has a vanishingly small probability of error. The codeword for each block consists of two parts:
\begin{itemize}
	\item Corresponding to every $ x^{b_1}(i,j) $, we generate $ y^{l_{ij}}(i,j) $, which is given by
	\[
	y^{l_{ij}}(i,j) = \begin{cases}
	\mathrm{index}(x^{b_1}(i,j);\cT_{\varepsilon_0}^{b_1}) & \text{if } x^{b_1}(i,j)\in \cT_{\varepsilon_0}^{b_1}\\
	x^{b_1}(i,j) &\text{otherwise.}
	\end{cases}
	\]
	Observe that the above is not a fixed-length code. The length of the $ (i,j) $th codeword $ l_{ij} $ is equal to $ \log |\cT_{\varepsilon_0}^{b_1}| $ if $ x^{b_1}(i,j) $ is typical and $ b_1 $ otherwise.
	Additionally, let 
	\[
	\xi(i,j) = \begin{cases}
	0 & \text{if } x^{b_1}(i,j)\in \cT_{\varepsilon_0}^{b_1}\\
	1 &\text{otherwise.}
	\end{cases}
	\]
	be an indicator of whether the $ (i,j) $th block $ x^{b_1}(i,j) $ is atypical.
	Let $ \xi^{b_0/b_1}(i)=(\xi(i,1),\ldots \xi(i,b_0/b_1)) $
	and define
	\[
	z^{\ell_{z}}(i)\defeq \begin{cases}
	\fsucc^{(\varepsilon_1)}(\xi^{b_0/b_1}(i)) &\text{if }\xi^{b_0/b_1} \text{ has Hamming weight at most }\varepsilon_0b_0/b_1 \\
	0^{\ell} &\text{otherwise,}
	\end{cases}
	\]
	where $ f_{sc} $ is the compressed data structure in Lemma~\ref{lemma:rankselect}.
	Let $ \ell_y\defeq (1-2\varepsilon_0)(H(p_X)+\varepsilon)b_0+2\varepsilon_0 b_0\log|\cX| $ and $ l_i'=\ell_y-\sum_{j}l_{ij} $
	\[
	y^{\ell_y}(i) \defeq \begin{cases}
	(y^{l_{i1}}(i,1),\ldots,y^{l_{ib_0/b_1}}(i,b_0/b_1),0^{l_i'}) &\text{if }\sum_{j}l_{ij}\leq \ell_y\\
	0^{\ell_y} &\text{otherwise}.
	\end{cases}
	\]
	The second case would correspond to an error.
	\item The codeword $ c^{\ell_c}(i) $ corresponding to $ x^{b_0}(i) $ is a sequence of length $\ell_c= \ell_y+\ell_z $, and is equal to the concatenation of $ z^{\ell_z}(i) $ and $ y^{\ell_y}(i) $.
\end{itemize}

\begin{example}[Figure~\ref{fig:Ologlogn_scheme}]
Consider the encoding of each $b_0$-length block as illustrated in Figure~\ref{fig:Ologlogn_scheme}. In this example, $b_0/b_1=7$. Subblocks $4,5,7$ are atypical. Therefore, $y^{l_{ij}}(i,j)=x^{b_1}(i,j)$ and $l_{ij}=n_1$  for $j=4,5,7$. The remaining subblocks are compressed using the typical set compressor. The indicator vector $\xi^{6}(i)=[0001101]$, and is compressed to get $z^{\ell_z}(i)$ using the scheme in Lemma~\ref{lemma:rankselect}. The overall codeword for block $i$ is the concatenation of $z^{\ell_z}(i)$  and $y^{l_{ij}}(i,j)$, $1\leq j\leq 7$.
\end{example}
\subsubsection{Local decoding of a subblock}
Our scheme allows us to locally decode an entire $b_1$-length subblock and local recovery of a single symbol is performed by locally decoding the subblock containing it.

Suppose that we want to locally decode $ x^{b_1}(i,j) $. Our local decoder works as follows:
\begin{itemize}
	\item Compute $ n_{atyp} $, the number of atypical subblocks in the first $ j $ subblocks of the $ i $th block. This is equal to $ \rank_j(\xi^{b_0/b_1}(i)) $ and can be obtained by probing $ O(\log(b_0/b_1)) $ bits of $ z^{\ell_{z}}(i) $. 
	\item Compute $ \xi(i,j) $ from $ z^{\ell_z}(i) $. This could be recovered by first decoding $ \rank_{j+1}(\xi^{b_0/b_1}(i)) $ and subtracting $ \rank_j(\xi^{b_0/b_1}(i)) $ from this. This tells us whether the block we we want to decode is atypical.
	\item Given the above information, it is easy to decode the $ (i,j) $th block. Let $ k_1 = n_{atyp} b_1+(j-1-n_{atyp})b_1(H(p_X+\varepsilon_0)) $.
	\[
	\hat{y}^{\ell_{ij}}(i,j) = \begin{cases}
	y^{k_1+b_1(H(p_X+\varepsilon_0)}_{k_1} &\text{if }\xi(i,j)=0\\
	y^{k_1+b_1}_{k_1} &\text{otherwise.}
	\end{cases}
	\]
	The estimate of the message block $ x^{b_1}(i,j) $ is obtained by decompressing $ \hat{y}^{\ell_{ij}}(i,j) $.
\end{itemize}
Let us revisit the previous example.
\begin{example}[Figure~\ref{fig:Ologlogn_scheme}]
 Figure~\ref{fig:Ologlogn_scheme}. Suppose that we are interested in recovering $x^{n_1}(i,5)$. 

The local decoder first finds $\rank_4(z^{\ell_z}(i))=1$ and $\rank_5(z^{\ell_z}(i))=2$ using the probing scheme in Lemma~\ref{lemma:rankselect}. This reveals that $x^{n_1}(i,5)$ is atypical, and one out of four subblocks prior to $x^{n_1}(i,5)$ is atypical. The starting location of $x^{n_1}(i,5)$ in $y^{\ell_i}(i)$ is $m\defeq 3n_1(H(p_X)+\varepsilon)+n_1+1$. The desired block is recoverable from $y_{m}^{m+n_1-1}(i)$.
\end{example}

\subsubsection{Update algorithm}
We consider update of $ x^{b_1}(i,j) $ with a new symbol denoted $ \widetilde{x}^{b_1} $. Let 
$$ \tilde{y}^{\ell}=\begin{cases}
(\mathrm{index}(\widetilde{x}^{b_1});\cT_{\varepsilon_0}^{b_1}) & \text{if } \widetilde{x}^{b_1}\in \cT_{\varepsilon_0}^{b_1}\\
\widetilde{x}^{b_1} &\text{otherwise.}
\end{cases} $$ 
The update algorithm works as follows:
\begin{itemize}
	\item Compute $ n_{atyp} $ and $ x^{b_1}(i,j) $ by running the local decoding algorithm above.
	\item If both $ \widetilde{x}^{b_1} $ and $ x^{b_1}(i,j) $ are typical (or both atypical), then updating the codeword is trivial as it only requires replacing $ y^{\ell_{ij}}(i,j) $ with $ \tilde{y}^{\ell} $. In this case, only $ O(\log b_0) $ bits need to be read and written in order to update the codeword.
	\item If only one of $ \widetilde{x}^{b_1} $ and $ x^{b_1}(i,j) $ is typical, then the entire code block $ c^{\ell_c}(i) $ is rewritten with the encoding of 
	$$ \widetilde{x}^{b_0}\defeq (x^{b_1}(i,1),\ldots,\widetilde{x}^{b_1},\ldots,x^{b_1}(i,b_0/b_1)). $$ 
	In this case, a total of $ O(b_0) $ bits need to be read and modified to effect the update.
\end{itemize}

\subsection{Proof of Theorem~\ref{thm:Ologlogn}}
	We choose $ b_0 = c_0\log n $ and $ b_1=c_1\log\log n $, where $ c_0 $ and $ c_1 $ are constants that need to be chosen appropriately.
	The probability that a subblock is atypical is $p_0= 2^{-\Theta(\varepsilon_0^2b_1)} $. We choose $ c_1 $ so that this probability is at most $ 1/\log^2 n $. Recall that a $ b_0 $-block is in error if more than $ 2\varepsilon_0 $ fraction of the subblocks are atypical. Using Chernoff bound, this is at most $p_1= 2^{-\Omega(b_0/b_1)} $. We can choose $ c_0 $ so as to ensure that $ p_1 $ is at most $ n^{-2} $. The probability that the overall codeword is in error is at most $ np_1 = o(1) $.
	
	We therefore have a fixed-length compression scheme with a vanishingly small probability of error. The worst-case local decodability is $\rwc(1)= \Theta(b_1) $. Updating a subblock might lead to a typical block becoming atypical (or vice versa). Therefore, the average update efficiency is 
	$$ \tavg(1) =  (1-p_1)\Theta(b_1) + p_1\Theta(b_0) = O(\log\log n).$$
	This gives the first part of the theorem.
	
	Any $ s $-length substring is contained in at most $ \lceil s/b_1\rceil +1 $ subblocks of size $b_1$. We can therefore locally decode/update any $ m $-length substring by separately running the local decoding/update algorithm for each of the $ \lceil s/b_1\rceil +1 $ subblocks. Therefore,
	\begin{align*}
	\rwc(s) &\leq \left( \left\lceil \frac{s}{b_1} \right\rceil +1 \right) \rwc(1)\\
	&\leq \begin{cases}
	2\rwc(1) &\text{if }s\leq b_1\\
	s(H(p_X)+\varepsilon)+2\rwc(1) &\text{otherwise.}
	\end{cases}
	\end{align*}
	The calculation of $ \twc(s) $ proceeds identically.
	This completes the proof of the second part of Theorem~\ref{thm:Ologlogn}.
\qed

	\section{Concluding remarks}\label{conclusion}
	In this paper, we gave an explicit, computationally efficient entropy-achieving scheme that achieves constant \emph{average} local decodability and update efficiency. Our scheme also allows efficient local decoding and update of contiguous substrings. For  $s=\Omega(1/\varepsilon^2)$, both $\ravg(s)$  and $\tavg(s)$ grow as $\Theta(s)$, where the implied constant is independent of $n$ and $\varepsilon$.

	It still remains an open problem as to whether $ (\rwc(1),\twc(1))=(\Theta(1),\Theta(1)) $ is achievable. We described a scheme with $ (\rwc(1),\tavg(1))=(O(\log\log n),O(\log\log n)) $. Even showing that $(\rwc(1),\twc(1))=(O(\log\log n),O(\log\log n))$ is achievable would be an interesting step in this direction.
	
    The careful reader might have noticed that the probability of \emph{local decoding} is nonzero, but less than or equal to the probability of global decoding, \emph{i.e.,} $\Pr[\hat{X}_i\neq X_i] \leq \Pr[\hat{X}^n\neq X^n]$. This is because the local decoder outputs the correct value of $X_i$ if $X^n$ can be recovered from $C^{nR}$. While~\cite{mazumdar2015local,tatwawadi18isit_universalRA}  achieve $\rwc(1)=\Theta(1)$, the probability of local decoding is nonzero but vanishing in $n$. Indeed, if we have a compressor that achieves zero error probability of local decoding of any single symbol, then this implies that the probability of error of global decoding is also zero (since we can run the local decoder to recover each of the $n$ symbols).
    
    It is worth pointing out that the probability of error of local decoding can influence $\rwc(s)$ significantly.
    While our scheme achieves $\ravg(1)=\tavg(1)=\Theta(1)$, we can only guarantee $\rwc(1)=\twc(1)=O(n)$.
    However, if we can tolerate a higher probability of error of local decoding (without compromising on the probability of error of global decoding), then we can achieve a smaller $\rwc(1)$, as it suffices to have the local decoder only probe the first few levels.  Specifically, if we desire $\Pr[\hat{X}_i\neq X_i]\leq \rho$ for all $1\leq i\leq n$ and some $\rho>0$, then using Lemma~\ref{lemma:prob_leftover}, we can guarantee\footnote{We do not explicitly mention the dependency on $\epsilon_0$ here.} $\rwc(1)=2^{O(\log^2(1/\rho))}$. In particular, if $\rho=\Theta(1)$, then 
	we can achieve $(\rwc(1),\tavg(1))=(\Theta(1),\Theta(1))$.
	
	Although we did not optimize the hidden constants in Theorem~\ref{theorem:main_iid}, the dependence of $\ravg(s),\tavg(s)$ on $ \varepsilon $ cannot be improved by using tighter bounds. This is because we used a lossless compression scheme at level $ 0 $, and we require $ b_0=\Omega(\frac{1}{\varepsilon^2}\log\frac{1}{\varepsilon}) $ to guarantee concentration. Mazumdar {\it{et al.}}~\cite{mazumdar2015local} used a slightly different approach, and gave a two-level construction with a lossy source code at the zeroth level. This allowed them to achieve $ \rwc(1)=\Theta(\frac{1}{\varepsilon}\log\frac{1}{\varepsilon}) $. Finding the right dependence of $ (\rwc(1),\twc(1)) $ or $ (\ravg(1),\tavg(1)) $ on $ \varepsilon $ is an interesting open question. 

\appendices

	\section{Preliminary lemmas for the proof of Theorem~\ref{theorem:main_iid}}\label{sec:preliminary_lemmas}

\begin{lemma}
	Let $ X^b$ be a $ b $-length i.i.d.\ sequence where the components are drawn according to $ p_X $. For any positive $ \alpha, $ and $ 0<\varepsilon<1/2 $, if $$ b\geq 3(
	\alpha+\log|\cX|)\left(\max_{a\in\cX}\frac{1}{p_X(a)}\right)\left(\frac{1}{\varepsilon^2}\log\frac{1}{\varepsilon}\right), $$ then
	\[
	\Pr[X^b\notin \cT^b_\varepsilon] \leq \varepsilon^\alpha.
	\]
	Moreover,
	\[
	|\cT_{\varepsilon}^b| \leq 2^{b(H(p_X)+\varepsilon)}.
	\]
	\label{lemma:typicality_stuff}
\end{lemma}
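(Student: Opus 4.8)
The plan is to establish the two assertions independently, each by a standard argument; the only delicate point is matching constants in the first part.

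For the tail bound, I would fix $ a\in\cX $ and note that $ b\,\hat{p}_{X^b}(a)=\sum_{i=1}^{b}1_{\{X_i=a\}} $ is a sum of $ b $ i.i.d.\ Bernoulli$ (p_X(a)) $ variables with mean $ \mu_a\defeq b\,p_X(a) $. Applying the multiplicative Chernoff bound gives
\[
\Pr\!\left[|\hat{p}_{X^b}(a)-p_X(a)|>\varepsilon p_X(a)\right]=\Pr\!\left[|b\,\hat{p}_{X^b}(a)-\mu_a|>\varepsilon\mu_a\right]\le 2\exp\!\left(-\frac{\varepsilon^2\mu_a}{3}\right),
\]
and a union bound over the $ |\cX| $ symbols, together with $ p_X(a)\ge p_{\min}\defeq\min_{a}p_X(a) $, yields $ \Pr[X^b\notin\cT_\varepsilon^b]\le 2|\cX|\,2^{-p_{\min}b\varepsilon^2(\log e)/3} $. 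The remaining work is to verify that the hypothesis on $ b $ drives the right-hand side below $ \varepsilon^\alpha=2^{-\alpha\log(1/\varepsilon)} $, i.e.\ that $ p_{\min}b\varepsilon^2(\log e)/3\ge\alpha\log(1/\varepsilon)+\log|\cX|+1 $; here I would use $ \varepsilon<1/2 $ (so $ \log(1/\varepsilon)>1 $) to absorb the additive term $ \log|\cX|+1 $ into a multiple of $ (\alpha+\log|\cX|)\log(1/\varepsilon) $, at which point the stated bound $ b\ge 3(\alpha+\log|\cX|)(\max_a 1/p_X(a))\,\varepsilon^{-2}\log(1/\varepsilon) $ suffices (up to the harmless $ \log e $ factor). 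I expect this constant bookkeeping to be the main---though minor---obstacle.

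For the cardinality bound I would use the usual type-counting argument: for $ x^b\in\cT_\varepsilon^b $, since $ \hat{p}_{x^b}(a)\le(1+\varepsilon)p_X(a) $,
\[
p_X^{\otimes b}(x^b)=\prod_{a\in\cX}p_X(a)^{\,b\,\hat{p}_{x^b}(a)}=2^{-b\sum_{a}\hat{p}_{x^b}(a)\log(1/p_X(a))}\ge 2^{-b(1+\varepsilon)H(p_X)},
\]
and then, from $ 1=\sum_{x^b}p_X^{\otimes b}(x^b)\ge|\cT_\varepsilon^b|\cdot\min_{x^b\in\cT_\varepsilon^b}p_X^{\otimes b}(x^b) $, conclude $ |\cT_\varepsilon^b|\le 2^{b(1+\varepsilon)H(p_X)}\le 2^{b(H(p_X)+\varepsilon)} $ (the last inequality bounding $ \varepsilon H(p_X) $; in the regime where this is not literally true one simply rescales $ \varepsilon $ by a $ p_X $-dependent constant, which does not affect any later use). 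Since both steps are routine, I do not anticipate any serious difficulty.
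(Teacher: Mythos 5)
Your proof is correct and takes the same route the paper points to: the paper gives no argument of its own, only a one-line deferral to ``Chernoff and union bounds'' for the first part and a citation to El Gamal--Kim for the second, and your Chernoff-plus-union-bound and type-counting arguments are exactly what those references supply. Both constant-factor caveats you flag are real and are silently glossed over by the paper. For the tail bound, the multiplicative Chernoff bound $2\exp(-\varepsilon^2\mu_a/3)$ gives, after the union bound, $\Pr[X^b\notin\cT_\varepsilon^b]\le 2|\cX|\,2^{-(\log e)\,p_{\min}b\varepsilon^2/3}$, and matching this to $\varepsilon^\alpha$ with the stated $b$ requires $(\log e-1)(\alpha+\log|\cX|)\ge 1$ (so the literal constant $3$ only works once $\alpha+\log|\cX|$ is a bit above $2$; this is harmless since the paper only invokes the lemma with $\alpha\in\{4,8\}$). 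For the cardinality bound, your derivation $|\cT_\varepsilon^b|\le 2^{b(1+\varepsilon)H(p_X)}$ is the correct one for the relative (strong) typicality used in this paper's notation; the paper's stated $2^{b(H(p_X)+\varepsilon)}$ is literally equivalent to it only when $H(p_X)\le 1$, and for larger alphabets one indeed has to absorb the factor $H(p_X)$ into a $p_X$-dependent rescaling of $\varepsilon$, exactly as you note. Neither caveat affects any downstream use of the lemma, so your proof is sound and, if anything, more careful than what the paper provides.
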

\begin{proof}
	The first part can be easily derived using Chernoff and union bounds. The second part is a standard property of typical sets. See, {\it{e.g.}}, the book by El Gamal and Kim~\cite{elgamal2011network} for a proof.
\end{proof}

\begin{lemma}
	Let $ \delta_{i-1\to i} $ denote the probability that the message block from level $ i-1 $, say $ x^{n_{i-1}}(j,i-1) $, is not the all-$\diamond $ block. If $\varepsilon_0<1/2$, $ \delta_{i-1\to i}\leq \varepsilon_i^4 $, and
	$$ b_0\geq 3(
	8+\log|\cX|)\left(\max_{a\in\cX}\frac{1}{p_X(a)}\right)\left(\frac{1}{\varepsilon_0^2}\log\frac{1}{\varepsilon_0}\right), $$
	Then,
	\begin{equation}
	\delta_{i\to i+1} \leq \varepsilon_i^{\beta 2^i},
	\label{eq:prob_leftover_1}
	\end{equation}
	where 
	\begin{equation}\label{eq:defn_beta}
	\beta = 9(
	8+\log|\cX|)\left(\max_{a\in\cX}\frac{1}{p_X(a)}\right)\left(\frac{1}{\varepsilon_0}\log\frac{1}{\varepsilon_0}\right)
	\end{equation}
	 This implies that
	\begin{equation}
	\delta_{i\to i+1} \leq \varepsilon_{i+1}^{4}.
	\label{eq:prob_leftover_2}
	\end{equation}
	\label{lemma:prob_leftover}
\end{lemma}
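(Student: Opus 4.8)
The plan is to recognize $\delta_{i\to i+1}$ as a one-sided binomial tail and then feed the hypothesis $\delta_{i-1\to i}\le\varepsilon_i^{4}$ into it, keeping careful track of the geometric recursions $\varepsilon_\ell=\varepsilon_0/2^\ell$ and $b_\ell=4^\ell b_0$. First I would unpack the definition. The level-$i$ block $x^{n_i}(j,i)$ is the concatenation of the $b_i$ level-$(i-1)$ blocks sitting inside it, and by the encoding rule it is replaced by $\diamond^{n_i}$ exactly when at most $\varepsilon_i b_i$ of those $b_i$ sub-blocks are non-$\diamond$ after level $i-1$; hence it remains non-$\diamond$ iff strictly more than $\varepsilon_i b_i$ of them are. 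Because the source is i.i.d.\ and these $b_i$ sub-blocks occupy disjoint coordinate ranges, the post-level-$(i-1)$ state of each sub-block depends only on its own $n_{i-1}$ source symbols, so the indicators ``sub-block $k$ is non-$\diamond$'' are i.i.d.\ Bernoulli with common parameter $\delta_{i-1\to i}$. Therefore $\delta_{i\to i+1}=\Pr[\,\mathrm{Bin}(b_i,\delta_{i-1\to i})>\varepsilon_i b_i\,]$.

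Next I would bound this tail by the union bound over all $\varepsilon_i b_i$-subsets together with $\delta_{i-1\to i}\le\varepsilon_i^{4}$:
\[
\delta_{i\to i+1}\;\le\;\binom{b_i}{\varepsilon_i b_i}(\delta_{i-1\to i})^{\varepsilon_i b_i}\;\le\;\left(\frac{e}{\varepsilon_i}\right)^{\varepsilon_i b_i}\varepsilon_i^{4\varepsilon_i b_i}\;=\;(e\,\varepsilon_i^{3})^{\varepsilon_i b_i}.
\]
Now I would substitute $\varepsilon_i b_i=\varepsilon_0 b_0\,2^{i}$ (which follows from the two recursions), so that $\delta_{i\to i+1}\le(e\varepsilon_i^{3})^{\varepsilon_0 b_0 2^{i}}$, and convert the right-hand side to a power of $\varepsilon_i$: since $\log_{\varepsilon_i}(e\varepsilon_i^{3})=3-\frac{1}{\ln(1/\varepsilon_i)}$, we get $\delta_{i\to i+1}\le\varepsilon_i^{\varepsilon_0 b_0(3-1/\ln(1/\varepsilon_i))\,2^{i}}$. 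Using $\varepsilon_i\le\varepsilon_0/2<1/4$ for $i\ge1$, the factor $3-1/\ln(1/\varepsilon_i)$ is bounded below by a constant exceeding $2$, and the assumed lower bound on $b_0$ (which is exactly the hypothesis of Lemma~\ref{lemma:typicality_stuff} with $\alpha=8$, and which already gives the base case $\delta_{0\to1}\le\varepsilon_0^{8}\le\varepsilon_1^{4}$) makes $\varepsilon_0 b_0$ large enough that the exponent is at least $\beta 2^{i}$ with $\beta$ as in the statement. Finally, the implication $\delta_{i\to i+1}\le\varepsilon_i^{\beta 2^{i}}\Rightarrow\delta_{i\to i+1}\le\varepsilon_{i+1}^{4}$ is immediate: $\varepsilon_{i+1}=\varepsilon_i/2$ and $\beta 2^{i}\ge 2\beta$ is far larger than $4$, so $\varepsilon_i^{\beta 2^{i}}\le\varepsilon_i^{4}/16=\varepsilon_{i+1}^{4}$ for $\varepsilon_i<1/4$.

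The one-sided binomial bound is routine; the delicate part is the constant bookkeeping in the last step --- verifying that the prescribed lower bound on $b_0$ genuinely absorbs both the combinatorial overhead $\binom{b_i}{\varepsilon_i b_i}\le(e/\varepsilon_i)^{\varepsilon_i b_i}$ (equivalently, the shift of the base from $\varepsilon_i^{3}$ to $e\varepsilon_i^{3}$) and the geometric growth $2^{i}$, uniformly over $i\ge1$, so that the exponent comes out as $\beta 2^{i}$ rather than something slightly smaller. The tightest instance is $i=1$, where $3-1/\ln(1/\varepsilon_i)$ is closest to its floor; there one can, if needed, use the sharper base-case estimate $\delta_{0\to1}\le\varepsilon_0^{8}$ in place of $\delta_{0\to1}\le\varepsilon_1^{4}$ to recover the missing slack. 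Once $\beta$ is fixed the remaining manipulations are elementary.
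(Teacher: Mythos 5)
Your proof follows the same route as the paper — bound $\delta_{i\to i+1}$ by the dominant term of the binomial tail, $\binom{b_i}{\varepsilon_i b_i}\delta_{i-1\to i}^{\varepsilon_i b_i}$, feed in $\delta_{i-1\to i}\le\varepsilon_i^4$, and use $\varepsilon_i b_i=\varepsilon_0 b_0 2^i$ — and in fact you are \emph{more} careful than the paper. The paper passes directly from $\binom{b_i}{\varepsilon_ib_i}\delta_{i-1\to i}^{\varepsilon_ib_i}$ to $(\delta_{i-1\to i}/\varepsilon_i)^{\varepsilon_ib_i}$, which tacitly uses $\binom{b_i}{\varepsilon_ib_i}\le\varepsilon_i^{-\varepsilon_ib_i}$; but this is the wrong direction, since $\binom{n}{k}\ge(n/k)^k$. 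The correct bound carries the extra factor $e^{\varepsilon_i b_i}$ that you retain.

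There is, however, a gap in your constant bookkeeping at the point you flag as delicate. You want $(3-1/\ln(1/\varepsilon_i))\,\varepsilon_0 b_0\ge\beta$. The hypothesis only guarantees $\varepsilon_0 b_0\ge\beta/3$, and $3-1/\ln(1/\varepsilon_i)<3$ for every $\varepsilon_i\in(0,1)$; so when $b_0$ sits exactly at its prescribed lower bound the inequality fails, and the extra slack you invoke is not actually there. The fallback you propose — using $\delta_{0\to 1}\le\varepsilon_0^8$ rather than $\varepsilon_1^4$ — does not rescue this either, because the lemma is stated as a single inductive step whose hypothesis is fixed at $\delta_{i-1\to i}\le\varepsilon_i^4$; you cannot strengthen the hypothesis only for $i=1$ without changing the statement. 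The honest fix is simply to weaken the intermediate conclusion to $\delta_{i\to i+1}\le\varepsilon_i^{\beta'2^i}$ with a marginally smaller constant $\beta'$ (e.g.\ $2\beta/3$, obtained from $e\varepsilon_i^3\le\varepsilon_i^2$ for $\varepsilon_i\le 1/4$), or to slightly enlarge the numerical prefactor in the lower bound on $b_0$. Either change leaves the downstream conclusion \eqref{eq:prob_leftover_2} intact, since $\varepsilon_i^{\beta'2^i}\le\varepsilon_{i+1}^4$ needs only $\beta'2^i\ge 8$. So: same approach, the correct binomial bound, a valid final conclusion, but the claim that the exponent comes out as exactly $\beta 2^i$ with $\beta$ as defined is not justified — a flaw you share with the paper, though for opposite reasons (the paper silently drops the $e$; you keep it but then assert more slack than the hypothesis provides).
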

\begin{proof}
	Recall that a message block from level $ i $ is not a $ \diamond  $-block only if there are more than $ \varepsilon_ib_i $ non-$ \diamond  $-blocks from level $ i-1 $. Therefore,
	\begin{align}
	\delta_{i\to i+1}&\leq \nchoosek{b_i}{\varepsilon_ib_i}\delta_{i-1\to i}^{\varepsilon_i b_i}&\notag\\
	& \leq \left(\frac{\delta_{i-1\to i}}{\varepsilon_i}\right)^{\varepsilon_ib_i}&\notag\\
	&\leq \varepsilon_i^{3\varepsilon_ib_i}& \label{eq:prob_leftover_prf1}
	\end{align}
	However,
	\begin{align*}
	\varepsilon_ib_i &=\varepsilon_0 2^{-i}b_02^{2i}= \varepsilon_0b_0 2^i.
	\end{align*}
	Using the lower bound for $ b_0 $ in the above equation and substituting in (\ref{eq:prob_leftover_prf1}) gives us (\ref{eq:prob_leftover_1}). Inequality (\ref{eq:prob_leftover_2}) follows from (\ref{eq:prob_leftover_1}) by observing that $ \varepsilon_0<1/2 $.
\end{proof}

The probability of error therefore decays quasiexponentially in $ n $ as described by the following corollary.
\begin{corollary}
	Suppose we use the parameters as defined in Lemma~\ref{lemma:prob_leftover}, and choose $ b_i=2^{2i}b_0 $ and $ \varepsilon_i=\varepsilon_0/2^i $. Then, the probability that the encoder makes an error, i.e., that the message is not compressed within $ \ell_{\max} $ levels, is $ 2^{-\Omega(\ell_{\max}2^{\ell_{\max}})} $. If the number of levels is $ \Theta(\sqrt{\log n}) $, then this is $ 2^{-2^{\Omega(\sqrt{\log n})}} $.
	\label{corollary:prob_error}
\end{corollary}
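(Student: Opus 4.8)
The plan is to reduce the claim, via a union bound, to the single‑block leftover probability $\delta_{\ell_{\max}\to\ell_{\max}+1}$ at the topmost level and then apply the recursive estimate of Lemma~\ref{lemma:prob_leftover}. The encoder fails exactly when, after the level‑$\ell_{\max}$ pass, the modified sequence is not the all‑$\diamond$ string; equivalently, when at least one of the $n/n_{\ell_{\max}}\le n$ blocks at level $\ell_{\max}$ is a non‑$\diamond$ block. Each such block is non‑$\diamond$ with probability $\delta_{\ell_{\max}\to\ell_{\max}+1}$, so
\[
P_e \;\le\; \frac{n}{n_{\ell_{\max}}}\,\delta_{\ell_{\max}\to\ell_{\max}+1}\;\le\; n\,\delta_{\ell_{\max}\to\ell_{\max}+1}.
\]

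To control $\delta_{\ell_{\max}\to\ell_{\max}+1}$, first I would run the induction underlying Lemma~\ref{lemma:prob_leftover}. The base case is $\delta_{0\to1}\le\varepsilon_1^4$: by the choice of $b_0$ in Lemma~\ref{lemma:prob_leftover} together with Lemma~\ref{lemma:typicality_stuff} applied with $\alpha=8$, a level‑$0$ block of length $b_0$ is atypical (hence non‑$\diamond$) with probability at most $\varepsilon_0^8\le\varepsilon_1^4$ since $\varepsilon_0<1/2$. For the inductive step, assuming $\delta_{i-1\to i}\le\varepsilon_i^4$, Lemma~\ref{lemma:prob_leftover} provides simultaneously the sharp bound $\delta_{i\to i+1}\le\varepsilon_i^{\beta 2^i}$ and the cruder $\delta_{i\to i+1}\le\varepsilon_{i+1}^4$; the latter propagates the hypothesis to level $i+1$, while the former, taken at $i=\ell_{\max}$, is what I would feed into the final estimate.

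Next I would substitute the parameters. Since $\varepsilon_{\ell_{\max}}=\varepsilon_0/2^{\ell_{\max}}$, we have $\log(1/\varepsilon_{\ell_{\max}})=\log(1/\varepsilon_0)+\ell_{\max}\ge\ell_{\max}$, and therefore
\[
\delta_{\ell_{\max}\to\ell_{\max}+1}\;\le\;\varepsilon_{\ell_{\max}}^{\,\beta 2^{\ell_{\max}}}\;=\;2^{-\beta 2^{\ell_{\max}}\log(1/\varepsilon_{\ell_{\max}})}\;\le\;2^{-\beta\,\ell_{\max}2^{\ell_{\max}}}\;=\;2^{-\Omega(\ell_{\max}2^{\ell_{\max}})},
\]
so that $P_e\le n\cdot 2^{-\Omega(\ell_{\max}2^{\ell_{\max}})}$. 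Finally, plugging in $\ell_{\max}=\Theta(\sqrt{\log n})$ (noted in Section~\ref{sec:scheme_constant}), the exponent satisfies $\ell_{\max}2^{\ell_{\max}}=2^{\Theta(\sqrt{\log n})}$, which dominates $\log n$; hence $n=2^{\log n}=2^{o(\ell_{\max}2^{\ell_{\max}})}$ is absorbed, giving $P_e\le 2^{-\Omega(\ell_{\max}2^{\ell_{\max}})}=2^{-2^{\Omega(\sqrt{\log n})}}$.

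The step I expect to be the only real subtlety — and it is bookkeeping rather than a genuine obstacle — is twofold: arguing cleanly that an encoding failure can occur only at the topmost level (by construction every residual that is not absorbed at level $i$ is carried into level $i+1$, so no information is lost below level $\ell_{\max}$), and verifying that the polynomial union‑bound factor $n$ is harmless, which works precisely because $\ell_{\max}$ is allowed to grow like $\sqrt{\log n}$, making $\delta_{\ell_{\max}\to\ell_{\max}+1}$ decay faster than any fixed power of $1/n$.
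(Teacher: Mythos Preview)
Your argument is correct and is exactly the derivation the paper leaves implicit: the corollary is stated without proof, immediately after Lemma~\ref{lemma:prob_leftover}, and your combination of the inductive bound $\delta_{i\to i+1}\le\varepsilon_i^{\beta 2^i}$ with a union bound over the top-level blocks is the intended route. One cosmetic tightening: for the first sentence of the corollary (before specializing $\ell_{\max}$), you can absorb the union-bound prefactor directly by observing that, by the very definition of $\ell_{\max}$, the number of level-$\ell_{\max}$ blocks satisfies $n/n_{\ell_{\max}}<b_{\ell_{\max}+1}=2^{O(\ell_{\max})}$, which is swallowed by $2^{-\Omega(\ell_{\max}2^{\ell_{\max}})}$ without needing to invoke $\ell_{\max}=\Theta(\sqrt{\log n})$.
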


The following lemma will be used to compute the average local decodability and update efficiency.
\begin{lemma}
	Let $ \delta^{(1)}_{i\to i+1} $ be the conditional probability that the message block from level $ i $, say $ x^{n_i}(1,i)$ is not the all-$\diamond $ block given that a fixed block from level $ i-1 $, say $ x^{n_{i-1}}(1,i-1) $, is not a $ \diamond  $-block. If $\varepsilon_0<1/2$, $ \delta_{i-1\to i}^{(1)}\leq \varepsilon_i^4 $, and
	$$ b_0\geq 3(
	8+\log|\cX|)\left(\max_{a\in\cX}\frac{1}{p_X(a)}\right)\left(\frac{1}{\varepsilon_0^2}\log\frac{1}{\varepsilon_0}\right). $$
	Then,
	\begin{equation}
	\delta_{i\to i+1}^{(1)} \leq \varepsilon_i^{\beta 2^{i-1}},
	\label{eq:prob_leftover_condn_1}
	\end{equation}
	where $ \beta = 9(
	8+\log|\cX|)\left(\max_{a\in\cX}\frac{1}{p_X(a)}\right)\left(\frac{1}{\varepsilon_0}\log\frac{1}{\varepsilon_0}\right) $. This implies that
	\begin{equation}
	\delta_{i\to i+1}^{(1)} \leq \varepsilon_{i+1}^{4}.
	\label{eq:prob_leftover_condn_2}
	\end{equation}
	\label{lemma:prob_leftover_condn}
\end{lemma}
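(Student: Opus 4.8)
The plan is to imitate the proof of Lemma~\ref{lemma:prob_leftover} almost line for line; the only new ingredient is the bookkeeping forced by conditioning on one constituent sub-block. I would start from the same combinatorial picture: the level-$i$ block $x^{n_i}(1,i)$ fails to be collapsed to $\diamond^{n_i}$ exactly when strictly more than $\varepsilon_i b_i$ of its $b_i$ constituent level-$(i-1)$ blocks $x^{n_{i-1}}(1,i-1),\dots,x^{n_{i-1}}(b_i,i-1)$ are non-$\diamond$, and $x^{n_{i-1}}(1,i-1)$ is the first of these constituents. The structural fact to record is that, because the compressor processes the blocks at each level independently of one another, every $x^{n_{i-1}}(m,i-1)$ is a deterministic function of its own disjoint length-$n_{i-1}$ window of the i.i.d.\ source; hence the events $\{x^{n_{i-1}}(m,i-1)\text{ is non-}\diamond\}$, $m=1,\dots,b_i$, are mutually independent, and conditioning on $\{x^{n_{i-1}}(1,i-1)\text{ is non-}\diamond\}$ leaves the remaining $b_i-1$ of them independent, each with its \emph{unconditional} marginal probability $\delta_{i-1\to i}\le\varepsilon_i^{4}$ (this bound, together with the hypothesised conditional one, being maintained by Lemmas~\ref{lemma:prob_leftover} and~\ref{lemma:prob_leftover_condn} run jointly as an induction on the level index).

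Given this, the computation is a one-line binomial tail bound. Since one constituent is already known to be non-$\diamond$, the conditional event needs at least $t\defeq\lfloor\varepsilon_i b_i\rfloor\ge\varepsilon_i b_i-1$ further non-$\diamond$ blocks among the other $b_i-1$, so a union bound over the offending $t$-subset gives
\[
\delta^{(1)}_{i\to i+1}\ \le\ \binom{b_i-1}{t}\,\delta_{i-1\to i}^{\,t}\ \le\ \left(\frac{2e}{\varepsilon_i}\right)^{\varepsilon_i b_i}\varepsilon_i^{4(\varepsilon_i b_i-1)},
\]
using $\binom{n}{k}\le(en/k)^{k}$ with $b_i/t\le 2/\varepsilon_i$, together with $\delta_{i-1\to i}\le\varepsilon_i^{4}\le1$. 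This is precisely the estimate in Lemma~\ref{lemma:prob_leftover} with the count $\varepsilon_i b_i$ replaced by $t\approx\varepsilon_i b_i$ and one extra binomial factor. Simplifying --- using $\varepsilon_0<1/2$ (so each $\varepsilon_i$ is small) and the fact that $\varepsilon_i b_i=\varepsilon_0 b_0 2^{i}$ is a large constant times $2^{i}$, which lets the $(2e)^{\varepsilon_i b_i}$ and $\varepsilon_i^{-4}$ factors be absorbed into the exponent --- leaves $\delta^{(1)}_{i\to i+1}\le\varepsilon_i^{(3/2)\varepsilon_i b_i}$. The hypothesised lower bound on $b_0$ gives $\varepsilon_0 b_0\ge\beta/3$, hence $(3/2)\varepsilon_i b_i=(3/2)\varepsilon_0 b_0 2^{i}\ge\beta 2^{i-1}$, which is~(\ref{eq:prob_leftover_condn_1}); and~(\ref{eq:prob_leftover_condn_2}) follows because $\beta 2^{i-1}\ge 8$ and $\varepsilon_i<1/2$ yield $\varepsilon_i^{\beta 2^{i-1}}\le\varepsilon_i^{4}(1/2)^{\beta 2^{i-1}-4}\le\varepsilon_i^{4}/16=(\varepsilon_i/2)^{4}=\varepsilon_{i+1}^{4}$.

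I do not anticipate a real difficulty here: the target exponent $\beta 2^{i-1}$ is deliberately a factor of two smaller than the $\beta 2^{i}$ of Lemma~\ref{lemma:prob_leftover}, and that slack comfortably absorbs both the loss from pinning down one of the $b_i$ sub-blocks and the crudeness of the binomial/entropy estimates, so no constant needs to be optimized (one only checks that $\varepsilon_0<1/2$ suffices for every $i\ge1$, which it does with room to spare since $\varepsilon_i\le\varepsilon_0/2<1/4$ there). The single step that genuinely deserves to be spelled out is the independence claim of the first paragraph --- that each level-$(i-1)$ block depends only on its own window of the i.i.d.\ source, so conditioning on one of them being non-$\diamond$ does not perturb the joint law of the others --- since this is exactly what makes the binomial-tail bound, and hence the whole argument, identical to the unconditional case.
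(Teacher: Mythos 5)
Your proof is correct and follows essentially the same route as the paper: condition on the first level-$(i{-}1)$ block being non-$\diamond$, observe that at least $\approx\varepsilon_i b_i -1$ of the remaining $b_i-1$ (independent) blocks must also be non-$\diamond$, apply a binomial tail bound, and feed in the induction hypothesis and the lower bound on $b_0$. The paper's own displayed inequality is $\delta^{(1)}_{i\to i+1}\le\binom{b_i-1}{\varepsilon_i b_i-1}\bigl(\delta^{(1)}_{i-1\to i}\bigr)^{\varepsilon_i b_i-1}$ and then simply refers to the argument of Lemma~\ref{lemma:prob_leftover}; you instead use the unconditional $\delta_{i-1\to i}$ for the remaining blocks, which is the cleaner choice (they are independent of the conditioning event, and $\delta_{i-1\to i}\le\delta^{(1)}_{i-1\to i}$ anyway so both upper bounds are valid) and you spell out the independence/joint-induction bookkeeping that the paper leaves implicit.
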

\begin{proof}
	Clearly,
	\[
	\delta^{(1)}_{i\to i+1} \leq \nchoosek{b_i-1}{\varepsilon_ib_i-1}\left(\delta_{i-1\to i}^{(1)}\right)^{\varepsilon_i b_i-1}.
	\]
	The remainder of the proof is almost identical to that of Lemma~\ref{lemma:prob_leftover}, and we skip the details.
\end{proof}

The following result will be useful when bounding the average local decodability in Lemma~\ref{lemma:localdecoder_expectedbits}.
\begin{lemma}
    For all $i\geq 1$ and $b_0\geq 3$, we have
    \[
    (i+1)b_0^{i+1}2^{i(i+1)} \varepsilon_i^{\beta 2^{i-1}} \leq  \varepsilon_0^i .
    \]
    \label{lemma:locdec_intermediate_bound}
\end{lemma}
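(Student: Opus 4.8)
The plan is to unwind all the definitions in terms of $b_0$ and $i$, take logarithms, and verify the resulting inequality term by term. Recall that $n_i = b_0^{i+1}2^{i(i+1)}$ (which encodes $b_\ell=4b_{\ell-1}$ and $n_\ell=b_\ell n_{\ell-1}$ with $b_0=n_0$), that $\varepsilon_i = \varepsilon_0/2^i$, and that $\beta = 9(8+\log|\cX|)\big(\max_a 1/p_X(a)\big)\big(\tfrac{1}{\varepsilon_0}\log\tfrac{1}{\varepsilon_0}\big)$. The key structural observation is that $\beta \ge 9$ always (since $8+\log|\cX|\ge 8$, $\max_a 1/p_X(a)\ge 1$, and $\tfrac{1}{\varepsilon_0}\log\tfrac{1}{\varepsilon_0}\ge 1$ for $\varepsilon_0<1/2$), so the exponent $\beta 2^{i-1}$ on the small quantity $\varepsilon_i$ is large, and it should easily swallow the polynomially-and-exponentially-in-$i$ growing prefactor $(i+1)b_0^{i+1}2^{i(i+1)}$.

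First I would take $\log_2$ of both sides. Writing $L_i \defeq \log_2\tfrac{1}{\varepsilon_i} = i + \log_2\tfrac1{\varepsilon_0}$, the desired inequality $(i+1)b_0^{i+1}2^{i(i+1)}\varepsilon_i^{\beta 2^{i-1}} \le \varepsilon_0^i$ is equivalent to
\[
\log_2(i+1) + (i+1)\log_2 b_0 + i(i+1) + i\log_2\tfrac1{\varepsilon_0} \;\le\; \beta 2^{i-1} L_i.
\]
Now I would bound the left-hand side crudely: since $i\ge1$ we have $\log_2(i+1)\le i$, $(i+1)\le 2i$, and $i(i+1)\le 2i^2$, so the LHS is at most $i + 2i\log_2 b_0 + 2i^2 + i\log_2\tfrac1{\varepsilon_0} \le C i^2 \log_2 b_0 \cdot \log_2\tfrac1{\varepsilon_0}$ for an absolute constant $C$ (using $b_0\ge3$ so $\log_2 b_0\ge1$, and $\log_2\tfrac1{\varepsilon_0}\ge1$). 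On the right-hand side, $L_i \ge \log_2\tfrac1{\varepsilon_0}\ge1$ and $\beta 2^{i-1} \ge 9\cdot 2^{i-1}\cdot\tfrac1{\varepsilon_0}\log_2\tfrac1{\varepsilon_0}$, so the RHS is at least $9\cdot 2^{i-1}\big(\tfrac1{\varepsilon_0}\log_2\tfrac1{\varepsilon_0}\big)\log_2\tfrac1{\varepsilon_0}$. Since $2^{i-1}\ge i^2/c_0$ for all $i\ge1$ with a suitable absolute constant $c_0$, and $\tfrac1{\varepsilon_0}\ge\log_2 b_0$ is \emph{not} generally true — here I must be slightly careful — I would instead simply note $2^{i-1}$ dominates any polynomial in $i$, and absorb the factor $\log_2 b_0$ by observing that $b_0$ is itself $\Theta\!\big(\tfrac1{\varepsilon_0^2}\log\tfrac1{\varepsilon_0}\big)$ by its defining choice, so $\log_2 b_0 = O\!\big(\log\tfrac1{\varepsilon_0}\big)$, whence $C i^2\log_2 b_0\log_2\tfrac1{\varepsilon_0} = O\!\big(i^2 (\log\tfrac1{\varepsilon_0})^2\big)$, which is beaten by $9\cdot2^{i-1}\tfrac1{\varepsilon_0}(\log_2\tfrac1{\varepsilon_0})^2$ for every $i\ge1$ (as $2^{i-1}/i^2$ is bounded below by a positive constant and $\tfrac1{\varepsilon_0}\ge1$).

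The main obstacle, such as it is, is bookkeeping rather than mathematics: one must make sure the "absolute constants" that appear really are absolute (independent of $\varepsilon_0$, $n$, $|\cX|$, $p_X$), and in particular one should double-check that the constant hidden in $b_0 = \Theta\!\big(\tfrac1{\varepsilon_0^2}\log\tfrac1{\varepsilon_0}\big)$ does not depend on $|\cX|$ or $p_X$ in a way that breaks the bound $\log_2 b_0 = O(\log\tfrac1{\varepsilon_0})$ — it does depend on those, but only polynomially/logarithmically, which is harmless after taking another log. An alternative, cleaner route that sidesteps this entirely: prove the stronger claim that each factor is controlled separately, i.e. show $(i+1)b_0^{i+1}2^{i(i+1)} \le \varepsilon_i^{-\beta 2^{i-2}}$ and $\varepsilon_i^{\beta 2^{i-2}}\le \varepsilon_0^i$, splitting the exponent $\beta 2^{i-1} = \beta 2^{i-2} + \beta 2^{i-2}$. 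The second inequality reduces to $\beta 2^{i-2} L_i \ge i L_0$, i.e. $\beta 2^{i-2}(i + \log_2\tfrac1{\varepsilon_0}) \ge i\log_2\tfrac1{\varepsilon_0}$, which is immediate since $\beta 2^{i-2}\ge 9/2 > 1$ for $i\ge1$ wait — for $i=1$, $2^{i-2}=1/2$ so $\beta2^{i-2}\ge 9/2\ge1$, fine. The first inequality then has all the room one needs, since $\beta 2^{i-2}\ge \tfrac94\cdot 2^{i-1}\cdot\tfrac1{\varepsilon_0}\log\tfrac1{\varepsilon_0}$ grows exponentially while the left side is only $2^{O(i^2)}$ — here I would just check it directly for $i=1$ and $i=2$ and then note the right side grows doubly-exponentially-ish faster. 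I would carry out the split-exponent version in the final write-up as it minimizes the number of constants to track.
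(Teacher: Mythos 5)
Your split-exponent version has a concrete gap that would surface the moment you do the $i=1$ base case you promise to check. Take the first split: $(i+1)b_0^{i+1}2^{i(i+1)} \leq \varepsilon_i^{-\beta 2^{i-2}}$. At $i=1$ this is $8b_0^2 \leq (2/\varepsilon_0)^{\beta/2}$. Your "key structural observation" is $\beta\geq 9$, which gives only $(2/\varepsilon_0)^{9/2}\geq 4^{4.5}=512$, so the bound requires $b_0\leq 8$. But $b_0$ must satisfy $b_0 \geq 3(8+\log|\cX|)(\max_a 1/p_X(a))\frac{1}{\varepsilon_0^2}\log\frac{1}{\varepsilon_0}$, which is in the hundreds even for a binary uniform source at $\varepsilon_0=1/2$. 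So $\beta\geq 9$ cannot possibly swallow the $b_0^{i+1}$ prefactor: the inequality is only true because $\beta$ itself scales linearly with $b_0$ (the paper exploits that with $b_0$ at its defining lower bound one has $\beta = 3\varepsilon_0 b_0$). You do flag this dependence in the "direct log" variant (where you absorb $\log_2 b_0 = O(\log\frac{1}{\varepsilon_0})$), but the cleaned-up split-exponent version you say you would carry out drops exactly that ingredient, and it is not optional. There is also a smaller slip in the second split: "$\beta 2^{i-2}(i+L_0)\geq iL_0$ is immediate since $\beta 2^{i-2}\geq 9/2 > 1$" does not follow — with factor $1$ you would need $i+L_0\geq iL_0$, which fails once $L_0>i/(i-1)$. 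The statement is salvageable because in fact $\beta 2^{i-2}\geq 9\cdot 2^{i-2}\geq i$ for all $i\geq 1$, whence $\beta 2^{i-2}(i+L_0)\geq i(i+L_0)\geq iL_0$, but that is a different (and necessary) justification.

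For comparison, the paper takes a different and more mechanical route: set $\chi(i)\defeq (i+1)b_0^{i+1}2^{i(i+1)}\varepsilon_i^{\beta 2^{i-1}}$, verify $\chi(1)<\varepsilon_0$ directly using the relation between $\beta$ and $b_0$, and then show the ratio $\chi(i)/\chi(i-1)\leq \varepsilon_0$ for $i\geq 2$. The ratio telescopes nicely because $i(i+1)-i(i-1)=2i$ and the exponent difference $\beta 2^{i-1}-\beta 2^{i-2}=\beta 2^{i-2}$ is again exponential in $i$, so the per-step factor is dominated by a single huge power of $\varepsilon_i$. This sidesteps any attempt to compare the full exponentially-growing quantities at once, and bundles the dependence of $\beta$ on $b_0$ into the one base-case check $\chi(1)<\varepsilon_0$ rather than having it reappear at every $i$. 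If you prefer your direct bound, the fix is to keep the observation $\log_2 b_0 = O(\log\frac{1}{\varepsilon_0})$ (and more precisely $\beta\approx 3\varepsilon_0 b_0$, so $\beta L_0 \gg \log_2 b_0$) in the final argument and actually carry out the base cases rather than gesture at "doubly-exponentially-ish faster."
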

\begin{proof}
    Let $\chi(i) \defeq (i+1)b_0^{i+1}2^{i(i+1)} \varepsilon_i^{\beta 2^{i-1}}$ for $i\geq 1$. 
    
    Note that $\beta$, defined in \eqref{eq:defn_beta}, is equal to $3b_0$. Therefore,
    \[
    \chi(1) = 8b_0^2 \left( \frac{\varepsilon_0}{2} \right)^{3b_0} < \varepsilon_0,
    \]
    where the last step holds for all $b_0\geq 3$.
    For any $i\geq 2$,
    \begin{align*}
    \frac{\chi(i)}{\chi(i-1)} &= \left(\frac{i+1}{i}\right) b_02^{2i}\left(\frac{\varepsilon_0}{2^i}\right)^{3b_0(2^{i-1}-2^{i-2})} \\
    &\leq \left(\frac{i+1}{i}\right) b_02^{2i}\left(\frac{\varepsilon_0}{2^i}\right)^{6b_0}\\
    &\leq 2b_0 \left(\frac{\varepsilon_0}{2^i}\right)^{6b_0}\\
    &\leq \varepsilon_0.
    \end{align*}
    Therefore, $\xi(i)\leq \varepsilon_0^i$.
\end{proof}

\section{Preliminary lemmas for the proof of Theorem~\ref{theorem:main_lz}}\label{sec:preliminary_lemmas_lz}

In order to compute bounds on the rate and expected local decodability and update efficiency, we must find the probability that the length of an LZ78 codeword exceeds a certain amount. To help us with that, we have the following lemma:
\begin{lemma}[\cite{cover2012elements}]
	Let $ \cX $ be a finite alphabet and $ b $ be a positive integer. For any $ x^b\in\cX^b $, let $ \ell_{LZ}(x^b) $ denote the length of the LZ78 codeword for $ x^b $. For every $ k\in\bZ_+ $,  we have
	\[
	\ell_{LZ}(x^b) \leq b H_k(x^b) + \frac{ckb\log\log b}{\log b},
	\]
	where $ H_k(x^b) $ denotes the $ k $th order empirical entropy of the sequence $ x^b $, and $ c $ is an absolute constant.
	\label{lemma:empiricalLZ}
\end{lemma}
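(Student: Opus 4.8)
\emph{Proof proposal.} This is the classical finite-blocklength redundancy bound for LZ78, so the plan is to reconstruct its proof while keeping the overhead term explicit. There are three ingredients. First, recall the structure of the LZ78 parsing: it breaks $x^b$ into $c\equiv c(x^b)$ \emph{distinct} phrases, each of which is a previously emitted phrase (or the empty string) extended by one fresh symbol; encoding the $i$-th phrase as a pointer--symbol pair costs at most $\lceil\log i\rceil+\lceil\log|\cX|\rceil$ bits, so
\[
\ell_{LZ}(x^b)\ \le\ \sum_{i=1}^{c}\left(\lceil\log i\rceil+\lceil\log|\cX|\rceil\right)\ \le\ c\log c+c\left(2+\log|\cX|\right).
\]
Next I would invoke the Lempel--Ziv counting lemma: any family of $c$ distinct strings over $\cX$ of total length $b$ obeys $c\le b/\left((1-\epsilon_b)\log_{|\cX|}b\right)$ with $\epsilon_b=\Theta\!\left(\log\log b/\log b\right)$ (the extremal configuration exhausts all strings of length at most $\approx\log_{|\cX|}b$). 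In particular $c=O(b/\log b)$, so the term linear in $c$ above is $O(b/\log b)$, already within the claimed estimate.

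The crux is to bound $c\log c$ by $bH_k(x^b)$ up to the advertised slack, i.e.\ Ziv's inequality. I would tag each phrase by the length-$k$ context $s\in\cX^k$ preceding it in $x^b$ (padding the leading $k$ positions by a fixed symbol, and absorbing the first phrase and the last, possibly incomplete, phrase into an $O(\log b)$ error), and let $c_{l,s}$ be the number of phrases of length $l$ with context $s$, so that $\sum_{l,s}c_{l,s}=c$. Let $Q_k$ be the maximum-likelihood $k$-th order Markov model matching the empirical transition counts of $x^b$, for which $-\log Q_k(x^b)=bH_k(x^b)+O(k\log|\cX|)$ by the definition of $k$-th order empirical entropy. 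Since $Q_k(x^b)$ factorizes (modulo the boundary) as a product over phrases of $Q_k(\text{phrase}\mid\text{its context})$, and since for each fixed pair $(l,s)$ the law $Q_k(\cdot\mid s)$ is a probability distribution on $\cX^{l}$, AM--GM gives that the product of the $c_{l,s}$ distinct phrase-probabilities in that class is at most $(1/c_{l,s})^{c_{l,s}}$; summing $-\log$ over classes,
\[
bH_k(x^b)\ =\ -\log Q_k(x^b)+O(k\log|\cX|)\ \ge\ \sum_{l,s}c_{l,s}\log c_{l,s}-O(k\log|\cX|)\ \ge\ c\log\frac{c}{N}-O(k\log|\cX|),
\]
where $N$ is the number of non-empty classes and the last step is convexity of $t\mapsto t\log t$. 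Every phrase has length $O(\log b)$, so $N\le|\cX|^{k}\cdot O(\log b)$ and hence $c\log c\le bH_k(x^b)+O\!\left(c\left(k\log|\cX|+\log\log b\right)\right)$. Combining this with the first paragraph and using $c=O(b/\log b)$ and $k\ge1$,
\[
\ell_{LZ}(x^b)\ \le\ bH_k(x^b)+O\!\left(\frac{b\left(k\log|\cX|+\log\log b\right)}{\log b}\right)\ \le\ bH_k(x^b)+\frac{c\,k\,b\log\log b}{\log b}
\]
for an absolute constant $c$; for small $b$ the bound holds trivially after enlarging $c$.

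The hard part is the Ziv's-inequality step: making it rigorous requires careful bookkeeping of the counts $c_{l,s}$, a clean treatment of the boundary (the leading $k$ symbols, the very first phrase, and the last incomplete phrase) and of the gap between $-\log Q_k(x^b)$ and $bH_k(x^b)$, and the key observation that within each $(l,s)$-class the relevant conditional probabilities sum to one, which is precisely what licenses AM--GM. Establishing the counting lemma with the explicit $\log\log b/\log b$ rate in $\epsilon_b$ is also somewhat delicate, though standard; the remaining steps---the per-phrase encoding cost and the final assembly of the estimates---are routine. Since this matches a known result, in the paper I would simply cite~\cite{cover2012elements} and reference the above as the underlying argument.
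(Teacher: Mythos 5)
The paper does not prove this lemma; it is imported verbatim from Cover and Thomas~\cite{cover2012elements} and cited as such, so there is no in-paper argument to compare against. Your reconstruction follows the right outline---per-phrase encoding cost, the counting lemma $c=O(b/\log b)$, and Ziv's inequality via the class counts $c_{l,s}$ and the AM--GM step---but one link in the chain is broken. You assert that ``every phrase has length $O(\log b)$'' in order to bound the number of non-empty $(l,s)$-classes by $N\le|\cX|^k\cdot O(\log b)$. This is false for LZ78: the constant sequence $x^b=a^b$ parses into phrases $a,aa,aaa,\dots$ of lengths $1,2,\dots,m$ with $m=\Theta(\sqrt{b})$, so the longest phrase has length $\Theta(\sqrt{b})$, far exceeding $\log b$. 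In general one can only guarantee that the number of distinct phrase \emph{lengths} is $O(\sqrt{b})$; substituting that into your chain gives $c\log N=\Omega(c\log b)=\Omega(b)$ when $c=\Theta(b/\log b)$, which destroys the $o(b)$ redundancy you need.

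The standard argument, and what Cover--Thomas actually do, does not bound $N$ at all. Writing $\pi_{l,s}=c_{l,s}/c$, one has $\sum_{l,s}c_{l,s}\log c_{l,s}=c\log c-c\,H(\pi)$ and decomposes $H(\pi)\le H(\pi_L)+k\log|\cX|$. The length constraint $\sum_{l,s}l\,c_{l,s}\le b$ forces the marginal $\pi_L$ over phrase lengths to have mean at most $b/c$, and the maximum-entropy bound for positive-integer random variables with a prescribed mean gives $H(\pi_L)\le\log(b/c)+O(1)$. Since $c=O(b/\log b)$, the mean phrase length is $O(\log b)$, hence $H(\pi_L)=O(\log\log b)$, with no claim whatsoever about the maximum phrase length. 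Replacing your $\log N$ bound with this mean-length argument, the rest of your assembly of estimates goes through and yields the advertised $O\!\left(\frac{kb\log\log b}{\log b}\right)$ slack.
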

The above lemma says that the length of the LZ78 codeword is close to the empirical entropy of the string. The following lemma lets us conclude that if a sequence is typical, then the empirical entropy is close to the true entropy.
\begin{lemma}
	Fix any two probability mass functions $ p,q $ on $ \cX $, and $ 0<\varepsilon<1/2 $. If $| p(a)-q(a)|\leq \varepsilon p(a) $ for all $ a\in\cX $, then
	\[
	|H(p)-H(q)|\leq \left(2+\max_{a\in\cX}\log\frac{1}{p(a)}\right)\varepsilon.
	\]
	\label{lemma:entropy_lipschitz}
\end{lemma}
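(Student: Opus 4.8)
The plan is to expand both entropies around $p$ using the multiplicative parametrization $q(a) = p(a)(1+\delta_a)$. The hypothesis $|p(a)-q(a)|\le\varepsilon p(a)$ says exactly that $|\delta_a|\le\varepsilon$ for every $a$ on the support of $p$, and since $\varepsilon<1$ it also forces $q(a)=0$ whenever $p(a)=0$, so $p$ and $q$ have the same support and every sum below ranges only over it. The first step I would establish is the algebraic identity
\[
H(q)-H(p)=-\sum_{a\in\cX}p(a)\,\delta_a\log p(a)\;-\;D(q\|p),
\]
where $D(q\|p)=\sum_{a\in\cX}q(a)\log\frac{q(a)}{p(a)}\ge 0$ is the Kullback--Leibler divergence. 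This follows by substituting $q(a)=p(a)(1+\delta_a)$ into $H(q)=-\sum_a q(a)\log q(a)$, splitting $\log q(a)=\log p(a)+\log(1+\delta_a)$, and recognizing that $-\sum_a q(a)\log(1+\delta_a)=-\sum_a q(a)\log\frac{q(a)}{p(a)}=-D(q\|p)$ while the remaining part contributes $H(p)-\sum_a p(a)\delta_a\log p(a)$.

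Next I would bound the two terms separately. For the first, the triangle inequality together with $|\delta_a|\le\varepsilon$ gives
\[
\Bigl|\sum_{a}p(a)\delta_a\log p(a)\Bigr|\le\varepsilon\sum_a p(a)\log\frac{1}{p(a)}=\varepsilon H(p)\le\varepsilon\max_{a\in\cX}\log\frac{1}{p(a)},
\]
the last inequality because $H(p)$ is a convex combination of the numbers $\log\frac{1}{p(a)}$. For the divergence I would use $\log(1+x)\le x/\ln 2$ (i.e.\ $\ln(1+x)\le x$) and then exploit the cancellation $\sum_a p(a)\delta_a=\sum_a(q(a)-p(a))=0$:
\[
0\le D(q\|p)=\sum_a q(a)\log(1+\delta_a)\le\frac{1}{\ln 2}\sum_a q(a)\delta_a=\frac{1}{\ln 2}\sum_a p(a)\delta_a^2\le\frac{\varepsilon^2}{\ln 2},
\]
where the middle equality uses $q(a)\delta_a=p(a)\delta_a+p(a)\delta_a^2$ and $\sum_a p(a)\delta_a=0$. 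Since $\varepsilon<1/2<\ln 2$ this is at most $\varepsilon$, so combining the two bounds yields $|H(q)-H(p)|\le\varepsilon\max_{a}\log\frac{1}{p(a)}+\varepsilon\le\bigl(2+\max_{a}\log\frac{1}{p(a)}\bigr)\varepsilon$, which is even slightly stronger than stated.

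The one step that genuinely requires care — everything else is bookkeeping — is obtaining a second-order ($O(\varepsilon^2)$) rather than a first-order bound on $D(q\|p)$. A crude term-by-term estimate of $\sum_a q(a)|\log(1+\delta_a)|$ would only deliver something of the order $\frac{1}{\ln 2}\bigl(-\ln(1-\varepsilon)\bigr)$, which is not transparently below $2\varepsilon$ without an additional calculus argument. Using that $p$ and $q$ are both probability vectors to get $\sum_a p(a)\delta_a=0$, and hence collapsing $\sum_a q(a)\delta_a$ to $\sum_a p(a)\delta_a^2\le\varepsilon^2$, is the observation that makes the divergence term negligible — and it also explains why the constant $2$ in the statement is comfortably not tight.
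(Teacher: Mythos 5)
Your proof is correct, and while the underlying algebraic decomposition is the same one the paper uses, the way you bound the divergence piece is genuinely different and worth contrasting. Both arguments split $H(p)-H(q)$ into a term weighted by $\log p(a)$ and a term involving $\log\frac{q(a)}{p(a)}$: the paper writes $\Delta_a = (p(a)-q(a))\log p(a) - q(a)\log\frac{q(a)}{p(a)}$, which, after summing over $a$, is exactly your identity $H(q)-H(p) = -\sum_a p(a)\delta_a\log p(a) - D(q\|p)$. The divergence lies in what happens next. The paper takes absolute values \emph{termwise}, bounding $\sum_a q(a)\bigl|\log\frac{q(a)}{p(a)}\bigr| \le \log\frac{1}{1-\varepsilon}$ and then using $\log\frac{1}{1-\varepsilon}\le 2\varepsilon$ for $\varepsilon<1/2$ --- a purely first-order estimate that is where the constant $2$ originates. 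You instead keep the signs, note $D(q\|p)\ge 0$, and exploit the cancellation $\sum_a p(a)\delta_a=0$ (a consequence of both $p,q$ being probability vectors) to convert $\sum_a q(a)\delta_a$ into $\sum_a p(a)\delta_a^2\le\varepsilon^2$, giving the second-order bound $D(q\|p)\le\varepsilon^2/\ln 2$. This is strictly sharper and shows the stated constant $2$ can be replaced by $1$; it is also the ``right'' reason the divergence term is negligible (it reflects the fact that $D(q\|p)$ is quadratic in $q-p$ near $q=p$). The paper's termwise bound is cruder but requires no cancellation argument and suffices for the claim. One minor caveat in your write-up: you appeal to $\log(1+x)\le x/\ln 2$ uniformly; this is valid for all $x>-1$, which holds here since $|\delta_a|\le\varepsilon<1/2$, but it is worth stating explicitly since some of the $\delta_a$ may be negative.
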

\begin{proof}
	Consider
	\begin{align*}
	\Delta_a &\coloneq p(a)\log p(a) - q(a)\log q(a) \\
	&= p(a)\log p(a) - q(a)\log p(a)+q(a)
	\log p(a)-q(a)\log q(a)\\
	&= (p(a)-q(a))\log p(a) -q(a)\log \frac{q(a)}{p(a)}
	\end{align*}
	However,
	\begin{align*}
	|H(p)-H(q)|&\leq \sum_a |\Delta_a|\\ 
	&\leq \sum_a\left(|p(a)-q(a)|\log\frac{1}{p(a)} +q(a)\left|\log \frac{q(a)}{p(a)}\right|\right) \\
	&\leq \varepsilon \max_a\log\frac{1}{p(a)} + \log\frac{1}{1-\varepsilon}.
	\end{align*}
	For $ \varepsilon<1/2 $, we have $ \log\frac{1}{1-\varepsilon} \leq 2\varepsilon $. Using this in the above completes the proof.
\end{proof}

\section{Fixed v/s Variable-length compression}
We briefly show how to achieve zero-error data compression and still achieve the performance stated in Theorems~\ref{theorem:main_iid} and \ref{thm:Ologlogn}. This is obtained by using a variable length code
instead of a fixed-length code.
\begin{definition}[Variable-length compression]
	An $ (n,R) $ variable-length compression scheme is a pair of maps $ (\enc,\dec) $ consisting of 
	\begin{itemize}
		\item an encoder $ \enc: \cX^n\to \{0,1\}^{*} $, and
		\item a decoder $ \dec: \{0,1\}^{*}\to \cX^n $ satisfying
		\[
		\dec(\enc(X^n)) = X^n, \quad \forall X^n\in\cX^n
		\]
	\end{itemize} 
	For any $ Y^l\in \{0,1\}^* $, let $ \ell(Y^l) $ denote the length of the sequence $ Y^l $. 
	The quantity $ R $ is the rate of the code, and is defined to be
	\[
	R\defeq \frac{1}{n} \bE [\ell(\enc(X^n))]
	\]
	where the averaging is over the randomness in the source.

\end{definition}
It is generally desired for a variable-length source code be prefix free: For every distinct pair of inputs $ X^n,Y^n\in\cX^n $, the codeword $ \enc(X^n) $ must not be a prefix of $ \enc(Y^n) $.

\subsubsection{Converting a fixed-length compressor to a prefix-free variable-length compressor}\label{sec:conversion_fixlength_varlength}
Given any $ (n,R) $ fixed-length compression scheme $ (\enc_{\mathrm{fix}},\dec_{\mathrm{fix}}) $ with a probability of error $ P_e = o(1) $, it is easy to construct a prefix-free $ (n,R+o(1)) $ variable-length compressor $ (\enc_{\mathrm{var}},\dec_{\mathrm{var}}) $. The following is one-such construction:
\[
\enc_{\mathrm{var}}(X^n)\defeq \begin{cases}
(0,\enc_{\mathrm{fix}}(X^n)) &\text{if } \dec_{\mathrm{fix}}(\enc_{\mathrm{fix}}(X^n)) = X^n\\
(1,X^n) & \text{otherwise}.
\end{cases}
\] 
Clearly, the compressor is prefix free.
The rate of $ (\enc_{\mathrm{var}},\dec_{\mathrm{var}}) $ is equal to 
\begin{align*}
R_{\mathrm{var}} &= 1/n+ R\times(1-P_e) + \log|\cX|\times P_e\\
&= R + o(1).
\end{align*}
For all $s\geq 1$, the local decodability of the new variable-length scheme $\ravg(s),\rwc(s)$ is at most $1$ more than that of the original fixed-length scheme, and $\tavg(s),\twc(1)$ can increase by at most $2$.

Due to the above transformation, we have devoted most of our attention to constructing fixed-length compression schemes.


	\bibliographystyle{IEEEtran}
	\bibliography{locality_references}
\end{document}